\newif\ifdraft\draftfalse
\newcommand\easyicalp[2]{#1}
\newcommand\etal{~\textit{et al.}\xspace}
\newcommand\mnew{\text{new}}
\newcommand\cshore{C-SHORe\xspace}
\newcommand\trecs{TRecS\xspace}
\newcommand\gtrecs{GTRecS\xspace}
\newcommand\travmc{TravMC\xspace}
\newcommand\horsat{HorSat\xspace}
\newcommand\preface{Preface\xspace}
\newcommand\idxi{i}
\newcommand\idxj{j}
\newcommand\numof{m}
\newcommand\numofl{l}
\newcommand\lang{\mathcal{L}}
\newcommand\langof[1]{\ap{\lang}{#1}}
\newcommand\setcomp[2]{\left\{{#1}\ \left|\ {#2}\right.\right\}}
\newcommand\set[1]{\left\{{#1}\right\}}
\newcommand\tup[1]{\brac{#1}}
\newcommand\brac[1]{\left({#1}\right)}
\newcommand\card[1]{\left|{#1}\right|}
\newcommand\seqlen[1]{\card{#1}}
\newcommand\ap[2]{{#1}\mathord{\brac{#2}}}
\newcommand\tree{t}
\newcommand\treelabels{\Gamma}
\newcommand\treedom{\mathcal{D}}
\newcommand\treelabelling{\lambda}
\newcommand\tdiri{i}
\newcommand\tdirj{j}
\newcommand\tnode{v}
\newcommand\tnodeset{V}
\newcommand\treelab{\gamma}
\newcommand\maxarity{d}
\newcommand\treeanc{\preceq}
\newcommand\troot{\varepsilon}
\newcommand\treemod[3]{{#1}\mathord{\left[{#2} \rightarrow {#3}\right]}}
\newcommand\treedel[2]{{#1} \setminus {#2}}
\newcommand\tleaf[2]{{#1}_{\bullet_{#2}}}
\newcommand\maxord{n}
\newcommand\midord{k}
\newcommand\salphabet{\Sigma}
\newcommand\cha{a}
\newcommand\chb{b}
\newcommand\chc{c}
\newcommand\chd{c}
\newcommand\stack{s}
\newcommand\sop{\sigma}
\newcommand\stacks[2]{\mathcal{S}^{#2}_{#1}}
\newcommand\kstack[2]{\left[#2\right]_{#1}}
\newcommand\scomp[1]{:_{#1}}
\newcommand\stackops[2]{\mathrm{Ops}^{#2}_{#1}}
\newcommand\srew[2]{\mathrm{rew}_{#1 \rightarrow #2}}
\newcommand\spush[1]{\mathrm{push}_{#1}}
\newcommand\spop[1]{\mathrm{pop}_{#1}}
\newcommand\scollapse[1]{\mathrm{collapse}_{#1}}
\newcommand\scpush[1]{\mathrm{push}^{#1}_1}
\newcommand\akstacks[3]{\mathcal{S}^{#3}_{#1, #2}}
\newcommand\annot[2]{#1^{#2}}
\newcommand\sopen[1]{[_{#1}}
\newcommand\sclose[1]{]_{#1}}
\newcommand\bnode[1]{\rnode{#1}{$\bullet$}}
\newcommand\stacktrees[2]{\mathrm{STrees}^{#1}_{#2}}
\newcommand\strule{\theta}
\newcommand\stsop[3]{{#1} \xrightarrow{#2} {#3}}
\newcommand\stpush[2]{{#1} \xrightarrow{+} \brac{#2}}
\newcommand\stpop[2]{\brac{#1} \xrightarrow{-} {#2}}
\newcommand\stacktreeops[3]{\mathrm{STOps}^{#2, #3}_{#1}}
\newcommand\apstop[3]{\ap{\mathrm{Ap}}{#1, #2, #3}}
\newcommand\gstrs{\mathcal{G}}
\newcommand\rules{\mathcal{R}}
\newcommand\tran{\rightarrow}
\newcommand\reaches{\rightarrow^\ast}
\newcommand\config[2]{\tup{{#1}, {#2}}}
\newcommand\gtrule[3]{{#1} \xrightarrow{#2} {#3}}
\newcommand\lifespan{\iota}
\newcommand\treestacklab{\treelabelling_s}
\newcommand\controls{\mathbb{P}}
\newcommand\control{p}
\newcommand\run{\rho}
\newcommand\ta{\mathcal{T}}
\newcommand\tastate{q}
\newcommand\sastate{r}
\newcommand\tastates{\mathbb{Q}}
\newcommand\sastates{\mathbb{R}}
\newcommand\tadelta{\Delta}
\newcommand\sadelta{\Delta}
\newcommand\tafinals{\mathbb{F}}
\newcommand\safinals{\mathbb{F}}
\newcommand\tastatef{q_f}
\newcommand\sastateset{R}
\newcommand\tastateset{Q}
\newcommand\branch{\mathrm{br}}
\newcommand\satfn{\mathcal{F}}
\newcommand\smodels{\models}
\newcommand\tastateseq{\tilde{\tastate}}
\newcommand\tenv{\mathcal{V}}
\newcommand\tenvempty{\emptyset}
\newcommand\tatrant{\delta}
\newcommand{\Marked}{\mathrm{Marked}}
\newcommand\tatran[5]{{#1} \leftarrow_{{#2}/{#3}} \tup{{#4}, {#5}}}
\newcommand\satran[1]{\xrightarrow{#1}}
\newcommand\satrancol[2]{\xrightarrow[{#2}]{#1}}
\newcommand\slang[2]{\ap{\lang_{#1}}{#2}}
\newcommand\tlang[2]{\ap{\lang_{#1}}{#2}}
\newcommand\tweaklang[1]{\ap{\lang_W}{#1}}
\newcommand\satranfull[4]{{#1} \xrightarrow[{#3}]{#2} \brac{{#4}}}
\newcommand\satranfullk[3]{{#1} \xrightarrow{#2} \brac{{#3}}}
\newcommand\tatranfull[7]{{#1} \leftarrow_{{#2}/{#3}} \tup{{#4}, {#5}, {#6}, {#7}}}
\newcommand\tatranfullk[6]{{#1} \leftarrow_{{#2}/{#3}} \tup{{#4}, {#5}, {#6}}}
\newcommand\prestar[2]{\ap{\mathrm{Pre}^\ast_{#1}}{#2}}
\newcommand\tmodels[1]{\models_{#1}}
\newcommand\tmodelsm[3]{\models_{\envmod{#1}{#2}{#3}}}
\newcommand\treeplus[1]{+_{#1}}
\newcommand\envmod[3]{{#1}\mathord{\left[{#2} \rightarrow {#3}\right]}}
\newcommand\comp[1]{\overline{#1}}
\newcommand\globals{\mathbb{G}}
\newcommand\gstate{g}
\newcommand\gstateseq{{\tilde{\gstate}}}
\newcommand\gstateseqs{\tilde{\globals}}
\newcommand\tadeltainit{\tadelta_{\text{init}}}
\newcommand\tadeltanoapp{\tadelta_{\text{noapp}}}
\newcommand\tadeltapass{\tadelta_{\text{pass}}}
\newcommand\finals{\mathcal{F}}
\newcommand\hostackops[2]{\mathrm{HOps^{#2}_{#1}}}
\newcommand\hocpush[1]{\mathrm{push}_{#1}}
\newcommand\horule[4]{\tup{#1, #2, #3, #4}}
\newcommand\ccopy[2]{\tup{#1, #2}}
\newcommand\newtastate[1]{\tastate_{#1}}
\newcommand\newsastate[2]{\sastate_{\brac{#1, #2}}}
    \newcommand\todo[1]{{\color{red} [\textbf{To do:} #1]}}
    \newcommand\mh[1]{{\color{blue} [#1 - \textbf{Matt}]}}
    \newcommand\vp[1]{{\color{orange} [#1 - \textbf{Vincent}]}}
    \newcommand\todo[1]{}
    \newcommand\mh[1]{}
    \newcommand\vp[1]{}
\newtheorem{theorem}{Theorem}[section]
\newtheorem{definition}{Definition}[section]
\newtheorem{proposition}{Proposition}[section]
\newtheorem{lemma}{Lemma}[section] 
\newtheorem{property}{Property}[section] 
\newcommand\reftheorem[1]{\expandafter\csname reftheorem#1\endcsname}
\newenvironment{namedlemma}[2]{%
    \expandafter\gdef\csname reflemma#1\endcsname{%
        Lemma~\ref{#1} (#2)%
    }%
    \begin{lemma}[{#2}] \label{#1}%
}{%
    \end{lemma}%
}
\newcommand\reflemma[1]{\expandafter\csname reflemma#1\endcsname}
\newenvironment{namedproperty}[2]{%
    \expandafter\gdef\csname refproperty#1\endcsname{%
        Property~\ref{#1} (#2)%
    }%
    \begin{property}[{#2}] \label{#1}%
}{%
    \end{property}%
}
\newcommand\refproperty[1]{\expandafter\csname refproperty#1\endcsname}
\newcommand\refdefinition[1]{\expandafter\csname refdefinition#1\endcsname}
\title{Annotated Stack Trees\thanks{This work was
supported by the Engineering and Physical Sciences Research Council
[EP/K009907/1].}}
\author{M. Hague\inst{1} \and V. Penelle\inst{2}}
\institute{Royal Holloway, University of London \and LIGM, Universit\'e
Paris-Est}
\begin{document}

    \maketitle

    \begin{abstract}

Annotated pushdown automata provide an automaton model of higher-order recursion
schemes, which may in turn be used to model higher-order programs for the
purposes of verification.  We study Ground Annotated Stack Tree Rewrite Systems
-- a tree rewrite system where each node is labelled by the configuration of an
annotated pushdown automaton.  This allows the modelling of fork and join
constructs in higher-order programs and is a generalisation of higher-order
stack trees recently introduced by Penelle.

We show that, given a regular set of annotated stack trees, the set of trees
that can reach this set is also regular, and constructible in $\maxord$-EXPTIME
for an order-$\maxord$ system, which is optimal.  We also show that our
    construction can be extended to allow a global state through which unrelated
    nodes of the tree may communicate, provided the number of communications is
    subject to a fixed bound.

    \end{abstract}

\section{Introduction}

Modern day programming increasingly embraces higher-order programming, both via
the inclusion of higher-order constructs in languages such as C++, JavaScript
and Python, but also via the importance of \emph{callbacks} in highly popular
technologies such as jQuery and Node.js.  For example, to read a file in
Node.js, one would write
\begin{minted}{javascript}
    fs.readFile('f.txt', function (err, data) { ..use data.. });
\end{minted}
In this code, the call to \verb+readFile+ spawns a new thread that
asynchronously reads \verb+f.txt+ and sends the \verb+data+ to the function
argument.  This function will have access to, and frequently use, the closure
information of the scope in which it appears.  The rest of the program runs
\emph{in parallel} with this call.  This style of programming is
fundamental to both jQuery and Node.js programming, as well as being a popular
for programs handling input events or slow IO operations such as fetching
remote data or querying databases (e.g. HTML5's indexedDB).

Analysing such programs is a challenge for verification tools which usually do
not model higher-order recursion, or closures, accurately.  However, several
higher-order model-checking tools have been recently developed.  This trend was
pioneered by Kobayashi\etal~\cite{K11} who developed an \emph{intersection type}
technique for analysing \emph{higher-order recursion schemes} -- a model of
higher-order computation.  This was implemented in the \trecs tool~\cite{K09}
which demonstrated the feasibility of higher-order model-checking in practice,
despite the high theoretical complexities ($(\maxord-1)$-EXPTIME for an
order-$\maxord$ recursion scheme).  This success has led to the development of
several new tools for analysing recursion schemes: \gtrecs~\cite{K11b,gtrecs2},
\travmc~\cite{NRO12}, \cshore~\cite{BCHS13}, \horsat~\cite{BK13}, and
\preface~\cite{RNO14}.

In particular, the \cshore tool is based on an automata model of recursion
schemes called \emph{annotated (or collapsible) pushdown systems}~\cite{HMOS08}.
This is a generalisation of pushdown systems -- which accurately model
first-order recursion -- to the higher-order case.  \cshore implements a
\emph{saturation} algorithm to perform a backwards reachability analysis, which
first appeared in ICALP 2012~\cite{BCHS12}.  Saturation was popularised by
Bouajjani\etal~\cite{BEM97} for the analysis of pushdown systems, which was
implemented in the successful Moped tool~\cite{S02,SBSE07}.

\paragraph*{Contributions}
In this work we introduce a generalisation of annotated pushdown systems:
\emph{ground annotated stack tree rewrite systems (GASTRS)}.  A configuration of
a GASTRS is an \emph{annotated stack tree} -- that is, a tree where each node is
labelled by the configuration of an annotated pushdown system.  Operations may
update the leaf nodes of the tree, either by updating the configuration,
creating new leaf nodes, or destroying them.  Nodes are
created and destroyed using
\[
    \stpush{\control}{\control_1, \ldots, \control_\numof}
    \text{  and  }
    \stpop{\control'_1, \ldots, \control'_\numof}{\control'}
\]
which can be seen as spawning $\numof$ copies of the current process (including
closure information) using the first rule, and then later joining these
processes with the second rule, returning control to the previous execution
(parent node).  Alternatively, we can just use
$\stpush{\control}{\control_1, \control_2}$
for a basic fork that does not join.

This model is a generalisation of \emph{higher-order stack trees} recently
introduced by Penelle~\cite{P15}, where the tree nodes are labelled by a
restriction of annotated pushdown automata called \emph{higher-order pushdown
automata}.

As our main contribution, we show that the global backwards reachability problem
for GASTRSs can be solved via a saturation technique.  That is, given a regular
target set of annotated stack trees, we compute a regular representation of
all trees from which there is a run of the system to the target set.  Note
that being able to specify a target set of trees allows us to identify error
states such as race conditions between threads.  Our result is a
generalisation of the ICALP 2012 algorithm, and as such, may be implemented
as part of the \cshore tool.

Moreover, we define a notion of regularity amenable to saturation which is also
closed under the standard boolean operations.

As a final contribution, we show that the model can be extended to allow a
bounded amount of communication between separate nodes of the tree.  I.e., we
add a global state to the system and perform a ``context-bounded''
analysis~\cite{QR05}, where the global state can only be changed an \textit{a
priori} fixed number of times.

\paragraph{Related Work}

Annotated pushdown systems are a generalisation of higher-order pushdown systems
that provide a model of recursion schemes subject to a technical constraint
called \emph{safety}~\cite{M76,KNU02} and are closely related to the Caucal
hierarchy~\cite{CW03}.  Parys has shown that safety is a genuine constraint on
definable traces~\cite{P11}.  Panic automata provided the first model of
order-$2$ schemes, while annotated pushdown systems model schemes of arbitrary
order.  These formalisms have good model-checking properties.  E.g.
$\mu$-calculus decidability~\cite{O06,HMOS08}.  Krivine machines can also be
used to model recursion schemes~\cite{SW11}.

There has been some work studying concurrent variants of recursion scheme model
checking, including a context-bounded algorithm for recursion
schemes~\cite{KI13}, and further underapproximation methods such as
phase-bounded, ordered, and scope-bounding~\cite{H13,S09}.  These works
allow only a fixed number of threads.

Dynamic thread creation is permitted by both Yasukata\etal~\cite{YKM14} and by
Chadha and Viswanathan~\cite{CV07}.  In Yasukata\etal's model, recursion schemes
may spawn and join threads.  Communication is permitted only via nested locks,
whereas in our model we allow shared memory, but only a bounded number of memory
updates.  Their work is a generalisation of results for order-1 pushdown
systems~\cite{GLMSW11}.  Chadha and Viswanathan allow threads to be spawned, but
only one thread runs at a time, and must run to completion. Moreover, the tree
structure is not maintained.

Saturation methods also exist for \emph{ground tree rewrite systems} and related
systems~\cite{L03,B69,LS98}, though use different techniques.  Our
context-bounded model relates to weak GTRS with state introduced by
Lin~\cite{L12}.  Adding such weak state to process rewrite systems was
considered by Kret{\'{\i}}nsk{\'{y}\etal~\cite{KRS04}.

A saturation technique has also been developed for dynamic trees of pushdown
processes~\cite{BMT05}.  These are trees where each process on each node is
active (in our model, only the leaf nodes are active).  However, their spawn
operations do not copy the current process, losing closure information.  It
would be interesting and non-trivial to study the combination of both
approaches.

Penelle proves decidability of first order logic with reachabilty
over rewriting graphs of ground stack tree rewriting systems~\cite{P15}.
This may be used for a context-bounded reachability result for
higher-order stack trees.  This result relies on MSO decidability over the
configuration graphs of higher-order pushdown automata, through a finite set
interpretation of any rewriting graph of a ground stack tree rewriting system
into a configuration graph of a higher pushdown automaton. This does not hold
for annotated pushdown automata.

\section{Preliminaries}

\subsubsection*{Trees}
An ordered tree over arity at most $\maxarity$ over a set of labels $\treelabels$
is a tuple
$\tup{\treedom, \treelabelling}$
where
$\treedom \subset \set{1,\ldots,\maxarity}^\ast$
is a tree domain such that
$\tnode \tdiri \in \treedom$
implies
$\tnode \in \treedom$
(prefix closed), and
$\tnode \tdirj \in \treedom$
for all
$\tdirj < \tdiri$
(younger-sibling closed), and
$\treelabelling : \treedom \rightarrow \treelabels$
is a labelling of the nodes of the tree.
\easyicalp{
    Let
    $\tnode \treeanc \tnode'$
    denote that $\tnode$ is an ancestor (inclusive) of $\tnode'$ in the tree.
}{}
We write
$\treemod{\tree}{\tnode}{\treelab}$
to denote the tree
$\tree' = \tup{\treedom \cup \set{\tnode}, \treelabelling'}$
where
$\ap{\treelabelling'}{\tnode} = \treelab$
and
$\ap{\treelabelling'}{\tnode'} = \ap{\treelabelling}{\tnode'}$
for
$\tnode' \neq \tnode$,
whenever
$\tree = \tup{\treedom, \treelabelling}$
and
$\treedom \cup \set{\tnode}$
is a valid tree domain.  We will also write
$\tree' = \treedel{\tree}{\tnodeset}$
to denote the tree obtained by removing all subtrees rooted at $\tnode \in
\tnodeset$ from $\tree$.
\easyicalp{
    That is
    $\tree' = \tup{\treedom', \treelabelling'}$
    when
    $\tree = \tup{\treedom, \treelabelling}$
    and
    \[
        \begin{array}{rcl}
            \treedom'
            &=&
            \treedom \setminus \setcomp{\tnode'}
                                       {\tnode \in \tnodeset \land
                                        \tnode \treeanc \tnode'} \\

            \ap{\treelabelling'}{\tnode}
            &=&
            \begin{cases}
                \ap{\treelabelling}{\tnode} & \tnode \in \treedom' \\
                \text{undefined} & \text{otherwise.}
            \end{cases}
        \end{array}
    \]
}{}

\subsubsection*{Annotated stacks}

Let $\salphabet$ be a set of stack symbols.  An annotated stack of
order-$\maxord$ is an order-$\maxord$ stack in which stack symbols are
annotated with stacks of order at most $\maxord$.  For the rest of the paper, we
fix the maximal order to $\maxord$, and use $\midord$ to range between $\maxord$
and $1$.  We simultaneously define for all
$1 \leq \midord \leq \maxord$,
the set
$\akstacks{\midord}{\maxord}{\salphabet}$
of stacks of order-$\midord$ whose symbols are annotated by stacks of order at
most $\maxord$.  Note, we use subscripts to indicate the order of a stack.
We ensure all stacks are finite by using the least fixed-point.  When the
maximal order $\maxord$ is clear, we write
$\stacks{\midord}{\salphabet}$
instead of
$\akstacks{\midord}{\maxord}{\salphabet}$.

\begin{definition}[Annotated Stacks]
    The family of sets
    $\brac{\akstacks{\midord}
                  {\maxord}
                  {\salphabet}}_{1 \leq \midord \leq \maxord}$
    is the smallest family (for point-wise inclusion) such that:
    \begin{itemize}
        \item
            for all
            $2 \leq \midord \leq \maxord$,
            $\akstacks{\midord}{\maxord}{\salphabet}$
            is the set of all (possibly empty)
            sequences
            $\kstack{\midord}{\stack_1 \ldots \stack_\numof}$
            with
            $\stack_1, \ldots, \stack_\numof
                \in \akstacks{\midord-1}{\maxord}{\salphabet}$.

        \item
            $\akstacks{1}{\maxord}{\salphabet}$
            is all sequences
            $\kstack{1}{
                \annot{\cha_1}{\stack_1}
                \ldots
                \annot{\cha_\numof}{\stack_\numof}}$
            with
            $\numof \geq 0$
            and for all
            $1 \leq \idxi \leq \numof$, j
            $\cha_\idxi$ is a stack symbol in $\salphabet$ and $\stack_\idxi$ is
            an annotated stack in
            $\bigcup\limits_{1 \leq \midord \leq \maxord}
                \akstacks{\midord}{\maxord}{\salphabet}$.
    \end{itemize}
\end{definition}

We write
$\stack \scomp{\midord} \stack'$
--- where $\stack$ is order-$(\midord-1)$ ---
to denote the stack obtained by
placing $\stack$ on top of $\stack'$.  That is,
\begin{itemize}
    \item
        if
        $\stack' = \kstack{\midord}{\stack_1 \ldots \stack_\numof}$
        then
        $\stack \scomp{\midord} \stack'
            = \kstack{\midord}
                     {\stack \stack_1 \ldots \stack_\numof}$, and

    \item
        if
        $\stack' = \kstack{\midord'}{\stack_1 \ldots \stack_\numof}$
        with
        $\midord' > \midord$
        then
        $\stack \scomp{\midord} \stack'
            = \kstack{\midord'}
                     {\brac{\stack \scomp{\midord} \stack_1}
                      \stack_2
                      \ldots
                      \stack_\numof}$.
\end{itemize}

This composition associates to
the right.  For example, the order-$3$ stack
$\kstack{3}{\kstack{2}{\kstack{1}{\annot{\cha}{\stack} \chb}}}$
can be written
$\stack_1 \scomp{3} \stack_2$
where $\stack_1$ is the order-$2$ stack
$\kstack{2}{\kstack{1}{\annot{\cha}{\stack} \chb}}$
and
$\stack_2$ is the empty order-$3$ stack
$\kstack{3}{}$.
Then
$\stack_1 \scomp{3} \stack_1 \scomp{3} \stack_2$
is
$\kstack{3}{\kstack{2}{\kstack{1}{\annot{\cha}{\stack} \chb}}
            \kstack{2}{\kstack{1}{\annot{\cha}{\stack} \chb}}}$.

Note that we cannot write
$\brac{\stack_1 \scomp{\midord} \stack_2} \scomp{\midord} \stack_3$
since
$\brac{\stack_1 \scomp{\midord} \stack_2}$
is not order-$(\midord-1)$.

\subsubsection*{Operations on Order-$\maxord$ Annotated Stacks}

For a given alphabet $\salphabet$, we define the set
$\stackops{\maxord}{\salphabet}$ of stack operations inductively as follows:
\[
    \begin{array}{c}
        \stackops{0}{\salphabet}
        =
        \setcomp{\srew{\cha}{\chb}}{\cha, \chb \in \salphabet}
        \qquad
        \stackops{1}{\salphabet}
        =
        \set{\scpush{1}, \spop{1}}
        \cup
        \stackops{0}{\salphabet}
        \\
        \stackops{\maxord}{\salphabet}
        =
        \set{\scpush{\maxord},
             \spush{\maxord},
             \spop{\maxord},
             \scollapse{\maxord}}
        \cup
        \stackops{(\maxord-1)}{\salphabet}
    \end{array}
\]

We define each operation for a stack $\stack$.
Annotations are created by
$\scpush{\midord}$,
which adds a character to the top of a stack
$\stack \scomp{(\midord+1)} \stack'$
annotated by
$\ap{\spop{\midord}}{\stack}$.
This gives the new character access to the context in which it was created.
\begin{enumerate}
    \item
        We set
        $\ap{\srew{\cha}{\chb}}
            {\annot{\cha}{\stack'} \scomp{1} \stack}
        = \annot{\chb}{\stack'} \scomp{1} \stack$.

    \item
        We set
        $\ap{\scpush{\midord}}{\stack}
            = \annot{\cha}{\stack_\midord}
              \scomp{1}
              \stack$
        when
        $\stack = \annot{\cha}{\stack_1}
                  \scomp{1}
                  \stack_2
                  \scomp{2}
                  \cdots
                  \scomp{\midord}
                  \stack_\midord
                  \scomp{(\midord+1)}
                  \cdots
                  \scomp{\maxord}
                  \stack_\maxord$.

    \item
        We set
        $\ap{\spush{\midord}}{\stack \scomp{\midord} \stack'}
            = \stack \scomp{\midord} \stack \scomp{\midord} \stack'$.

    \item
        We set
        $\ap{\spop{\midord}}{\stack \scomp{\midord} \stack'} = \stack'$.

    \item
        We set
        $\ap{\scollapse{\midord}}{\annot{\cha}{\stack}
                                  \scomp{1}
                                  \stack_1 \scomp{(\midord+1)} \stack_2}
            = \stack \scomp{(\midord+1)} \stack_2$
        when $\stack$ is order-$\midord$ and
        $\maxord > \midord \geq 1$;
        and
        $\ap{\scollapse{\maxord}}{\annot{\cha}{\stack} \scomp{1} \stack'}
            =  \stack$
        when $\stack$ is order-$\maxord$.
\end{enumerate}

\section{Annotated Stack Trees}

An \emph{annotated stack tree} is a tree whose nodes are labelled by annotated
stacks.  Furthermore, each leaf node is also labelled with a control state.  Let
$\stacktrees{\maxord}{\salphabet}$
denote the set of order-$\maxord$ annotated stack trees over $\salphabet$.

\begin{definition}[Order-$\maxord$ Annotated Stack Trees]
    An \emph{order-$\maxord$ annotated stack tree} over an alphabet $\salphabet$
    and set of control states $\controls$ is a
    $\brac{
        \stacks{\maxord}{\salphabet} \cup
        \brac{
            \controls \times \stacks{\maxord}{\salphabet}
        }
    }$-labelled
    tree
    $\tree = \tup{\treedom, \treelabelling}$
    such that for all leaves $\tnode$ of $\tree$ we have
    $\ap{\treelabelling}{\tnode}
        \in \controls \times \stacks{\maxord}{\salphabet}$
    and for all internal nodes $\tnode$ of $\tree$ we have
    $\ap{\treelabelling}{\tnode} \in \stacks{\maxord}{\salphabet}$.
\end{definition}

\subsection{Annotated Stack Tree Operations}

\begin{definition}[Order-$\maxord$ Annotated Stack Tree Operations]
    Over a given finite alphabet $\salphabet$ and finite set of control states
    $\controls$, the set of \emph{order-$\maxord$ stack tree operations}
    is defined to be
    \[
        \begin{array}{rcl}
            \stacktreeops{\maxord}{\salphabet}{\controls}
            &=&
            \setcomp{
                \stpush{\control}{\control_1, \ldots, \control_\numof},
                \stpop{\control_1, \ldots, \control_\numof}{\control}
            }{
                \control, \control_1, \ldots, \control_\numof \in \controls
            } \cup
            \\
            & &
            \setcomp{\stsop{\control}{\sop}{\control'}}
                    {\sop \in \stackops{\maxord}{\salphabet} \land
                     \control, \control' \in \controls} \ .
        \end{array}
    \]
\end{definition}
Stack operations may be applied to any leaf of the tree.  Let
$\tleaf{\tree}{\idxi}$
denote the $\idxi$th leaf of tree $\tree$.  We define the local application of a
operation to the $\idxi$th leaf as follows.  Let
$\tree = \tup{\treedom, \treelabelling}$
and
$\ap{\treelabelling}{\tleaf{\tree}{\idxi}} = \tup{\control, \stack}$
\[
    \begin{array}{rcl}
        \apstop{\stsop{\control}{\sop}{\control'}}{\idxi}{\tree}
        &=&
        \treemod{\tree}
                {\tleaf{\tree}{\idxi}}
                {\tup{\control', \ap{\sop}{\stack}}}
        \\

        \apstop{\stpush{\control}{\control_1, \ldots, \control_\numof}}{\idxi}{\tree}
        &=&
        \treemod{\treemod{\treemod{\tree}
                                  {\tleaf{\tree}{\idxi}}
                                  {\stack}}
                         {\tleaf{\tree}{\idxi}1}
                         {\tup{\control_1, \stack}}
                 \cdots}
                {\tleaf{\tree}{\idxi}\numof}
                {\tup{\control_\numof, \stack}}
    \end{array}
\]
and when
$\tleaf{\tree}{\idxi} = \tnode 1$,
\ldots,
$\tleaf{\tree}{\idxi+\numof-1} = \tnode \numof$
are the only children of $\tnode$,
$\ap{\treelabelling}{\tleaf{\tree}{\idxi}} = \tup{\control_1, \stack_1}$,
\ldots,
$\ap{\treelabelling} {\tleaf{\tree}{\idxi+{\numof-1}}}
    = \tup{\control_\numof, \stack_\numof}$,
and
$\ap{\treelabelling}{\tnode} = \stack$,
\[
    \apstop{\stpop{\control_1, \ldots, \control_\numof}{\control}}{\idxi}{\tree}
    =
    \treemod{\brac{
                 \treedel{\tree}
                         {\set{\tleaf{\tree}{\idxi},
                               \ldots,
                               \tleaf{\tree}{\idxi+{\numof-1}}}}
             }}
            {\tnode}
            {\tup{\control, \stack}} \ .
\]

For all
$\strule \in \stacktreeops{\maxord}{\salphabet}{\controls}$
we write
$\ap{\strule}{\tree}$
to denote the set
$\setcomp{\tree'}{\exists \idxi .  \tree' = \apstop{\strule}{\idxi}{\tree}}$.

\subsection{Ground Annotated Stack Tree Rewrite Systems}

\begin{definition}[Order-$\maxord$ Ground Annotatee Stack Tree Rewrite Systems]
    An \emph{order-$\maxord$ ground annotated stack tree rewrite system (GASTRS)}
    $\gstrs$ is a tuple
    $\tup{\salphabet, \controls, \rules}$
    where $\salphabet$ is a finite stack alphabet, $\controls$ is a finite set
    of control states, and
    $\rules \subset \stacktreeops{\maxord}{\salphabet}{\controls}$
    is a finite set of operations.
\end{definition}

A configuration of an order-$\maxord$ GASTRS is an order-$\maxord$ annotated
stack tree $\tree$ over alphabet $\salphabet$.  We have a transition
$\tree \tran \tree'$
whenever there is some
$\strule \in \rules$
and
$\tree' \in \ap{\strule}{\tree}$.
We write
$\tree \reaches \tree'$ when there is a run
$\tree = \tree_0 \tran \cdots \tran \tree_\numof = \tree'$.

\subsection{Regular Sets of Annotated Stack Trees}

We define a notion of annotated stack tree automata for recognising regular sets
of annotated stack trees.  We give an initial exposition here, with more details
(definitions and proofs) in Appendix~\ref{sec:aut-particulars}.
In particular,
\easyicalp{
    we have the following result.

    \begin{proposition}
        Annotated stack tree automata form an effective boolean algebra, membership
        is in linear time, and emptiness is PSPACE-complete.
    \end{proposition}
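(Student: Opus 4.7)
The plan is to present annotated stack tree automata as a two-layer device: a finite-state tree automaton traversing the tree skeleton, whose transitions invoke an alternating annotated stack automaton to accept the stack label at each visited node. Concretely, a rule at tree-automaton state $\state$ specifies an initial stack-automaton state (which must accept the current node's label) together with a successor tree-automaton state for each child; for a leaf, the stack automaton additionally checks the control state. This layered design lets us inherit closure and decidability properties from the known theory of alternating annotated stack automata used in the ICALP 2012 saturation algorithm \cite{BCHS12}.

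For boolean closure, union is by disjoint union with a nondeterministic choice of initial state, and intersection is by a product construction on the tree layer, pairing the invoked stack automata via intersection. The main obstacle is complementation, since the node alphabet is infinite (all annotated stacks). I would proceed by moving to an alternating presentation at both layers, in which complementation reduces to standard dualisation -- swapping conjunction and disjunction and complementing the acceptance condition -- applied simultaneously on the tree transitions and on the stack automata gating them; complementation of the stack automata themselves is available by the usual alternating-automaton argument. One then translates back to the nondeterministic form when needed.

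Membership in linear time follows from a single bottom-up traversal of the input tree: at each node one runs the stack automaton on that node's label. Since membership for a fixed alternating annotated stack automaton is linear in the size of the stack (with annotations counted in the standard DAG representation), and each node is visited once, the total cost is linear.

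For emptiness, the upper bound is obtained by a fixed-point computation on the finite set of tree-automaton states: iteratively mark a state $\state$ as productive once the stack automaton it invokes is non-empty under the constraint that each successor tree-state is already productive. Each stack non-emptiness query is in PSPACE by known bounds on alternating annotated stack automata, and the fixed point stabilises in polynomially many rounds, so the whole procedure is PSPACE. Hardness is inherited directly from the PSPACE-hardness of emptiness for alternating annotated stack automata: a single-leaf tree instance reduces that problem to ours. The hardest technical step is justifying that the alternating-to-nondeterministic translation at the tree layer remains well-defined in the presence of the infinite stack alphabet; this is handled by phrasing the translation relative to the finite index of ``stack types'' induced by the (finitely many) invoked stack automata.
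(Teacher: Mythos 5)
Your treatment of union, intersection, membership, and the emptiness upper bound coincides with the paper's: union is disjoint union; intersection is a product on the tree layer whose transitions invoke intersection states of the stack automata (available because annotated stack automata form an effective boolean algebra); membership is a single bottom-up pass using linear-time stack membership; and emptiness is a productive-state marking loop in which each step issues a PSPACE stack non-emptiness query. You also make the PSPACE-hardness direction explicit via a single-leaf instance, which the paper's proof leaves implicit. The one genuinely different step is complementation. The paper never passes through an alternating presentation of the tree layer: it determinises directly, labelling each node with a tuple of state-sets, one per child position, recording exactly which tree states each child could have produced in some run; the infinite node alphabet is tamed by requiring the child's stack to be accepted from the union of the stack states labelling the transitions that could have fired and from the intersection of the complements of those that could not (stack-automaton complementation being imported from ICALP 2012). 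Your route --- dualise an alternating two-layer presentation and translate back to nondeterministic bottom-up form, working modulo the finite partition of stacks into types induced by the finitely many invoked stack automata --- is sound for finite trees and buys a more uniform, generic argument, at the cost of an extra alternation-elimination step whose correctness over the infinite alphabet you rightly flag as the delicate point. The paper's construction is more explicit and exhibits the deterministic automaton whose acceptance condition is then negated; both rest on the same external ingredient, namely complementation of annotated stack automata.
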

}{
    stack tree automata form an effective boolean algebra, membership is linear
    time, and emptiness is PSPACE-complete.
}
Transitions of stack tree automata are labelled by states of stack automata which have a
further nested structure~\cite{BCHS12}.  These automata are based on a similar
automata model by Bouajjani and Meyer~\cite{BM04}.  We give the formal
definition with intuition following.

\begin{definition}[Order-$\maxord$ Annotated Stack Tree Automata]
    An \emph{order-$\maxord$ stack tree automaton} over a given stack alphabet
    $\salphabet$ and set of control states $\controls$ is a tuple
    \[
        \ta = \tup{
                   \tastates, \sastates_\maxord,\ldots,\sastates_1,
                   \salphabet,
                   \tadelta, \sadelta_\maxord,\ldots,\sadelta_1,
                   \controls,
                   \tafinals, \safinals_\maxord,\ldots,\safinals_1
               }
    \]
    where
    $\salphabet$ is a finite stack alphabet,
    $\tastates$ is a finite set of states,
    \[
        \tadelta \subset \tastates
                         \times
                         \setcomp{\tup{\idxi, \numof}}{1 \leq \idxi \leq \numof}
                         \times
                         \brac{\tastates \setminus \tafinals}
                         \times
                         \sastates_\maxord
    \]
    is a finite set of transitions,
    $\controls \subseteq \tastates$ and $\tafinals \subseteq \tastates$ are initial and
    final states respectively,
    and
    \begin{enumerate}
        \item
            for all
            $\maxord \geq \midord \geq 2$,
            we have
            $\sastates_\midord$ is a finite set of states,
            $\sadelta_\midord \subseteq \sastates_\midord
                                        \times
                                        \sastates_{\midord-1}
                                        \times
                                        2^{\sastates_\midord}$
            is a transition relation, and
            $\safinals_\midord \subseteq \sastates_\midord$
            is a set of accepting states, and

         \item
            $\sastates_1$ is a finite set of states,
            $\sadelta_1 \subseteq \bigcup\limits_{2 \leq \midord \leq \maxord}
                                  \brac{\sastates_1
                                        \times \salphabet
                                        \times
                                        2^{\sastates_\midord}
                                        \times
                                        2^{\sastates_1}}$
            is a transition relation, and
            $\safinals_1 \subseteq \sastates_1$
            is a set of accepting states.
    \end{enumerate}
\end{definition}

\subsubsection{Accepting Stacks}

Order-$\midord$ stacks are recognised from states in $\sastates_\midord$.  A
transition
$\tup{\sastate, \sastate', \sastateset} \in \sadelta_\midord$
from $\sastate$ to $\sastateset$ for some $\midord > 1$
is denoted
$\sastate \satran{\sastate'} \sastateset$
and can be fired when the stack is
$\stack \scomp{\midord} \stack'$
and $\stack$ is accepted from
$\sastate' \in \sastates_{(\midord-1)}$.
The remainder of the stack $\stack'$ must be accepted from all states in
$\sastateset$.  At order-$1$, a transition
$\tup{\sastate, \cha, \sastateset_\branch, \sastateset} \in \sadelta_1$
is denoted
$\sastate \satrancol{\cha}{\sastateset_\branch} \sastateset$
and is a standard alternating $\cha$-transition with the additional
requirement that the stack annotating $\cha$ is accepted from all states in
$\sastateset_\branch$.  A stack is accepted if a subset of $\safinals_\midord$
is reached at the end of each order-$\midord$ stack.  Note, we give a more
formal definition of a run in
Appendix~\ref{sec:aut-particulars}.
We write
$\stack \in \slang{\sastate}{\ta}$
whenever $\stack$ is accepted from a state $\sastate$.

An order-$\maxord$ stack can be represented naturally as an edge-labelled tree
over the alphabet
$\set{\sopen{\maxord-1},\ldots,\sopen{1},
      \sclose{1},\ldots,\sclose{\maxord-1}}
 \uplus
 \salphabet$,
with $\salphabet$-labelled edges having a second target to the
tree representing the annotation.  For technical convenience, a tree
representing an order-$\midord$ stack does not use $\sopen{\midord}$ or
$\sclose{\midord}$ symbols (these appear uniquely at the beginning and end of the
stack). An example order-$3$ stack is given below, with only a few annotations
shown.  The annotations are order-$3$ and order-$2$ respectively.
\begin{center}
    \vspace{4ex}
    \begin{psmatrix}[nodealign=true,colsep=2ex,rowsep=2ex]
        \bnode{N1} && \bnode{N2} && \bnode{N3} &\pnode{N34}& \bnode{N4} &&
        \bnode{N5} && \bnode{N6} && \bnode{N7} &

        \bnode{N8} && \bnode{N9} && \bnode{N10} &\pnode{N1011}& \bnode{N11} &&
        \bnode{N12} && \bnode{N13} &

        \bnode{N14} &\pnode{N1415}& \bnode{N15} && \bnode{N16} && \bnode{N17} \\

        \psset{angle=-90,linearc=.2}
        \ncline{->}{N1}{N2}^{$\sopen{2}$}
        \ncline{->}{N2}{N3}^{$\sopen{1}$}
        \ncline{->}{N3}{N4}^{$\cha$}
        \ncbar{->}{N34}{N8}
        \ncline{->}{N4}{N5}^{$\chb$}
        \ncline{->}{N5}{N6}^{$\sclose{1}$}
        \ncline{->}{N6}{N7}^{$\sclose{2}$}

        \ncline{->}{N8}{N9}^{$\sopen{2}$}
        \ncline{->}{N9}{N10}^{$\sopen{1}$}
        \ncline{->}{N10}{N11}^{$\chc$}
        \ncbar{->}{N1011}{N14}
        \ncline{->}{N11}{N12}^{$\sclose{1}$}
        \ncline{->}{N12}{N13}^{$\sclose{2}$}

        \ncline{->}{N14}{N15}^{$\sopen{1}$}
        \ncline{->}{N15}{N16}^{$\chd$}
        \ncline{->}{N16}{N17}^{$\sclose{1}$}
    \end{psmatrix}
\end{center}

An example (partial) run over this stack is pictured below, using transitions
$\sastate_3 \satran{\sastate_2} \sastateset_3 \in \sadelta_3$,
$\sastate_2 \satran{\sastate_1} \sastateset_2 \in \sadelta_2$,
and
$\sastate_1 \satrancol{\cha}{\sastateset_\branch} \sastateset_1 \in \sadelta_1$.
The node labelled $\sastateset_\branch$ begins a run on the stack annotating $\cha$.
\begin{center}
    \vspace{2ex}
    \begin{psmatrix}[nodealign=true,colsep=2ex,rowsep=1.25ex]
        \Rnode{N1}{$\sastate_3$} &             & \Rnode{N2}{$\sastate_2$}         &  &
        \Rnode{N3}{$\sastate_1$} & \pnode{N34} & \Rnode{N4}{$\sastateset_1$}      &  &
        \Rnode{N5}{$\cdots$}     &             & \Rnode{N6}{$\sastateset_2$}      &  &
        \Rnode{N7}{$\cdots$}     &             & \Rnode{N8}{$\sastateset_3$}      &  &
        \Rnode{N9}{$\cdots$}     &             & \Rnode{N10}{$\sastateset_\branch$} &  &
        \Rnode{N11}{$\cdots$} \\

        \psset{nodesep=.5ex,angle=-90,linearc=.2}
        \ncline{->}{N1}{N2}^{$\sopen{2}$}
        \ncline{->}{N2}{N3}^{$\sopen{1}$}
        \ncline{->}{N3}{N4}^{$\cha$}
        \ncbar[arm=1.5ex,nodesepA=0]{->}{N34}{N10}
        \ncline{->}{N4}{N5}^{$\cdots$}
        \ncline{->}{N5}{N6}^{$\sclose{1}$}
        \ncline{->}{N6}{N7}^{$\cdots$}
        \ncline{->}{N7}{N8}^{$\sclose{2}$}
        \ncline{->}{N8}{N9}^{$\cdots$}

        \ncline{->}{N10}{N11}^{$\cdots$}
    \end{psmatrix}
\end{center}

\subsubsection{Accepting Stack Trees}

Annotated stack tree automata are bottom-up tree automata whose transitions are
labelled by states from which stacks are accepted.  We denote by
\[
    \tatran{\tastate}{\idxi}{\numof}{\tastate'}{\sastate}
\]
a transition
$\tup{\tastate, \idxi, \numof, \tastate', \sastate} \in \tadelta$.
Observe that
$\tastate' \notin \tafinals$
by definition.
When a node $\tnode$ has children
$\tnode_1, \ldots, \tnode_\numof$,
the transition above could be applied to the $\idxi$th child $\tnode_\idxi$.  It
can be applied when $\tnode_\idxi$ is already labelled by $\tastate'$ and the
stack $\stack_\idxi$ attached to $\tnode_\idxi$ is accepted from state $\sastate$
of the stack automaton.  If it is applied, then $\tastate$ will be set as the
label of the parent $\tnode$.  Over runs of the automaton we enforce that every
child is present and the transitions applied at each child agree on the state
assigned to its parent.

Let
$\ap{\treestacklab}{\tnode} = \stack$
when
$\ap{\treelabelling}{\tnode} = \tup{\control, \stack}$
or
$\ap{\treelabelling}{\tnode} = \stack$.
Given an order-$\maxord$ annotated stack tree
$\tree = \tup{\treedom, \treelabelling}$
a run of an automaton $\ta$ is a $\tastates$-labelled tree
$\tup{\treedom, \treelabelling'}$
where each leaf $\tnode$ of $\tree$ has
$\ap{\treelabelling'}{\tnode} = \control$
whenever
$\ap{\treelabelling}{\tnode} = \tup{\control, \stack}$
for some $\stack$, and each internal node $\tnode$ with children
$\tnode 1, \ldots, \tnode \numof$
has a label
$\ap{\treelabelling'}{\tnode} = \tastate$
only if we have transitions
\[
    \tatran{\tastate}{1}{\numof}{\tastate_1}{\sastate_1},
    \ldots,
    \tatran{\tastate}{\numof}{\numof}{\tastate_\numof}{\sastate_\numof},
\]
and
$\ap{\treelabelling'}{\tnode \idxi} = \tastate_\idxi$
and
$\ap{\treestacklab}{\tnode \idxi} \in \slang{\sastate_\idxi}{\ta}$
for all
$1 \leq \idxi \leq \numof$.
Finally
$\ap{\treelabelling'}{\troot} = \tastate$
and we have a transition
$\tatran{\tastatef}{1}{1}{\tastate}{\sastate}$
with
$\tastatef \in \tafinals$
and
$\ap{\treestacklab}{\troot} \in \slang{\sastate}{\ta}$.

We write
$\langof{\ta}$
to denote the set of trees accepted by $\ta$.

\subsection{Notation and Conventions}
\label{ssec:notations}

\subsubsection{Number of Transitions}
We assume for all pairs of states
$\tastate, \tastate' \in \tastates$
and each $\idxi, \numof$ there is at most one transition of the form
$\tatran{\tastate}{\idxi}{\numof}{\tastate'}{\sastate}$.
Similarly we assume for all
$\sastate \in \sastates_\midord$
and
$\sastateset \subseteq \sastates_\midord$
that there is at most one transition of the form
$\sastate \satran{\sastate'} \sastateset \in \sadelta_\midord$.
This condition can easily
be ensured by replacing pairs of transitions
$\sastate \satran{\sastate_1} \sastateset$
and
$\sastate \satran{\sastate_2} \sastateset$
with a single transition
$\sastate \satran{\sastate'} \sastateset$,
where $\sastate'$ accepts the union of the languages of stacks accepted from
$\sastate_1$ and $\sastate_2$.  Similarly for transitions in $\tadelta$.

\subsubsection{Short-form Notation}

Consider the example run shown above.  This run reads the top of every level of
the stack: the transition to $\sastateset_3$ reads the topmost order-$2$ stack,
the transition to $\sastateset_2$ reads the order-$1$ stack at the top of this
stack, and the transition to $\sastateset_1$ and $\sastateset_\branch$ reads the
top character of the order-$1$ stack.

The saturation algorithm relies on stack updates only affecting the topmost
part of the stack.  Thus, we need a notation for talking about the beginning of
the run.  Hence, we will write the run in the figure above (that reads the
topmost parts of the stack) as a ``short-form'' transition
\[
    \satranfull{\sastate_3}
               {\cha}
               {\sastateset_\branch}
               {\sastateset_1, \ldots, \sastateset_3} \ .
\]
In the following, we define this notation formally, and generalise it to
transitions of a stack tree automaton.  In general, we write
\[
    \satranfull{\sastate}
               {\cha}
               {\sastateset_\branch}
               {\sastateset_1,\ldots,\sastateset_\midord}
    \text{    and    }
    \satranfullk{\sastate}
                {\sastate'}
                {\sastateset_{\midord'+1}, \ldots, \sastateset_\midord} .
\]
In the first case,
$\sastate \in \sastates_\midord$
and there exist
$\sastate_{\midord-1}, \ldots, \sastate_1$
such that
$\sastate
 \satran{\sastate_{\midord-1}}
 \sastateset_\midord \in \sadelta_\midord$,
$\sastate_{\midord-1}
 \satran{\sastate_{\midord-2}}
 \sastateset_{\midord-1} \in \sadelta_{\midord-1}$,
\ldots,
$\sastate_1
 \satrancol{\cha}{\sastateset_\branch}
 \sastateset_1 \in \sadelta_1$.
Since we assume at most one transition between any state and set of states, the
intermediate states
$\sastate_{\midord-1}, \ldots, \sastate_1$
are uniquely determined by
$\sastate, \cha, \sastateset_\branch$
and
$\sastateset_1, \ldots, \sastateset_\midord$.

In the second case, either
$\midord = \midord'$
and
$\sastate = \sastate' \in \sastates_\midord$,
or
$\midord > \midord'$
and we have
$\sastate \in \sastates_\midord$,
$\sastate' \in \sastates_{\midord'}$,
and there exist
$\sastate_{\midord-1}, \ldots, \sastate_{\midord'+1}$
with
$\sastate
 \satran{\sastate_{\midord-1}}
 \sastateset_\midord \in \sadelta_\midord$,
$\sastate_{\midord-1}
 \satran{\sastate_{\midord-2}}
 \sastateset_{\midord-1} \in \sadelta_{\midord-1}$,
\ldots,
$\sastate_{\midord'+2}
 \satran{\sastate_{\midord'+1}}
 \sastateset_{\midord'+2} \in \sadelta_{\midord'+2}$
and
$\sastate_{\midord'+1}
 \satran{\sastate'}
 \sastateset_{\midord'+1} \in \sadelta_{\midord'+1}$.

We lift the short-form transition notation to transitions from sets of states.
We assume that state-sets
$\sastates_\maxord, \ldots, \sastates_1$
are disjoint.  Suppose
$\sastateset = \set{\sastate_1,\ldots,\sastate_\numof}$
and for all
$1 \leq \idxi \leq \numof$
we have
$\satranfull{\sastate_\idxi}
            {\cha}
            {\sastateset^\idxi_\branch}
            {\sastateset^\idxi_1,\ldots,\sastateset^\idxi_\midord}$.
Then we have
$\satranfull{\sastateset}
            {\cha}
            {\sastateset_\branch}
            {\sastateset_1,\ldots,\sastateset_\midord}$
where
$\sastateset_\branch
    = \bigcup_{1 \leq \idxi \leq \numof} \sastateset^\idxi_\branch$
and for all $\midord$,
$\sastateset_\midord
    = \bigcup_{1 \leq \idxi \leq \numof} \sastateset^\idxi_\midord$.
Because an annotation can only be of one order, we insist that
$\sastateset_\branch \subseteq \sastates_\midord$
for some $\midord$.

We generalise this to trees as follows.  We write
\[
    \tatranfull{\tastate}
               {\idxi}
               {\numof}
               {\tastate'}
               {\cha}
               {\sastateset_\branch}
               {\sastateset_1,
                \ldots,
                \sastateset_\maxord}
    \quad
    \text{ and }
    \quad
    \tatranfullk{\tastate}
                {\idxi}
                {\numof}
                {\tastate'}
                {\sastate'}
                {\sastateset_{\midord+1},
                 \ldots,
                 \sastateset_\maxord}
\]
when
$\tatran{\tastate}{\idxi}{\numof}{\tastate'}{\sastate}$
and
$\satranfull{\sastate}
            {\cha}
            {\sastateset_\branch}
            {\sastateset_1,
             \ldots
             \sastateset_\maxord}$
or, respectively,
$\satranfullk{\sastate}
             {\sastate'}
             {\sastateset_{\midord+1},
              \ldots
              \sastateset_\maxord}$.

Finally, we remark that a transition to the empty set is distinct
from having no transition.

\section{Backwards Reachability Analysis}

Fix a GASTRS $\gstrs$ and automaton $\ta_0$ for the remainder of the article.
We define
\[
    \prestar{\gstrs}{\ta_0} = \setcomp{\tree}
                                    {\tree \reaches \tree'
                                     \land
                                     \tree' \in \langof{\ta_0}} \ .
\]
We give a saturation algorithm for computing an automaton $\ta$ such that
$\langof{\ta} = \prestar{\gstrs}{\ta_0}$.  Indeed, we prove the following
theorem.  The upper bound is discussed in the sequel.  The lower bound comes
from alternating higher-order pushdown automata~\cite{CW07} and appears in
Appendix~\ref{sec:lower-bound}.

\begin{theorem}
    Given an order-$\maxord$ GASTRS $\gstrs$ and stack tree automaton $\ta_0$,
    $\prestar{\gstrs}{\ta_0}$
    is regular and computable in $\maxord$-EXPTIME, which is optimal.
\end{theorem}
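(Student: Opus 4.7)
The plan is to compute $\prestar{\gstrs}{\ta_0}$ via a saturation procedure generalising the annotated pushdown saturation of~\cite{BCHS12}. We fix up-front a nested stack-automaton state space $\sastates_\maxord, \ldots, \sastates_1$ built by the nested powerset construction of~\cite{BCHS12}, so that each $\sastates_\midord$ is $(\midord{-}1)$-fold exponential in the input and closed under the unions and intersections used below; the tree-automaton states consist of the states of $\ta_0$, the control states $\controls$, and one fresh intermediate state per join rule. Starting from $\ta_0$, we iteratively add new transitions over this fixed state space until a fixed point is reached. Termination is immediate from boundedness of the state space; it remains to describe the saturation schemas and verify correctness.

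A stack rule $\stsop{\control}{\sop}{\control'}$ is handled by direct reuse of the single-stack saturation of~\cite{BCHS12}: from each existing leaf transition $\tatran{\tastate}{\idxi}{\numof}{\control'}{\sastate}$ we add a new leaf transition $\tatran{\tastate}{\idxi}{\numof}{\control}{\sastate'}$ whose stack state $\sastate'$ recognises the $\sop$-preimage of $\slang{\sastate}{\ta}$. A join rule $\stpop{\control_1,\ldots,\control_\numof}{\control}$, whose forward semantics discards the children's stacks and retains the parent's, is handled by taking each existing $\tatran{\tastate}{\idxi}{\numof'}{\control}{\sastate}$ and adding $\tatran{\tastate}{\idxi}{\numof'}{\tastate''}{\sastate}$ together with $\tatran{\tastate''}{\idxj}{\numof}{\control_\idxj}{\sastate_\top}$ for $1 \leq \idxj \leq \numof$; here $\tastate''$ is the dedicated intermediate state for this join rule and $\sastate_\top$ accepts every stack. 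A spawn rule $\stpush{\control}{\control_1,\ldots,\control_\numof}$, whose forward semantics copies the parent's stack verbatim to all children, is handled by looking for existing transitions $\tatran{\tastate}{\idxi}{\numof'}{\tastate''}{\sastate''}$ and $\tatran{\tastate''}{\idxj}{\numof}{\control_\idxj}{\sastate_\idxj}$ (for $1 \leq \idxj \leq \numof$) whose stack states satisfy $\slang{\sastate''}{\ta} \cap \bigcap_{\idxj} \slang{\sastate_\idxj}{\ta} \neq \emptyset$; when they do, add the collapsed leaf transition $\tatran{\tastate}{\idxi}{\numof'}{\control}{\sastate}$ with $\sastate$ recognising this intersection.

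Correctness is a standard two-part induction. Soundness---that $\langof{\ta} \subseteq \prestar{\gstrs}{\ta_0}$---follows by induction on the order in which transitions were added, each addition witnessing a backwards one-step of the system. Completeness---that $\prestar{\gstrs}{\ta_0} \subseteq \langof{\ta}$---is by induction on the length of a run $\tree \reaches \tree'$ with $\tree' \in \langof{\ta_0}$, applying whichever saturation schema corresponds to the first rule fired. For the complexity, the full state space is $\maxord$-fold exponential in the input, the number of potential transitions is of the same order, and each saturation round is polynomial in this, yielding $\maxord$-EXPTIME overall; optimality is inherited from the cited alternating higher-order pushdown lower bound.

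The main obstacle is the spawn schema: the backwards step demands that all $\numof$ children and the internal node share the \emph{same} concrete stack, yet bottom-up tree automata cannot enforce cross-sibling equality. The resolution is the nonemptiness side condition on $\slang{\sastate''}{\ta} \cap \bigcap_{\idxj} \slang{\sastate_\idxj}{\ta}$ together with recording this intersection as $\sastate$: any concrete stack witnessing $\sastate$ simultaneously satisfies all required stack constraints, so firing the forward spawn on that stack genuinely reproduces a tree accepted by the premise transitions. Closure of the pre-fixed $\sastates_\midord$ under unions and intersections---guaranteed by the nested-powerset design---is precisely what allows this conjunctive matching to be performed within the $\maxord$-EXPTIME budget.
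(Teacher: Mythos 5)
Your algorithm is essentially the paper's saturation: a fixed, $\maxord$-fold exponentially bounded nested state space, per-rule transition schemas, and---for the copying rules $\spush{\midord}$, $\scpush{\midord}$ and $\stpush{\control}{\control_1, \ldots, \control_\numof}$---the observation that the backwards step forces several stack constraints onto one concrete stack, handled by language intersection. The paper implements that intersection simply by taking the \emph{union} of the alternating target state-sets of the premise transitions (every state in a target set must accept, so unioning sets intersects languages), and it adds the resulting transition \emph{unconditionally}. Your side condition that $\slang{\sastate''}{\ta} \cap \bigcap_\idxj \slang{\sastate_\idxj}{\ta} \neq \emptyset$ is not needed for correctness (a transition whose stack language is empty is simply never usable) and actively endangers the complexity claim: emptiness for these nested alternating stack automata is PSPACE in the size of the automaton, which by the end of saturation is $\maxord$-fold exponential, so performing the test at each step is not obviously within the $\maxord$-EXPTIME budget. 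Drop the test. Your join-rule treatment with a dedicated intermediate state per rule is a legitimate variant of the paper's, which instead normalises the system so that each $\control$ is the target of at most one join rule and reuses $\control$ itself as the label of the re-internalised node.

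The genuine gap is soundness. ``Induction on the order in which transitions were added, each addition witnessing a backwards one-step'' is not a proof here, and the naive version does not go through: an accepting run of the saturated automaton uses many new transitions simultaneously (including inside annotations, via the unioned alternating state sets), there is no well-defined ``last'' new transition to peel off, and the language accepted from a target state set is itself a moving target during saturation. This is exactly why the paper (following the ICALP 2012 argument it generalises) mounts an invariant-based proof: it assigns a semantic ``meaning'' to every state---for control states, membership in $\prestar{\gstrs}{\ta_0}$ with constraints recorded in an environment over tree nodes; for the fresh intermediate stack states, a meaning inherited through the transitions that created them---defines a transition to be \emph{sound} when it respects these meanings, imposes a \emph{non-redundancy} condition to exclude vacuously sound transitions, proves that a sound automaton accepts only trees in $\prestar{\gstrs}{\ta_0}$, and finally shows that $\ta_0$ is sound and that each saturation step preserves soundness. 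That machinery is the bulk of the technical work behind the theorem, and your one-sentence soundness claim silently assumes it. The completeness induction and the overall complexity accounting are fine, modulo the inverted indexing of the state-space sizes: the blow-up accumulates going \emph{down} the orders from a polynomial $\sastates_\maxord$, so it is $\sastates_\midord$ that is $(\maxord-\midord)$-fold exponential, not $(\midord-1)$-fold.
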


For technical reasons assume for each $\control$ there
is at most one rule
$\stpop{\control_1, \ldots, \control_\numof}{\control}$.  E.g., we
cannot have
$\stpop{\control_1, \control_2}{\control}$
and
$\stpop{\control'_1, \control'_2}{\control}$.
This is not a real restriction since we can introduce intermediate control
states.  E.g.
$\stpop{\control_1, \control_2}{\control_{1, 2}}$
and
$\gtrule{\control_{1, 2}}{\srew{\cha}{\cha}}{\control}$
and
$\stpop{\control'_1, \control'_2}{\control'_{1, 2}}$
and
$\gtrule{\control'_{1, 2}}{\srew{\cha}{\cha}}{\control}$
for all $\cha \in \salphabet$.

\subsubsection*{Initial States}

We say that all states in $\controls$ are \emph{initial}.  Furthermore, a state
$\sastate$ is initial if there is a transition
$\tatran{\tastate}{\idxi}{\numof}{\tastate'}{\sastate}$
or if there exists a transition
$\sastate' \satran{\sastate} \sastateset$
in some
$\sadelta_\midord$.
We make the assumption that all initial states do not have any incoming
transitions and that they are not final\footnote{Hence automata cannot accept
empty stacks from initial states.  This can be overcome by introducing a
bottom-of-stack symbol.}.  Furthermore, we assume any initial state only appears
on one transition.

\subsubsection*{New Transitions}

When we add a transition
$\tatranfull{\tastate}
            {\idxi}
            {\numof}
            {\tastate'}
            {\cha}
            {\sastateset_\branch}
            {\sastateset_1, \ldots, \sastateset_\maxord}$
to the automaton, then, we add
$\tatran{\tastate}{\idxi}{\numof}{\tastate'}{\sastate_\maxord}$
to $\tadelta$ if it does not exist, else we use the existing $\sastate_\maxord$,
and then for each
$\maxord \geq \midord > 1$,
we add
$\sastate_\midord \satran{\sastate_{\midord-1}} \sastateset_\midord$
to
$\sadelta_\midord$
if a transition between $\sastate_\midord$ and $\sastateset_\midord$ does not
already exist, otherwise we use the existing transition and state
$\sastate_{\midord-1}$; finally, we add
$\sastate_1 \satrancol{\cha}{\sastateset_\branch} \sastateset_1$
to $\sadelta_1$.

\subsubsection*{The Algorithm}

We give the algorithm formally here, with intuitive explanations given in the
follow section.  Saturation is a fixed point algorithm.  We begin with a GASTRS
$\gstrs = \tup{\salphabet, \rules}$
and target set of trees by $\ta_0$.  Then, we apply the saturation function
$\satfn$ and obtain a sequence of automata
$\ta_{\idxi+1} = \ap{\satfn}{\ta_\idxi}$.
The algorithm terminates when
$\ta_{\idxi+1} = \ta_{\idxi}$
in which case we will have
$\langof{\ta_{\idxi+1}} = \prestar{\gstrs}{\ta_0}$.

Following the conventions described above for adding transitions to the
automaton, we can only add a finite number of states to the automaton, which
implies that only a finite number of transitions can be added.  Hence, we must
necessarily reach a fixed point for some $\idxi$.

Given
$\ta_\idxi$,
we define
$\ta_{\idxi+1} = \ap{\satfn}{\ta_\idxi}$
to be the automaton obtained by adding to $\ta_\idxi$ the following transitions
and states.
\begin{itemize}
    \item
        For each rule
        $\stsop{\control}{\srew{\cha}{\chb}}{\control'} \in \rules$
        and transition
        $\tatranfull{\tastate}
                    {\idxj}
                    {\numof}
                    {\control'}
                    {\chb}
                    {\sastateset_\branch}
                    {\sastateset_1, \ldots, \sastateset_\maxord}$
        in $\ta_\idxi$, add to
        $\ta_{\idxi+1}$
        the transition
        $\tatranfull{\tastate}
                    {\idxj}
                    {\numof}
                    {\control}
                    {\cha}
                    {\sastateset_\branch}
                    {\sastateset_1, \ldots, \sastateset_\maxord}$.

    \item
        For each rule
        $\stsop{\control}{\scpush{\midord}}{\control'} \in \rules$,
        transition
        $\tatranfull{\tastate}
                    {\idxj}
                    {\numof}
                    {\control'}
                    {\cha}
                    {\sastateset_\branch}
                    {\sastateset_1, \ldots, \sastateset_\maxord}$,
        and
        $\sastateset_1 \satrancol{\cha}{\sastateset'_\branch} \sastateset'_1$
        in $\ta_\idxi$, add
        \[
            \tatranfull{\tastate}
                       {\idxj}
                       {\numof}
                       {\control}
                       {\cha}
                       {\sastateset'_\branch}
                       {\sastateset'_1,
                        \sastateset_2,
                        \ldots,
                        \sastateset_{\midord-1},
                        \sastateset_\midord \cup \sastateset_\branch,
                        \sastateset_{\midord+1},
                        \ldots,
                        \sastateset_\maxord}
        \]
        to $\ta_{\idxi+1}$ when $\midord > 1$, and
        $\tatranfull{\tastate}
                    {\idxj}
                    {\numof}
                    {\control}
                    {\cha}
                    {\sastateset'_\branch}
                    {\sastateset'_1 \cup \sastateset_\branch,
                     \sastateset_2,
                     \ldots,
                     \sastateset_\maxord}$
        when $\midord = 1$.

    \item
        For each rule
        $\stsop{\control}{\spush{\midord}}{\control'} \in \rules$
        and
        $\tatranfull{\tastate}
                    {\idxj}
                    {\numof}
                    {\control'}
                    {\cha}
                    {\sastateset_\branch}
                    {\sastateset_1, \ldots, \sastateset_\maxord}$
        and
        $\satranfull{\sastateset_\midord}
                    {\cha}
                    {\sastateset'_\branch}
                    {\sastateset'_1,
                     \ldots,
                     \sastateset'_\midord}$
        in $\ta_\idxi$, add to
        $\ta_{\idxi+1}$
        \[
            \tatranfull{\tastate}
                       {\idxj}
                       {\numof}
                       {\control}
                       {\cha}
                       {\sastateset_\branch \cup \sastateset'_\branch}
                       {\sastateset_1 \cup \sastateset'_1,
                        \ldots,
                        \sastateset_{\midord-1} \cup \sastateset'_{\midord-1},
                        \sastateset'_\midord,
                        \sastateset_{\midord+1},
                        \ldots,
                        \sastateset_\maxord} \ .
        \]

    \item
        For each rule
        $\stsop{\control}{\spop{\midord}}{\control'} \in \rules$
        and
        $\tatranfullk{\tastate}
                     {\idxj}
                     {\numof}
                     {\control'}
                     {\sastate_\midord}
                     {\sastateset_{\midord+1},
                      \ldots,
                      \sastateset_\maxord}$
        in $\ta_\idxi$, add to
        $\ta_{\idxi+1}$
        for each
        $\cha \in \salphabet$
        \[
            \tatranfull{\tastate}
                       {\idxj}
                       {\numof}
                       {\control}
                       {\cha}
                       {\emptyset}
                       {\emptyset,
                        \ldots,
                        \emptyset,
                        \set{\sastate_\midord},
                        \sastateset_{\midord+1},
                        \ldots,
                        \sastateset_\maxord} \ .
        \]

    \item
        For each rule
        $\stsop{\control}{\scollapse{\midord}}{\control'} \in \rules$
        and
        $\tatranfullk{\tastate}
                     {\idxj}
                     {\numof}
                     {\control'}
                     {\sastate_\midord}
                     {\sastateset_{\midord+1},
                      \ldots,
                      \sastateset_\maxord}$
        in $\ta_\idxi$, add to
        $\ta_{\idxi+1}$
        for each
        $\cha \in \salphabet$
        \[
            \tatranfull{\tastate}
                       {\idxj}
                       {\numof}
                       {\control}
                       {\cha}
                       {\set{\sastate_\midord}}
                       {\emptyset,
                        \ldots,
                        \emptyset,
                        \sastateset_{\midord+1},
                        \ldots,
                        \sastateset_\maxord} \ .
        \]

    \item
        For each rule
        $\stpush{\control}{\control_1, \ldots, \control_\numof} \in \rules$
        and
        $\tatranfull{\tastate}
                    {\idxj}
                    {\numof'}
                    {\tastate'}
                    {\cha}
                    {\sastateset_\branch}
                    {\sastateset_1, \ldots, \sastateset_\maxord}$
        and
        \[
            \tatranfull{\tastate'}
                       {1}
                       {\numof}
                       {\control_1}
                       {\cha}
                       {\sastateset^1_\branch}
                       {\sastateset^1_1, \ldots, \sastateset^1_\maxord},
            \ldots,
            \tatranfull{\tastate'}
                       {\numof}
                       {\numof}
                       {\control_2}
                       {\cha}
                       {\sastateset^2_\branch}
                       {\sastateset^2_1, \ldots, \sastateset^2_\maxord}
        \]
        in $\ta_\idxi$, add to
        $\ta_{\idxi+1}$
        \[
            \tatranfull{\tastate}
                       {\idxj}
                       {\numof'}
                       {\control}
                       {\cha}
                       {\sastateset'_\branch}
                       {\sastateset'_1, \ldots, \sastateset'_\maxord}
        \]
        where
        $\sastateset'_\branch =
         \sastateset_\branch
         \cup
         \sastateset^1_\branch
         \cup
         \cdots
         \cup
         \sastateset^\numof_\branch$
        and for all $\midord$, we have
        $\sastateset'_\midord =
         \sastateset_1
         \cup
         \sastateset^1_\midord
         \cup
         \cdots
         \cup
         \sastateset^\numof_\midord$.

    \item
        For each rule
        $\stpop{\control_1, \ldots, \control_\numof}{\control} \in \rules$
        and
        $\cha_1, \ldots, \cha_\numof \in \salphabet$
        add to $\ta_{\idxi+1}$ the transitions
        $\tatranfull{\control}
                    {\idxj}
                    {\numof}
                    {\control_\idxj}
                    {\cha_\idxj}
                    {\emptyset}
                    {\emptyset, \ldots, \emptyset}$
        for each $1 \leq \idxj \leq \numof$.
\end{itemize}

\subsubsection{Intuition of the Algorithm}

Since rules may only be applied to the leaves of the tree, the algorithm works
by introducing new initial transitions that are derived from existing initial
transitions.  Consider a tree $\tree$ with a leaf node $\tnode$ labelled by
$\brac{\annot{\chb}{\stack_\branch} \scomp{1} \stack}$.
Suppose this tree were already accepted by the automaton, and the
initial transition
$\tatranfull{\tastate}
            {\idxi}
            {\numof}
            {\control}
            {\chb}
            {\sastateset_\branch}
            {\sastateset_1, \ldots, \sastateset_\maxord}$
is applied to $\tnode$.

If we had a rule
$\stsop{\control'}{\srew{\cha}{\chb}}{\control}$
then we could apply this rule to a tree $\tree'$ that is identical to $\tree$
except $\tnode$ is labelled by
$\brac{\annot{\cha}{\stack_\branch} \scomp{1} \stack}$.
After the application, we would obtain $\tree$.  Thus, if $\tree$ is accepted by
the automaton, then $\tree'$ should be accepted.

The saturation algorithm will derive from the above rule and transition a new
transition
$\tatranfull{\tastate}
            {\idxi}
            {\numof}
            {\control'}
            {\chb}
            {\sastateset_\branch}
            {\sastateset_1, \ldots, \sastateset_\maxord}$.
This transition simply changes the control state and top character of the stack.
Thus, we can substitute this transition into the accepting run of $\tree$ to
build an accepting run of $\tree'$.

For a rule
$\stpop{\control_1}{\control}$
we would introduce a transition
$\tatranfull{\control}
            {1}
            {1}
            {\chb}
            {\control_1}
            {\emptyset}
            {\emptyset, \ldots, \emptyset}$.
We can add this transition to any accepting run of a tree with a leaf with
control state $\control$ and it will have the effect of adding a new node with
control state $\control_1$.  Since we can obtain the original tree by applying
the rule, the extended tree should also be accepted.  The intuition is similar
for the
$\spop{\midord}$
and
$\scollapse{\midord}$
operations.

To understand the intuition for the
$\spush{\midord}$,
$\scpush{\midord}$
and
$\stpush{\control}{\control_1, \ldots, \control_\numof}$
rules, one must observe that these rules, applied backwards, have the effect of
replacing multiple copies of identical stacks with a single stack.  Thus, the
new transitions accept the intersection of the stacks that could have been
accepted by multiple previous transitions: taking the union of two sets of
automaton states means that the intersection of the language must be accepted.

\subsubsection*{Correctness}

We have the following
\easyicalp{%
    property.
}{%
    property, proved in Appendix~\ref{sec:correctness}.
}

\begin{namedproperty}{prop:sat-correct}{Correctness of Saturation}
    Given an order-$\maxord$ GASTRS, saturation runs in $\maxord$-EXPTIME and
    builds an automaton $\ta$ such that
    $\langof{\ta} = \prestar{\gstrs}{\ta_0}$.
\end{namedproperty}
\easyicalp{
    \begin{proof}
        The proof of completeness is given in Lemma~\ref{lem:completeness} and
        soundness is given in Lemma~\ref{lem:soundness}.

        The complexity is derived as follows.  We add at most one transition of the
        form
        $\tatran{\tastate}{\idxi}{\numof}{\control}{\sastate}$
        for each $\tastate$, $\idxi$, $\numof$ and $\control$.  Hence we add at most
        a polynomial number of transitions to $\tadelta$.

        Thus, to $\sadelta_\maxord$ we have a polynomial number of states.  We add at
        most one transition of the form
        $\sastate \satran{\sastate'} \sastateset$
        for each $\sastate$ and set of states $\sastateset$.  Thus we have at most
        an exponential number of transitions in $\sadelta_\maxord$.

        Thus, in $\sastates_\midord$ we have a number of states bounded by a tower
        of exponentials of height
        $(\maxord - \midord)$.
        Since we add at most one transition of the form
        $\sastate \satran{\sastate'} \sastateset$
        for each $\sastate$ and $\sastateset$ we have a number of transitions
        bounded by a tower of exponentials of height
        $(\maxord - \midord + 1)$
        giving the number of states in
        $\sastates_{\midord-1}$.

        Thus, at order-$1$ the number of new transitions is bounded by a tower of
        height $\maxord$, giving the $\maxord$-EXPTIME complexity.
    \end{proof}
}{
    Completeness is easily proved by induction over the length of the run to a
    target configuration.  Soundness generalises the notion, used in ICALP 2012,
    of a ``sound'' stack automaton to trees and requires some non-trivial
    definitions to handle the tree structures.  Finally, the complexity follows
    from the fact that only a polynomial number of states can be added at
    order-$\maxord$, which, due to alternation, blows up by one exponential for
    each level of nesting.
}

\section{Context Bounding}
\label{sec:context-bounding}

\easyicalp{
    In the model discussed so far, communication between different nodes of the tree
    had to be done locally (i.e. from parent to child, via the destruction of
    nodes).  We show that the saturation algorithm can be extended to allow a
    bounded amount of communication between distant nodes of the tree without
    destroying the nodes.

    We begin by defining an extension of our model with global state.  We then show
    that being able to compute $\prestar{\gstrs}{\ta_0}$ can easily be adapted to
    allow a bounded number of global state changes.

    \subsection{GASTRS with Global State}

    \begin{definition}
          [Order-$\maxord$ Ground Annotatee Stack Tree Rewrite Systems with Global State]
        An \emph{order-$\maxord$ ground annotated stack tree rewrite system (GASTRS)
        with global state} $\gstrs$ is a tuple
        $\tup{\salphabet, \controls, \globals, \rules}$
        where $\salphabet$ is a finite stack alphabet, $\controls$ is a finite set
        of control states, $\globals$ is a finite set of global states, and
        $\rules
         \subset
         \globals
         \times
         \stacktreeops{\maxord}{\salphabet}{\controls} \times \globals$
        is a finite set of operations.
    \end{definition}

    A configuration of an order-$\maxord$ GASTRS with global state is a pair
    $\config{\gstate}{\tree}$
    where
    $\gstate \in \globals$
    and $\tree$ is an order-$\maxord$ annotated stack tree over alphabet
    $\salphabet$.  We have a transition
    $\config{\gstate}{\tree} \tran \config{\gstate'}{\tree'}$
    whenever there is some
    $\tup{\gstate, \strule, \gstate'} \in \rules$
    and
    $\tree' \in \ap{\strule}{\tree}$.
    We write
    $\tree \reaches \tree'$ when there is a run
    $\tree
     =
     \tree_0 \tran \cdots \tran \tree_\numof
     =
     \tree'$.
}{}

\subsection{The Context-Bounded Reachability Problem}

The context-bounded reachability problem is to compute the set of configurations
from which there is a run to some target set of configurations, and moreover,
the global state is only changed at most $\lifespan$ times, where $\lifespan$ is
some bound given as part of the input.

\begin{definition}[Global Context-Bounded Backwards Reachability Problem]
    Given a GASTRS with global state $\gstrs$, and a stack tree automaton
    $\ta^0_\gstate$ for each
    $\gstate \in \globals$,
    and a bound $\lifespan$, the \emph{global context-bounded backwards
    reachability problem} is to compute a stack tree automaton
    $\ta_\gstate$ for each
    $\gstate \in \globals$,
    such that
    $\tree \in \langof{\ta_\gstate}$
    iff
    there is a run
    \[
        \config{\gstate}{\tree}
        =
        \config{\gstate_0}{\tree_0}
        \tran \cdots \tran
        \config{\gstate_\numof}{\tree_\numof}
        =
        \config{\gstate'}{\tree'}
    \]
    with
    $\tree' \in \langof{\ta^0_{\gstate'}}$
    and there are at most $\lifespan$ transitions during the run such that
    $\gstate_\idxi \neq \gstate_{\idxi+1}$.
\end{definition}

\subsection{Decidability of Context-Bounded Reachability}

Since the number of global state changes is bounded, the sequence of global
state changes for any run witnessing context-bounded reachability is of the form
$\gstate_0, \ldots, \gstate_\numof$
where
$\numof \leq \lifespan$.
Let $\gstateseqs$ be the set of such sequences.

Suppose we could compute for each such sequence
$\gstateseq = \gstate_0, \ldots, \gstate_\numof$
an automaton
$\ta_\gstateseq$
such that
$\tree \in \langof{\ta_\gstateseq}$
iff there is a run from
$\config{\gstate_0}{\tree}$
to
$\config{\gstate_\numof}{\tree'}$
with
$\tree' \in \langof{\ta_{\gstate_\numof}}$
where the sequence of global states appearing on the run is $\gstateseq$.  We
could then compute an answer to the global context-bounded backwards
reachability problem by taking
\[
    \ta_\gstate
    =
    \bigcup\limits_{\gstate\gstateseq \in \gstateseqs}
        \ta_{\gstate\gstateseq} \ .
\]

To compute $\ta_\gstateseq$ we first make the simplifying assumption (without
loss of generality) that for each
$\gstate \neq \gstate'$
there is a unique
$\tup{\gstate, \strule, \gstate'} \in \rules$
and moreover
$\strule = \stsop{\control}{\srew{\cha}{\chb}}{\control'}$.
Furthermore, for all
$\gstate \in \globals$
we define
$\gstrs_\gstate = \tup{\salphabet, \controls, \rules_\gstate}$
where
\[
    \rules_\gstate = \setcomp{\strule}
                             {\tup{\gstate, \strule, \gstate} \in \rules} \ .
\]

We compute $\ta_\gstateseq$ by backwards induction.  Initially, when
$\gstateseq = \gstate$
we compute
\[
    \ta_\gstateseq = \prestar{\gstrs_\gstate}{\ta_\gstate} \ .
\]
It is immediate to see that
$\ta_\gstateseq$
is correct.  Now, assume we have
$\gstateseq = \gstate\gstateseq'$
and we have already computed
$\ta_{\gstateseq'}$,
we show how to compute
$\ta_\gstateseq$.

The first step is to compute
$\ta'_\gstateseq$
such that
$\tree \in \langof{\ta'_\gstateseq}$
iff
$\config{\gstate}{\tree} \tran \config{\gstate'}{\tree'}$
where
$\gstate'$ is the first state of $\gstateseq'$ and
$\tree' \in \langof{\ta_{\gstateseq'}}$.
That is,
$\ta'_\gstateseq$
accepts all trees from which we can change the current global state to
$\gstate'$.  That is, by a single application of the unique rule
$\tup{\gstate, \strule, \gstate'}$.
Once we have computed this automaton we need simply build
\[
    \ta_\gstateseq = \prestar{\gstrs_\gstate}{\ta'_\gstateseq}
\]
and we are done.

We first define
$\ta''_\gstateseq$
which is a version of
$\ta_{\gstateseq'}$
that has been prepared for a single application of
$\tup{\gstate, \strule, \gstate'}$.
From this we compute
$\ta_\gstateseq$.

The strategy for building
$\ta''_\gstateseq$
is to mark in the states which child, if any, of the node has the global state
change rule applied to its subtree.  At each level of the tree, this marking
information enforces that only one subtree contains the application.  Thus, when
the root is reached, we know there is only one application in the whole tree.
Note, this automaton does not contain any transitions corresponding to
the actual application of the global change rule.  This is added afterwards to
compute
$\ta_\gstateseq$.
Thus, if
\[
    \ta_\gstateseq
    =
    \tup{
        \tastates, \sastates_\maxord, \ldots, \sastates_1,
        \salphabet,
        \tadelta, \sadelta_\maxord, \ldots, \sadelta_1,
        \controls,
        \tafinals', \safinals_\maxord,\ldots,\safinals_1
    }
\]
then
\[
    \ta''_\gstateseq
    =
    \tup{
        \tastates', \sastates_\maxord, \ldots, \sastates_1,
        \salphabet,
        \tadelta', \sadelta_\maxord, \ldots, \sadelta_1,
        \controls,
        \tafinals', \safinals_\maxord,\ldots,\safinals_1
    }
\]
where, letting $\numof$ be the maximum number of children permitted by any
transition of $\ta_\gstateseq$,
\[
    \tastates' = \controls \cup \tastates \times \set{0, \ldots, \numof}
    \quad
    \text{ and }
    \quad
    \tafinals' = \setcomp{\tup{\tastatef, \idxi}}
                         {\tastatef \in \tafinals \land 0 < \idxi \leq \numof}
\]
and we define
\[
    \begin{array}{rcl}
        \tadelta'
        &=&
        \tadeltainit \cup \tadeltanoapp \cup \tadeltapass
        \\
        \\
        \tadeltainit
        &=&
        \setcomp{\tatran{\tup{\tastate, 0}}{\idxi}{\numof}{\control}{\sastate}}
                {\tatran{\tastate}{\idxi}{\numof}{\control}{\sastate} \in \tadelta}
        \cup
        \\
        & &
        \setcomp{\tatran{\tup{\tastate, \idxj}}{\idxi}{\numof}{\control}{\sastate}}
                {\tatran{\tastate}{\idxi}{\numof}{\control}{\sastate} \in \tadelta
                 \land
                 \idxi \neq \idxj}
        \\
        \\
        \tadeltanoapp
        &=&
        \setcomp{\tatran{\tup{\tastate, 0}}
                        {\idxi}
                        {\numof}
                        {\tup{\tastate', 0}}
                        {\sastate}}
                {\tatran{\tastate}{\idxi}{\numof}{\tastate'}{\sastate} \in \tadelta}
        \\
        \\
        \tadeltapass
        &=&
        \setcomp{\tatran{\tup{\tastate, \idxi}}
                        {\idxi}
                        {\numof}
                        {\tup{\tastate, \idxj}}
                        {\sastate}}
                {\tatran{\tastate}{\idxi}{\numof}{\tastate'}{\sastate} \in \tadelta}
        \cup \\
        & &
        \setcomp{\tatran{\tup{\tastate, \idxj}}
                        {\idxi}
                        {\numof}
                        {\tup{\tastate, 0}}
                        {\sastate}}
                {\tatran{\tastate}{\idxi}{\numof}{\tastate'}{\sastate} \in \tadelta
                 \land
                 \idxi \neq \idxj} \ .
    \end{array}
\]
In the above $\tadeltainit$ has two kinds of transitions.  The first set are the
initial transitions for the nodes to which the rewrite rule is not applied
(indicated by the $0$).  The second set are the rules where the rewrite rule is
applied at the $\idxj$th sibling of the $\idxi$th child.  Next $\tadeltanoapp$ are
the transitions for subtrees which have not been marked as containing the
application.  Finally, $\tadeltapass$ propagates information about where the
application actually occurred up the tree.  The first set of transitions in
$\tadeltapass$ are used when the $\idxi$th child contains the application (hence
it labels the parent with the information that the $\idxi$th child contains the
application).  The second set of transitions guess that the $\idxj$th sibling
contains the application.  Thus, at any node, at most one child subtree may
contain the application.  The set of final states enforce that the application
has occurred in some child.

To compute
$\ta'_\gstateseq$,
letting
$\strule = \stsop{\control}{\srew{\cha}{\chb}}{\control'}$
be the operation on the global state change, we add to
$\ta''_\gstateseq$
a transition
\[
    \tatranfull{\tup{\tastate, \idxi}}
               {\idxi}
               {\numof}
               {\control}
               {\cha}
               {\sastateset_\branch}
               {\sastateset_1, \ldots, \sastateset_\maxord}
\]
for each
\[
    \tatranfull{\tastate}
               {\idxi}
               {\numof}
               {\control'}
               {\chb}
               {\sastateset_\branch}
               {\sastateset_1, \ldots, \sastateset_\maxord}
\]
in $\ta_{\gstateseq'}$.

We remark that, as defined,
$\ta_\gstateseq$
does not satisfy the prerequisites
of the saturation algorithm, since initial states reading stacks might have
incoming transitions, and, moreover, an initial state may label more than one
transition.  We can convert
$\ta_\gstateseq$
to the correct format using the automata manipulations in
Appendix~\ref{sec:aut-particulars}.

\begin{lemma}
    We have
    $\tree \in \langof{\ta'_\gstateseq}$
    iff
    $\config{\gstate}{\tree} \tran \config{\gstate'}{\tree'}$
    via a single application of the transition
    $\tup{\gstate, \strule, \gstate'}$
    and
    $\tree' \in \langof{\ta_{\gstateseq'}}$.
\end{lemma}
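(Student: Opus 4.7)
The plan is to prove the two directions of the biconditional by translating runs between $\ta'_\gstateseq$, the intermediate marked automaton $\ta''_\gstateseq$, and $\ta_{\gstateseq'}$, with the marking in the states of $\ta''_\gstateseq$ serving as a witness for exactly where the single application of $\tup{\gstate, \strule, \gstate'}$ occurs in the tree.

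First I would establish a correctness lemma for $\ta''_\gstateseq$ of the form: an accepting run of $\ta''_\gstateseq$ on a tree $\tree$ exists iff there is an accepting run of $\ta_{\gstateseq'}$ on $\tree$ together with a distinguished leaf $\tnode^\star$ of $\tree$. The left-to-right direction is obtained by projecting away the second component of each state $\tup{\tastate, \idxj}$: the transitions in $\tadeltainit$, $\tadeltanoapp$, $\tadeltapass$ are in direct bijection with transitions of $\tadelta$ once the marker is dropped, so one gets a valid run of $\ta_{\gstateseq'}$, and the final-state condition $\tup{\tastatef, \idxi}$ with $\idxi > 0$, together with the propagation rules of $\tadeltapass$, forces the existence of a unique root-to-leaf path along which the marker is nonzero, identifying $\tnode^\star$. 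Conversely, given a run of $\ta_{\gstateseq'}$ and a leaf $\tnode^\star$, one labels each node $\tnode$ on the path to $\tnode^\star$ with the index of the child through which the path continues and every other node with $0$; the three families $\tadeltainit$, $\tadeltanoapp$, $\tadeltapass$ provide exactly the transitions needed at the leaf, at off-path nodes, and at on-path nodes respectively.

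Next I would handle the addition that turns $\ta''_\gstateseq$ into $\ta'_\gstateseq$. The new initial transitions are constructed from the transitions of $\ta_{\gstateseq'}$ that read the right-hand-side character $\chb$ at a leaf labelled $\control'$, but now read $\cha$ at a leaf labelled $\control$; crucially they are installed only at states of the form $\tup{\tastate, \idxi}$ with $\idxi$ indicating that \emph{this} leaf is the marked one. Thus, an accepting run of $\ta'_\gstateseq$ on $\tree$ decomposes as a marked run of $\ta''_\gstateseq$ on $\tree$ where the unique marked leaf $\tnode^\star$ is read by one of the new transitions. Rewriting this leaf according to $\strule = \stsop{\control}{\srew{\cha}{\chb}}{\control'}$ produces a tree $\tree'$ for which the same run, with the new transition replaced by the matching transition of $\ta_{\gstateseq'}$ at $\tnode^\star$, becomes an accepting run on $\tree'$, giving $\tree' \in \langof{\ta_{\gstateseq'}}$ and $\config{\gstate}{\tree} \tran \config{\gstate'}{\tree'}$. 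Conversely, given such a run and $\tree'$, the leaf $\tnode^\star$ where the rule was applied plays the role of the distinguished leaf, and the new transition at $\tnode^\star$ exactly covers the contribution of the original $\chb$-transition of $\ta_{\gstateseq'}$, so the marked-run lemma above yields an accepting run of $\ta'_\gstateseq$ on $\tree$.

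The main obstacle I anticipate is purely bookkeeping: verifying that $\tadeltapass$ really enforces the uniqueness of the marked path (so that exactly one leaf receives the new transition, corresponding to the single application of $\strule$), and that the semantics of the stack-automaton states $\sastate$ is untouched by the new construction because $\sastates_\maxord, \ldots, \sastates_1$ and $\sadelta_\maxord, \ldots, \sadelta_1$ are copied over verbatim. The latter is important because the new transition at $\tnode^\star$ reuses the same stack-side guards $\sastateset_\branch, \sastateset_1, \ldots, \sastateset_\maxord$ as the $\chb$-transition it was derived from, so the annotated-stack acceptance condition at $\tnode^\star$ in $\ta'_\gstateseq$ after reading $\cha$ is identical to the acceptance condition at $\tnode^\star$ in $\ta_{\gstateseq'}$ after reading $\chb$; since $\strule$ only rewrites the topmost symbol and leaves the remaining stack (and its annotations) untouched, this equality of acceptance conditions is precisely what makes the two runs interchange.
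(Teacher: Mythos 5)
Your proposal is correct and follows essentially the same route as the paper's proof: uniqueness of the marked root-to-leaf path is forced by the final states and the structure of $\tadeltapass$, the marker is projected away to recover a run of $\ta_{\gstateseq'}$ on the rewritten tree, and conversely the path to the rewritten leaf is marked and the new $\cha$-transition is substituted for the original $\chb$-transition. The only caveat is that your intermediate lemma should not be phrased as ``an accepting run of $\ta''_\gstateseq$ exists'' --- $\ta''_\gstateseq$ by itself accepts nothing, since no transition reads the marked leaf until the new transitions are added --- but your second paragraph makes clear you intend runs in which the distinguished leaf is read by the added transition, which matches the paper.
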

\begin{proof}
    First, assume
    $\tree \in \langof{\ta'_\gstateseq}$.
    We argue that there is exactly one leaf
    $\tleaf{\tree}{\idxi}$
    read by a transition
    $\tatran{\tup{\tastate, \idxi}}
            {\idxi}
            {\numof}
            {\control}
            {\sastate}$
    and all other leaves are read by some
    $\tatran{\tup{\tastate, 0}}
            {\idxi}
            {\numof}
            {\control}
            {\sastate}$
    or
    $\tatran{\tup{\tastate, \idxj}}
            {\idxi}
            {\numof}
            {\control}
            {\sastate}$
    with $\idxj \neq \idxi$.

    If there is no such $\tleaf{\tree}{\idxi}$ then all leaf nodes are read by
    some
    $\tatran{\tup{\tastate, 0}}
            {\idxi}
            {\numof}
            {\control}
            {\sastate}$.
    Thus, all parents of the leaf nodes are labelled by
    $\tup{\tastate, 0}$.
    Thus, take any node $\tnode$ and assume its children are labelled by some
    $\tup{\tastate, 0}$.
    It must be the case that $\tnode$ is also labelled by some
    $\tup{\tastate, 0}$
    since otherwise it is labelled
    $\tup{\tastate, \idxi}$
    and its $\idxi$th child must be labelled by some
    $\tup{\tastate, \idxj}$
    with
    $\idxj > 0$, which is a contradiction.  Hence, the accepting state of the
    run must also be some
    $\tup{\tastatef, 0}$
    which is not possible.

    If there are two or more leaves labelled by some
    $\tup{\tastate, \idxi}$
    with
    $\idxi > 0$
    then
    each ancestor must also be labelled by some
    $\tup{\tastate, \idxi}$
    with
    $\idxi > 0$.
    Take the nearest common ancestor $\tnode$ and suppose it is labelled
    $\tup{\tastate, \idxi}$.
    However, since it has two children labelled with non-zero second components,
    we must have used a transition
    $\tatran{\tup{\tastate, \idxi}}
            {\idxj}
            {\numof}
            {\tup{\tastate', \idxj'}}
            {\sastate}$
    which, by definition, cannot exist.

    Hence, we have only one leaf
    $\tleaf{\tree}{\idxi}$
    where
    \[
        \tatranfull{\tup{\tastate, \idxi}}
                   {\idxi}
                   {\numof}
                   {\control}
                   {\cha}
                   {\sastateset_\branch}
                   {\sastateset_1, \ldots, \sastateset_\maxord}
    \]
    is used.  Obtain $\tree'$ by applying
    $\stsop{\control}{\srew{\cha}{\chb}}{\control'}$
    at this leaf.  We build an accepting run of
    $\ta_{\gstateseq'}$
    by taking the run of
    $\ta'_\gstateseq$
    over $\tree$, projecting out the second component of each label, and
    replacing the transition used at
    $\tleaf{\tree}{\idxi}$
    with
    \[
        \tatranfull{\tastate}
                   {\idxi}
                   {\numof}
                   {\control'}
                   {\chb}
                   {\sastateset_\branch}
                   {\sastateset_1, \ldots, \sastateset_\maxord} \ .
    \]
    Hence, we are done.

    In the other direction take $\tree$ and $\tree'$ obtained by applying
    $\stsop{\control}{\srew{\cha}{\chb}}{\control'}$
    at leaf
    $\tleaf{\tree}{\idxi}$.
    We take the accepting run of
    $\ta_{\gstateseq'}$
    over $\tree'$
    and build an accepting run of
    $\ta'_\gstateseq$
    over $\tree$.  Let
    \[
        \tatranfull{\tastate}
                   {\idxi}
                   {\numof}
                   {\control'}
                   {\chb}
                   {\sastateset_\branch}
                   {\sastateset_1, \ldots, \sastateset_\maxord} \ .
    \]
    be the transition used at
    $\tleaf{\tree}{\idxi}$.  We replace it with
    \[
        \tatranfull{\tup{\tastate, \idxi}}
                   {\idxi}
                   {\numof}
                   {\control}
                   {\cha}
                   {\sastateset_\branch}
                   {\sastateset_1, \ldots, \sastateset_\maxord} \ .
    \]
    Starting from above the root node, let the $\idxj$th child be the first on the
    path to
    $\tleaf{\tree}{\idxi}$
    (the root node is the $1$st child of ``above the root node'').
    For all children except the $\idxj$th, take the
    transition
    $\tatran{\tastate}{\idxj'}{\numof}{\tastate'}{\sastate}$
    used in the run over $\tree'$ and replace it with
    $\tatran{\tup{\tastate, \idxj}}{\idxj'}{\numof}{\tup{\tastate', 0}}{\sastate}$.
    The remainder of the run in the descendents of these children requires us to
    use
    $\tatran{\tup{\tastate, 0}}{\idxi'}{\numof}{\tup{\tastate', 0}}{\sastate}$
    or
    $\tatran{\tup{\tastate, 0}}{\idxi'}{\numof}{\tastate}{\sastate}$
    instead of
    $\tatran{\tastate}{\idxi'}{\numof}{\tastate'}{\sastate}$.

    For the $\idxj$th child, we use instead of
    $\tatran{\tastate}{\idxj}{\numof}{\tastate'}{\sastate}$.
    the transition
    $\tatran{\tup{\tastate, \idxj}}{\idxj}{\numof}{\tup{\tastate', \idxj'}}{\sastate}$
    when the $\idxj'$th child of this child leads to
    $\tleaf{\tree}{\idxi}$
    or
    the previously identified transition when the $\idxj'$th child of this child
    is the leaf.

    We repeat the routine above until we reach
    $\tleaf{\tree}{\idxi}$,
    at which point we've constructed an accepting run of
    $\ta'_\gstateseq$
    over $\tree$.
\end{proof}

By iterating the above procedure, we obtain
\easyicalp{%
    our result.

    \begin{theorem}[Context-Bounded Reachability]
        The global context-bounded backwards reachability problem for GASTRS with
        global state is decidable.
    \end{theorem}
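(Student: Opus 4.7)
The plan is to iterate the construction described in the preceding subsection. Given the bound $\lifespan$, the set $\gstateseqs$ of global-state sequences of length at most $\lifespan+1$ is finite, so it suffices to compute $\ta_\gstateseq$ for each $\gstateseq \in \gstateseqs$ and then take the union
\[
    \ta_\gstate = \bigcup_{\gstate\gstateseq \in \gstateseqs} \ta_{\gstate\gstateseq}
\]
for each starting global state $\gstate \in \globals$. This union is effectively computable because stack tree automata form an effective boolean algebra.

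I would define $\ta_\gstateseq$ by backwards induction on $|\gstateseq|$. For the base case $\gstateseq = \gstate$, I would set $\ta_\gstateseq = \prestar{\gstrs_\gstate}{\ta^0_\gstate}$, which is computable in $\maxord$-EXPTIME by Property~\ref{prop:sat-correct} applied to the single-global-state system $\gstrs_\gstate$. For the inductive case $\gstateseq = \gstate\gstateseq'$, assuming $\ta_{\gstateseq'}$ has been computed, I would construct $\ta'_\gstateseq$ via the intermediate automaton $\ta''_\gstateseq$ defined above, appealing to the Lemma to conclude that $\ta'_\gstateseq$ accepts exactly those $\tree$ for which a single application of the unique rule $\tup{\gstate, \strule, \gstate'}$ (with $\gstate'$ the head of $\gstateseq'$) leads to some $\tree' \in \langof{\ta_{\gstateseq'}}$. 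Then I would set $\ta_\gstateseq = \prestar{\gstrs_\gstate}{\ta'_\gstateseq}$, again by Property~\ref{prop:sat-correct}. Correctness follows by induction on $|\gstateseq|$: any witnessing run decomposes as a $\gstrs_\gstate$-prefix (no global change), followed by the unique global-change application, followed by a $\gstateseq'$-witnessing suffix, which is exactly what the construction captures; unioning over $\gstateseqs$ then accounts for every run with at most $\lifespan$ global-state changes.

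The main obstacle is the technical one already flagged immediately before the Lemma: as constructed, $\ta'_\gstateseq$ does not satisfy the initial-state conventions required by the saturation algorithm, since its initial states may have incoming transitions and may label more than one transition. I would resolve this by invoking the automata-manipulation constructions from Appendix~\ref{sec:aut-particulars} to normalise $\ta'_\gstateseq$ into the required format before feeding it into $\prestar{\gstrs_\gstate}{-}$; since these manipulations preserve the accepted language, correctness is not affected. Termination is immediate, as $|\gstateseqs|$ is bounded by $|\globals|^{\lifespan+1}$ and each invocation of saturation terminates by Property~\ref{prop:sat-correct}.
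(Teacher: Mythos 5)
Your proposal follows the paper's own argument essentially verbatim: backwards induction over the finite set of global-state sequences of bounded length, alternating $\prestar{\gstrs_\gstate}{-}$ with the single-application automaton $\ta'_\gstateseq$ justified by the Lemma, normalising via the Appendix constructions, and unioning over all sequences. It is correct (and your use of $\ta^0_\gstate$ in the base case is the right reading of the paper's $\ta_\gstate$ there), so there is nothing further to add.
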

}{%
    \reftheorem{thm:context-bounded}.
}

\section{Conclusions and Future Work}

We gave a saturation algorithm for annotated stack trees -- a generalisation of
annotated pushdown systems with the ability to fork and join threads.  We build
on the saturation method implemented by the \cshore tool.  We would like to
implement this work.  We may also investigate higher-order versions of senescent
ground tree rewrite systems~\cite{H14}, which generalises
scope-bounding~\cite{lTN11} to trees.

    \bibliographystyle{plain}
    \bibliography{../common/references}

    \appendix

\section{Particulars of Annotated Stack Tree Automata}
\label{sec:aut-particulars}

Here we discuss various particulars of our stack tree automata: the definition
of runs, the effective boolean algebra, membership, emptiness, transformations
to normal form, and comparisons with other possible stack tree automata definitions.

\subsection{Definition of Runs over Stacks}

We give a more formal definition of a run accepting a stack.  First we introduce
some notation.

For $\maxord \geq \midord > 1$, we write $\sastateset_1 \satran{\sastateset'}
\sastateset_2$ to denote an order-$\midord$ transition from a set of states
whenever $\sastateset_1 = \set{\sastate_1, \ldots, \sastate_\numof}$ and for
each $1 \leq \idxi \leq \numof$ we have $\sastate_\idxi \satran{\sastate'_\idxi}
\sastateset_\idxi$ and $\sastateset' = \set{\sastate'_1, \ldots,
\sastate'_\numof}$ and $\sastateset_2 = \bigcup_{1 \leq \idxi \leq \numof}
\sastateset_\idxi$.  The analogous notation at order-$1$ is a special case of
the short-form notation defined in Section~\ref{ssec:notations}.

Formally, fix an annotated stack tree automaton
\[
    \ta = \tup{
               \tastates, \sastates_\maxord,\ldots,\sastates_1,
               \salphabet,
               \tadelta, \sadelta_\maxord,\ldots,\sadelta_1,
               \controls,
               \tafinals, \safinals_\maxord,\ldots,\safinals_1
           }
\]
We say a node \emph{contains} a character if its exiting edge is labelled by the
character.  Recall the tree view of an annotated stack, an example of which is
given below.
\begin{center}
    \vspace{4ex}
    \begin{psmatrix}[nodealign=true,colsep=2ex,rowsep=2ex]
        \bnode{N1} && \bnode{N2} && \bnode{N3} &\pnode{N34}& \bnode{N4} &&
        \bnode{N5} && \bnode{N6} && \bnode{N7} &

        \bnode{N8} && \bnode{N9} && \bnode{N10} &\pnode{N1011}& \bnode{N11} &&
        \bnode{N12} && \bnode{N13} &

        \bnode{N14} &\pnode{N1415}& \bnode{N15} && \bnode{N16} && \bnode{N17} \\

        \psset{angle=-90,linearc=.2}
        \ncline{->}{N1}{N2}^{$\sopen{2}$}
        \ncline{->}{N2}{N3}^{$\sopen{1}$}
        \ncline{->}{N3}{N4}^{$\cha$}
        \ncbar{->}{N34}{N8}
        \ncline{->}{N4}{N5}^{$\chb$}
        \ncline{->}{N5}{N6}^{$\sclose{1}$}
        \ncline{->}{N6}{N7}^{$\sclose{2}$}

        \ncline{->}{N8}{N9}^{$\sopen{2}$}
        \ncline{->}{N9}{N10}^{$\sopen{1}$}
        \ncline{->}{N10}{N11}^{$\chc$}
        \ncbar{->}{N1011}{N14}
        \ncline{->}{N11}{N12}^{$\sclose{1}$}
        \ncline{->}{N12}{N13}^{$\sclose{2}$}

        \ncline{->}{N14}{N15}^{$\sopen{1}$}
        \ncline{->}{N15}{N16}^{$\chd$}
        \ncline{->}{N16}{N17}^{$\sclose{1}$}
    \end{psmatrix}
\end{center}

Some stack (tree) $\stack$ is accepted by $\ta$ from states
$\sastateset_0 \subseteq \sastates_\midord$
--- written
$\stack \in \slang{\sastateset_0}{\ta}$
--- whenever the nodes of the tree can be labelled by elements of
$\bigcup\limits_{1 \leq \midord' \leq \maxord}
    2^{\sastates_{\midord'}}$
such that
\begin{enumerate}
    \item
        $\sastateset_0$ is a subset of the label of the node containing the
        first
        $\sopen{\midord-1}$
        character of the word, or if
        $\midord = 1$,
        the first character $\cha \in \salphabet$, and

    \item
        for any node containing a character
        $\sopen{\midord'}$
        labelled by $\sastateset$, then for all
        $\sastate_1 \in \sastateset$,
        there exists some transition
        $\tup{\sastate_1, \sastate_2, \sastateset_1} \in \sadelta_{\midord'+1}$
        such that $\sastate_2$ appears in the label of the succeeding node and
        $\sastateset_1$ is a subset of the label of the node succeeding the
        matching
        $\sclose{\midord'}$
        character, and

    \item
        for any node containing a character
        $\sclose{\midord'}$,
        the label $\sastateset$ is a subset of
        $\safinals_{\midord'}$,
        and the final node of an order-$\midord$ stack is labelled by
        $\sastateset \subseteq \safinals_\midord$,
        and

    \item
        for any node containing a character
        $\cha \in \salphabet$,
        labelled by $\sastateset$, for all
        $\sastate' \in \sastateset$,
        there exists some transition
        $\tup{\sastate', \cha, \sastateset_\branch, \sastateset'}
            \in \sadelta_1$ such that $\sastateset_\branch$
        is a subset of the label of the node annotating $\cha$, and
        $\sastateset'$ is a subset of the label of the succeeding node.
\end{enumerate}

That is, a stack automaton is essentially a stack- and annotation-aware
alternating automaton, where annotations are treated as special cases of the
alternation.

\subsection{Effective Boolean Algebra}

In this section we prove the following.

\begin{proposition}
    Annotated stack tree automata form an effective boolean algebra.
\end{proposition}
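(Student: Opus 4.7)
The plan is to establish closure under union, intersection, and complement in turn. Union is immediate: given $\ta_1$ and $\ta_2$ with disjoint auxiliary state sets but shared control states $\controls$, take the union of all state sets, transition relations, and accepting sets; a tree is accepted iff it is accepted by one of the two originals.

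For intersection, I would carry out a product construction simultaneously at every level. Tree states become pairs $(\tastate_1, \tastate_2)$, with a transition $\tatran{(\tastate_1, \tastate_2)}{\idxi}{\numof}{(\tastate'_1, \tastate'_2)}{(\sastate_1, \sastate_2)}$ present exactly when both components appear in the respective automata. At each order $\midord \geq 2$, the product has transitions $(\sastate_1, \sastate_2) \satran{(\sastate'_1, \sastate'_2)} (\sastateset_1 \uplus \sastateset_2)$ whenever $\sastate_i \satran{\sastate'_i} \sastateset_i$ in $\ta_i$, so that the alternating constraints on the remaining stack inherit from both sides; the order-1 case is analogous, with the annotation branch sets combined by disjoint union. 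Accepting conditions require both components to be accepting. A structural induction on the stack tree shows that a stack is accepted from $(\sastate_1, \sastate_2)$ precisely when it is accepted from $\sastate_1$ in $\ta_1$ and from $\sastate_2$ in $\ta_2$.

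For complement, the principal difficulty, I would determinize the alternating stack component and then flip accepting conditions. Concretely, the new state set at order $\midord$ is $2^{\sastates_\midord}$, and a determinized transition $S \satran{S'} \mathcal{S}$ exists iff for every $\sastate \in S$ there is an original transition $\sastate \satran{\sastate'_\sastate} \sastateset_\sastate$ with $S' = \{\sastate'_\sastate : \sastate \in S\}$ and $\mathcal{S} = \bigcup_{\sastate \in S} \sastateset_\sastate$. At order 1 the annotation branch sets $\sastateset_\branch$ are combined by union in the same way, producing a single composite branch for each $(S, \cha)$. The bottom-up tree automaton is then determinized by the usual subset construction over tree states. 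Complement is obtained by flipping every acceptance condition: at order $\midord$ we accept iff the set reached at the end of each order-$\midord$ stack is \emph{not} a subset of $\safinals_\midord$, and dually at the tree level.

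The main obstacle is the cross-order nature of the annotation branches: a single order-1 transition may reference $\sastateset_\branch \subseteq \sastates_\midord$ for any $\midord$, so one cannot determinize and complement one order at a time and substitute a complement language for the annotation as a black box. The fix is to carry out the subset construction uniformly across all orders in one sweep, with branch transitions pointing directly to the new subset states at the appropriate order. Because each state space is a powerset of a fixed finite set, the construction terminates, and a joint induction on the nested stack structure (inner annotations first within the finite powerset lattice) shows that the flipped-acceptance automaton recognizes exactly the complement of $\langof{\ta}$.
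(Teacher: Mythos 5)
Your union construction matches the paper's, and your intersection is essentially the paper's argument with the stack-level product written out explicitly rather than delegated (as the paper does) to the known boolean closure of annotated stack automata from ICALP 2012; both of those parts are fine. The complementation, however, has two genuine gaps. First, the stack automata are \emph{alternating}: a transition $\sastate \satran{\sastate'} \sastateset$ obliges the remainder of the stack to be accepted from every state of $\sastateset$, and the order-$1$ transitions impose the same universal obligation on the annotation via $\sastateset_\branch$. The construction you describe --- ``for every $\sastate \in S$ choose some transition, collect the targets'' --- is the standard dealternation, and it produces a \emph{nondeterministic} automaton over $2^{\sastates_\midord}$: from a given $S$ there remain many choices of $(S', \mathcal{S})$, one per selection of transitions. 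Flipping the acceptance condition of a nondeterministic automaton does not complement it; you would need a genuinely deterministic automaton first, and the plain powerset of states does not provide one for alternating automata. The paper avoids this entirely by importing complementation of annotated stack automata as a black box from the ICALP 2012 construction.

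Second, at the tree level ``the usual subset construction'' is insufficient because the transitions $\tatran{\tastate}{\idxi}{\numof}{\tastate'}{\sastate}$ are guarded by stack languages. To certify that a parent label $\tastate$ is \emph{not} derivable via the $\idxi$th child, you must check that for every transition $\tatran{\tastate}{\idxi}{\numof}{\tastate_\idxj}{\sastate_\idxj}$ whose source label is available at that child, the child's stack is rejected from $\sastate_\idxj$ --- that is, accepted from $\comp{\sastate_1} \cap \cdots \cap \comp{\sastate_\numofl}$. This is precisely the heart of the paper's construction, whose states are tuples of subsets (one per child position, since siblings independently constrain the parent and must agree) and whose single stack guard per transition is an intersection, over all candidate parent labels, of unions or intersections-of-complements of the guards of the original transitions. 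Your sketch never explains how the determinized tree transition decides membership of the node's stack in the various guard languages, and without that the flipped acceptance condition does not characterize non-acceptance by $\ta$.
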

\begin{proof}
    This follows from Proposition~\ref{prop:aut-union},
    Proposition~\ref{prop:aut-intersect}, and Proposition~\ref{prop:aut-negate}
    below.
\end{proof}

\begin{proposition}
\label{prop:aut-union}
    Given two automata
    \[
        \ta = \tup{
                   \tastates, \sastates_\maxord,\ldots,\sastates_1,
                   \salphabet,
                   \tadelta, \sadelta_\maxord,\ldots,\sadelta_1,
                   \controls,
                   \tafinals, \safinals_\maxord,\ldots,\safinals_1
               }
    \]
    and
    \[
        \ta' = \tup{
                   \tastates', \sastates_\maxord',\ldots,\sastates_1',
                   \salphabet,
                   \tadelta', \sadelta_\maxord',\ldots,\sadelta_1',
                   \controls',
                   \tafinals', \safinals_\maxord',\ldots,\safinals_1'
               }
    \]
    there is an automaton $\ta''$ which recognises the union of the languages
of $\ta$ and $\ta'$.
\end{proposition}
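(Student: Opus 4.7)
The plan is a disjoint union construction, after a preliminary normalization. First, I would normalize each automaton so that its initial (control) states appear only as labels of leaves in any accepting run. This is achieved by introducing, for each control state $\control$ that is used as the right-hand side of some transition, a fresh non-initial copy $\control^\ast$ inheriting all outgoing transitions of $\control$, and redirecting every right-hand-side occurrence of $\control$ in a transition to $\control^\ast$. This transformation preserves the language and is precisely the kind of normal-form step to be discussed elsewhere in this appendix. I would then rename states so that the non-initial tree states and all stack states are pairwise disjoint, leaving at most the shared control states $\controls \cap \controls'$ in $\tastates \cap \tastates'$ and nothing in common among the $\sastates_\midord, \sastates_\midord'$.

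Then define $\ta''$ componentwise as the union: $\tastates'' = \tastates \cup \tastates'$, $\controls'' = \controls \cup \controls'$, $\tafinals'' = \tafinals \cup \tafinals'$, $\tadelta'' = \tadelta \cup \tadelta'$, and for each $\midord$, $\sastates_\midord'' = \sastates_\midord \cup \sastates_\midord'$, $\safinals_\midord'' = \safinals_\midord \cup \safinals_\midord'$, and $\sadelta_\midord'' = \sadelta_\midord \cup \sadelta_\midord'$. The inclusion $\langof{\ta} \cup \langof{\ta'} \subseteq \langof{\ta''}$ is immediate: any accepting run of $\ta$ or $\ta'$ on a tree $\tree$ lifts directly to an accepting run of $\ta''$ on $\tree$, since all of its tree and stack transitions and final states lie in the corresponding unions.

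For the reverse inclusion I would take an accepting run $\rho$ of $\ta''$ on $\tree$ and show it is in fact a run of just one of $\ta$, $\ta'$. The root final label $\tastatef$ lies in exactly one of $\tafinals, \tafinals'$ by renaming; WLOG $\tastatef \in \tafinals$, so the topmost transition $\tatran{\tastatef}{1}{1}{\tastate}{\sastate}$ must belong to $\tadelta$. Descending through $\rho$: any internal node carries by normalization a label in $\tastates \setminus \controls$, and since $(\tastates \setminus \controls) \cap \tastates' = \emptyset$, every transition with this label as left-hand side must lie in $\tadelta$; analogously, each stack state chain from $\sastate$ through $\sadelta_\maxord'', \ldots, \sadelta_1''$ stays within the $\ta$-components by disjointness. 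At leaves the labels are control states; if any leaf carried a control state in $\controls' \setminus \controls$, no $\tadelta$-transition could target it from its parent (whose label is in $\tastates$), contradicting the existence of $\rho$. Hence $\rho$ is an accepting run of $\ta$ and $\tree \in \langof{\ta}$.

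The main obstacle is the interaction with shared control states $\controls \cap \controls'$: without normalization, one could mix $\tadelta$- and $\tadelta'$-transitions at internal nodes labeled by such shared states, producing $\ta''$-runs that do not correspond to runs of either component automaton. The normalization step confines shared initial states to leaves, where the choice of which automaton's transition is applied is then forced consistently by the parent's label via the disjointness of $\tastates \setminus \controls$ and $\tastates' \setminus \controls'$, and this choice propagates unambiguously up to the root.
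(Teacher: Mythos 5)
Your proof takes essentially the same route as the paper: normalise so that control states have no incoming transitions (hence can label only leaves), form the componentwise disjoint union, and observe that disjointness of all non-shared states forces any accepting run of $\ta''$ to lie entirely within one component. One caution on the normalisation: as stated you redirect the \emph{right-hand-side} (child-position) occurrences of $\control$ to the fresh copy, which would prevent $\control$-labelled leaves from being read at all; the correct step, as in the paper's automata-transformations subsection, is to rename the transitions with $\control$ on the \emph{left} (the ones that would let $\control$ label an internal node) and then \emph{add}, rather than replace, child-position transitions targeting the fresh copy.
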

\begin{proof}
 Supposing $\ta$ and $\ta'$ are disjoint except for $\controls$ and no state
 $\control \in \controls$
 has any incoming transition, the automaton we construct is:
 \[
        \ta'' = \tup{\begin{array}{l}
                   \tastates \cup \tastates',
                   \\
                   \sastates_\maxord \cup \sastates_\maxord',
                   \ldots,
                   \sastates_1 \cup \sastates_1',
                   \\
                   \salphabet,
                   \\
                   \tadelta \cup \tadelta',
                   \sadelta_\maxord \cup \sadelta_\maxord',
                   \ldots,
                   \sadelta_1 \cup \sadelta_1',
                   \\
                   \controls,
                   \\
                   \tafinals \cup \tafinals',
                   \safinals_\maxord \cup \safinals_\maxord',
                   \ldots,
                   \safinals_1 \cup \safinals_1'
               \end{array}}
    \]

    Every run in $\ta$ (resp $\ta'$) is a run of $\ta''$ as every state and
transition of $\ta$ is in $\ta''$.

    A run in $\ta''$ is a run of $\ta$ or of $\ta'$, as every state and
transition $\ta''$ is in $\ta$ or in $\ta'$, and as the sets of states and
transitions are disjoint except for initial states (which do not have incoming
transitions), a valid run is either entirely in $\ta$ or in $\ta'$.
\end{proof}

\begin{proposition}
\label{prop:aut-intersect}
    Given two automata
    \[
        \ta = \tup{
                   \tastates, \sastates_\maxord,\ldots,\sastates_1,
                   \salphabet,
                   \tadelta, \sadelta_\maxord,\ldots,\sadelta_1,
                   \controls,
                   \tafinals, \safinals_\maxord,\ldots,\safinals_1
               }
    \]
    and
    \[
        \ta' = \tup{
                   \tastates', \sastates_\maxord',\ldots,\sastates_1',
                   \salphabet,
                   \tadelta', \sadelta_\maxord',\ldots,\sadelta_1',
                   \controls',
                   \tafinals', \safinals_\maxord',\ldots,\safinals_1'
               }
    \]
    there is an automaton $\ta''$ which recognises the intersection of the
languages of $\ta$ and $\ta'$.
\end{proposition}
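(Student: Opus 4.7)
The plan is a product construction at every level of the nested automaton, exploiting the fact that the set-valued targets of stack transitions already implement conjunction, so only the singleton \emph{source} side of each transition needs genuine pairing.  Assume without loss of generality that $\tastates, \tastates', \sastates_\midord, \sastates_\midord'$ are pairwise disjoint (otherwise rename) and that $\controls = \controls'$ (any tree has a fixed control label at each leaf, so intersection is trivially empty if the two sets of controls are incompatible).  At the tree level I would take $\tastates'' = \controls \,\uplus\, \bigl((\tastates\setminus\controls) \times (\tastates'\setminus\controls)\bigr)$ with $\controls'' = \controls$ and $\tafinals'' = \tafinals \times \tafinals'$, and for each pair of transitions $\tatran{\tastate}{\idxi}{\numof}{\tastate'}{\sastate} \in \tadelta$ and $\tatran{\tilde\tastate}{\idxi}{\numof}{\tilde\tastate'}{\tilde\sastate} \in \tadelta'$ agreeing on $(\idxi,\numof)$, add $\tatran{(\tastate,\tilde\tastate)}{\idxi}{\numof}{(\tastate',\tilde\tastate')}{(\sastate,\tilde\sastate)}$ to $\tadelta''$; in the leaf case where the targets $\tastate',\tilde\tastate'$ lie in $\controls$, both sides must agree on the same $\control$ and the paired target collapses to that $\control$.

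For the stack levels $\midord \geq 2$, I would set $\sastates_\midord'' = \sastates_\midord \,\uplus\, \sastates_\midord' \,\uplus\, (\sastates_\midord \times \sastates_\midord')$, retain every transition of $\sadelta_\midord$ and $\sadelta_\midord'$ verbatim, and add for each pair $\sastate \satran{\sastate'} \sastateset \in \sadelta_\midord$ and $\tilde\sastate \satran{\tilde\sastate'} \tilde\sastateset \in \sadelta_\midord'$ the product transition $(\sastate,\tilde\sastate) \satran{(\sastate',\tilde\sastate')} (\sastateset \cup \tilde\sastateset)$, with $\safinals_\midord'' = \safinals_\midord \cup \safinals_\midord' \cup (\safinals_\midord \times \safinals_\midord')$.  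The order-$1$ case is analogous: add $(\sastate,\tilde\sastate) \satrancol{\cha}{\sastateset_\branch \cup \tilde\sastateset_\branch} (\sastateset \cup \tilde\sastateset)$ whenever both automata read the same $\cha$ and $\sastateset_\branch, \tilde\sastateset_\branch$ live in the same $\sastates_\midord''$; if they belong to different orders, the pair is simply discarded, since the two sides would be checking annotations of incompatible orders on $\cha$ and no commonly accepted stack could satisfy both.

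The key observation driving correctness, and the one place where the construction is subtler than one would first guess, is that set-valued targets of stack transitions already carry conjunctive semantics (every state in the set must independently accept the rest of the stack).  Consequently, using the \emph{union} $\sastateset \cup \tilde\sastateset$ rather than a product is exactly what computes intersection, because $\ta$-states retain their original $\ta$-transitions inside $\sadelta_\midord''$ and dually for $\ta'$.  The main obstacle I anticipate is resisting the temptation of a full product on the set-valued side: a Cartesian product of continuations would incorrectly synchronise the two automata's obligations on the rest of the stack.  Correctness then follows by a simultaneous induction over tree structure and over stack order: an accepting run of $\ta''$ projects to an accepting run of each of $\ta$ and $\ta'$ by reading the two components of its product states and following the preserved singleton-state transitions; conversely, two accepting runs of $\ta$ and $\ta'$ on the same tree can be spliced together at the pair-state sources, with set-valued targets combined by union, to yield an accepting run of $\ta''$.
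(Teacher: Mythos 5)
Your construction is correct and coincides with the paper's proof at the tree level: both take a product of tree-automaton states and pair transitions that agree on $(\idxi,\numof)$, declaring a pair final exactly when both components are. The only divergence is that the paper obtains the required intersection states for the nested stack automata by citing the effective boolean algebra of stack automata from ICALP 2012, whereas you inline that construction explicitly (product on the singleton source side, union of the alternating set-valued targets, originals retained); this is essentially the same argument made self-contained, and your observation that the conjunctive semantics of target sets makes union the right combinator is precisely why the cited result holds.
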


\begin{proof}
 We construct the following automaton:
    \[
        \ta' = \tup{
                   \tastates'', \sastates_\maxord'',\ldots,\sastates_1'',
                   \salphabet,
                   \tadelta'', \sadelta_\maxord'',\ldots,\sadelta_1'',
                   \controls'',
                   \tafinals'', \safinals_\maxord'',\ldots,\safinals_1''
               }
    \]

    For any pair of states
    $\sastate, \sastate' \in \sastates_\maxord \cup \sastates'_\maxord$
    we can assume a state
    $\sastate \cap \sastate'$
    accepting the intersection of the stacks accepted from $\sastate$ and
    $\sastate'$.  This comes from the fact that stack automata form an effective
    boolean algebra~\cite{BCHS12}.  The states and transitions in
    $\sastates_\maxord'',\ldots,\sastates_1''$,
    $\sadelta_\maxord'',\ldots,\sadelta_1''$,
    and
    $\safinals_\maxord'',\ldots,\safinals_1''$
    come from this construction.

    For $\tastate_1 \in \tastates$ and $\tastate_2 \in \tastates'$, we define
$q_{1,2}$ to be in $\tastates''$ such that, for every
$\tatran{\tastate_1}{\idxi}{\numof}{\tastate_1'}{\sastate_1}$
    and
    $\tatran{\tastate_2}{\idxi}{\numof}{\tastate_2'}{\sastate_2}$
   , we add the transition
   $\tatran{\tastate_{1,2}}{\idxi}{\numof}{\tastate_{1,2}'}{\sastate_1 \cap
\sastate_2}$.

  We have $\tastate_{1,2} \in \tafinals''$ if and only if $\tastate_1 \in \tafinals$
and $\tastate_2 \in \tafinals'$.

  A run exists in $\ta''$ if and only if there is a run in $\ta$ and one in
$\ta'$, by construction.
\end{proof}

\begin{proposition}
\label{prop:aut-negate}
 Given an automaton,
    \[
        \ta = \tup{
                   \tastates, \sastates_\maxord,\ldots,\sastates_1,
                   \salphabet,
                   \tadelta, \sadelta_\maxord,\ldots,\sadelta_1,
                   \controls,
                   \tafinals, \safinals_\maxord,\ldots,\safinals_1
               }
    \]
   there is an automaton $\ta'$ which accepts a tree if and only if it is not
accepted by $\ta$.
\end{proposition}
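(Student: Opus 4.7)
The plan is the classical ``determinise bottom-up, then flip acceptance'' recipe for tree automata, with one preliminary pass that absorbs the alternating stack guards decorating $\tadelta$-transitions.

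The preliminary pass exploits the fact that order-$\maxord$ stack automata themselves form an effective boolean algebra~\cite{BCHS12}. Let $S \subseteq \sastates_\maxord$ consist of every guard appearing on a transition of $\tadelta$. Using intersections and complements of stack automata, for each $\Sigma \subseteq S$ I build a canonical stack state $\sastate_\Sigma$ whose language is exactly the set of stacks lying in $\slang{\sastate}{\ta}$ for every $\sastate \in \Sigma$ and in none of the $\slang{\sastate}{\ta}$ for $\sastate \in S \setminus \Sigma$. These atoms partition $\stacks{\maxord}{\salphabet}$. I then rewrite every transition $\tatran{\tastate}{i}{m}{\tastate'}{\sastate} \in \tadelta$ as the family $\{\tatran{\tastate}{i}{m}{\tastate'}{\sastate_\Sigma} \mid \sastate \in \Sigma\}$. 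The accepted language is unchanged, and now at each position $(i,m)$ the guards are mutually exclusive and exhaustive: every stack satisfies exactly one $\sastate_\Sigma$.

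The main step is a bottom-up subset construction over the refined automaton, producing $\ta_{\text{det}}$ whose states are pairs $(T, \Sigma)$ with $T \subseteq \tastates$ and $\Sigma$ an atom label. At a leaf labelled $(\control, \stack)$, set $T = \{\control\}$ and let $\Sigma$ be the unique atom satisfied by $\stack$. At an internal node whose children are already labelled $(T_1, \Sigma_1), \ldots, (T_m, \Sigma_m)$ and whose own stack has atom $\Sigma$, assign $(T, \Sigma)$ where
\[
  T = \bigl\{ \tastate \in \tastates \mid \forall i \leq m,\ \exists \tastate_i \in T_i :
      \tatran{\tastate}{i}{m}{\tastate_i}{\sastate_{\Sigma_i}} \in \tadelta \bigr\}.
\]
Because the $\sastate_{\Sigma_i}$ are mutually exclusive, $T$ is functionally determined by the children's data; an easy induction on node depth shows that $T$ is exactly the set of $\tastates$-labels that $\ta$ could assign to the node in a partial run. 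Complementation then simply negates the root predicate: $\ta'$ accepts at the root iff $(T, \Sigma)$ satisfies that \emph{no} transition $\tatran{\tastatef}{1}{1}{\tastate}{\sastate_\Sigma}$ of $\ta$ exists with $\tastatef \in \tafinals$ and $\tastate \in T$, which is encoded by a new finite set of root transitions in $\ta'$.

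The main obstacle is the guard-refinement step: without it, subset construction would have to branch on which guard is satisfied at each child, and determinism would collapse. Once the atoms are in place, the remaining bookkeeping---inserting the $\sastate_\Sigma$ and their supporting states into $\sastates_\maxord, \ldots, \sastates_1$ and $\sadelta_\maxord, \ldots, \sadelta_1$, and re-establishing the initial-state conventions of Section~\ref{ssec:notations} by adjoining fresh copies of initial states wherever the subset construction would otherwise introduce incoming transitions on them---is routine. The construction incurs one exponential for atoms and a second for the subset construction, but Proposition~\ref{prop:aut-negate} makes no complexity claim, so this is acceptable.
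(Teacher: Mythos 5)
Your overall strategy---determinise bottom-up and flip acceptance, using boolean combinations of stack automata to tame the guards---is the same family of argument as the paper's, and your preliminary refinement of the guards into mutually exclusive, exhaustive atoms $\sastate_\Sigma$ is a clean alternative to the paper's in-place use of unions and intersections of complemented guards. The gap is in the determinisation step itself. In this automaton model a transition $\tatran{\tastate}{\idxi}{\numof}{\tastate'}{\sastate}$ is applied \emph{per child}: it sees only the $\idxi$th child's state and stack, and the parent's label is whatever all children happen to agree on. Your rule ``assign $(T,\Sigma)$ where $T = \{\tastate \mid \forall \idxi\ \exists \tastate_\idxi \in T_\idxi : \ldots\}$'' is a function of \emph{all} the children's data jointly, and no single transition of the model can evaluate it. The best a transition leaving child $\idxi$ can do is guess the other children's pairs and target every $(T,\Sigma)$ consistent with some guess; agreement then only forces the children to agree on the \emph{output} $(T,\Sigma)$, not on the guesses. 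One checks that the spurious agreed values are exactly proper subsets $T \subsetneq T^{\ast}$ of the true label set (each child $\idxi$ only ever proposes sets contained in the set of parent labels it admits, and $T^{\ast}$ is the intersection of those), and since your root acceptance test for $\ta'$ is antitone in $T$, a tree accepted by $\ta$ can also be accepted by $\ta'$ via an under-approximating labelling. So the construction as written is unsound, not merely under-specified.

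The fix is precisely what the paper's proof does: make the states of $\ta'$ tuples $\tup{\tastateset_1,\ldots,\tastateset_\numof}$ recording one certified component per child, so that the agreement mechanism operates on the per-child contributions themselves (each child's transition pins down its own component via the stack guard and guesses the rest), and the true label set is recovered as $\tastateset_1 \cap \cdots \cap \tastateset_\numof$. Alternatively, you could carry out your construction in the joint-transition model discussed in the ``Alternative Tree Automaton Definition'' part of Appendix~\ref{sec:aut-particulars}, where your functional rule is directly expressible, and then invoke the translation back into the per-child format given there. Your atoms remain useful in either variant, but they do not remove the need for per-child bookkeeping in the state.
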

\begin{proof}
    We define the complement as follows.  We first assume that for each
    $\sastate \in \sastates_\maxord$
    we also have
    $\comp{\sastate} \in \sastates_\maxord$
    that accepts the complement of $\sastate$.  This follows from the
    complementation of stack automata in ICALP 2012~\cite{BCHS12}.

    Then, we define $\ta'$ to be the complement of $\ta$, which contains
    \[
        \ta' = \tup{
                    \tastates', \sastates_\maxord,\ldots,\sastates_1,
                    \salphabet,
                    \tadelta', \sadelta_\maxord,\ldots,\sadelta_1,
                    \controls,
                    \tafinals', \safinals_\maxord,\ldots,\safinals_1
                }
    \]
    where, letting
    $\numof_{\text{max}}$
    be the maximum number of children that can appear in a tree accepted by $\ta$
    (this information is easily obtained from the transitions of $\ta$), we have
    \[
        \tastates' = \bigcup\limits_{\numof \leq \numof_{\text{max}}}
                        \brac{2^{\tastates}}^\numof \ .
    \]
    That is, the automaton will label nodes of the tree with a set of states for
    each child.  The $\idxi$th set will be the set of all labels $\tastate$ that
    could have come from the $\idxi$th child in a run of $\ta$.  Since all children
    have to agree on the $\tastate$ that labels a node, then a label
    $\tup{\tastateset_1, \ldots, \tastateset_\numof}$
    means that the set
    $\tastateset_1 \cap \cdots \cap \tastateset_\numof$
    is the set of states $\tastate$ that could have labelled the node in a run of
    $\ta$.

    The transition relation $\tadelta'$ is the set of transitions of the form
    \[
        \tatran{\tup{\tastateset_1, \ldots, \tastateset_\numof}}
               {\idxi}
               {\numof}
               {\tup{\tastateset'_1, \ldots, \tastateset'_{\numof'}}}
               {\sastate}
    \]
    where
    $\numof, \numof' \leq \numof_{\text{max}}$
    and for all $\idxj \neq \idxi$, the set $\tastateset_\idxj$ is any subset of
    $\tastates$, and
    $\tastateset_\idxi \subseteq \tastates$
    and $\sastate$ are such that
    \begin{itemize}
        \item
            $\sastate = \bigcap\limits_{\tastate \in \tastates} \sastate_\tastate$,
            and

        \item
            if $\tastate \in \tastateset_\idxi$ then
            \[
                \sastate_\tastate = \sastate_1 \cup \cdots \cup \sastate_\numofl
            \]
            where
            $\tatran{\tastate}{\idxi}{\numof}{\tastate_1}{\sastate_1}$,
            \ldots,
            $\tatran{\tastate}{\idxi}{\numof}{\tastate_\numofl}{\sastate_\numofl}$
            are all transitions to $\tastate$ via the $\idxi$th of $\numof$ children
            with the property that
            \[
                \tastate_\idxj \in \tastateset'_1 \cap \cdots \cap \tastateset'_{\numof'}
            \]
            for all $\idxj$.

        \item
            if $\tastate \notin \tastateset_\idxi$ then
            \[
                \sastate_\tastate = \comp{\sastate_1}
                                    \cap \cdots \cap
                                    \comp{\sastate_\numofl}
            \]
            where
            $\tatran{\tastate}{\idxi}{\numof}{\tastate_1}{\sastate_1}$,
            \ldots,
            $\tatran{\tastate}{\idxi}{\numof}{\tastate_\numofl}{\sastate_\numofl}$
            are all transitions to $\tastate$ via the $\idxi$th of $\numof$ children
            with the property that
            \[
                \tastate_\idxj \in \tastateset'_1 \cap \cdots \cap \tastateset'_{\numof'}
            \]
            for all $\idxj$.
    \end{itemize}
    In each transition, the sets $\tastateset_\idxj$ for all $\idxj \neq \idxi$ have
    no constraints.  The automaton effectively guesses the set of labels that could
    have come from sibling nodes.  The set $\tastateset_\idxi$ contains all
    labellings that could have come from the $\idxi$th child given the set of
    labellings that could have labelled the child.  The final condition above
    insists that transitions to any state not in $\tastateset_\idxi$ could not have
    been applied to the child.

    The set of accepting states is
    \[
        \setcomp{\tup{\tastateset_1, \ldots, \tastateset_\numof}}
                {\nexists \tastatef \in \finals .
                    \tastatef \in \tastateset_1 \cap \cdots \cap \tastateset_\numof} \ .
    \]
    For the initial states, we alias $\control = \set{\control}$

    We prove that this automaton is the complement of $\ta$.
    Associate to each node $\tnode$ the
    set $\tastateset_\tnode$ such that
    $\tastate \in \tastateset_\tnode$
    iff there is some (partial, starting from the leaves) run of $\ta$ that
    labels $\tnode$ with $\tastate$.  We prove that all runs of $\ta'$ label
    $\tnode$ with some
    $\tup{\tastateset_1, \ldots, \tastateset_\numof}$
    such that
    $\tastateset_\tnode
        = \tastateset_1 \cap \cdots \cap \tastateset_\numof$.

    At the leaves of the tree this is immediate since $\ta$ must label the
    node with some $\control$, and $\ta'$ must label it with
    $\set{\control}$.

    Now, suppose we have a node $\tnode$ with children $\tnode 1$, \ldots,
    $\tnode \numof$ and the property holds for all children.

    Take some
    $\tastate \in \tastateset_\tnode$.
    Let
    $\tatran{\tastate}{1}{\numof}{\tastate_1}{\sastate_1}$,
    \ldots,
    $\tatran{\tastate}{\numof}{\numof}{\tastate_\numof}{\sastate_\numof}$
    be the transitions used in the run labelling $\tnode$ with $\tastate$.
    For each $\idxi$ we must have by induction $\tastate_\idxi$ appearing in
    all sets labelling $\tnode \idxi$ in a run of $\ta'$.  Now suppose
    $\ta'$ labels $\tnode$ with
    $\tup{\tastateset_1, \ldots, \tastateset_\numof}$
    and moreover
    $\tastate \notin \tastateset_\idxi$.
    Then, by construction, we must have that the stack labelling $\tnode
    \idxi$ is accepted from $\comp{\sastate_\idxi}$.  However, since the
    stack must have been accepted from $\sastate_\idxi$ we have a
    contradiction.  Thus, $\tastate \in \tastateset_\idxi$.

    Now take some
    $\tastate \notin \tastateset_\tnode$.
    Thus, there is some $\idxi$ such that, letting
    $\tatran{\tastate}{\idxi}{\numof}{\tastate_1}{\sastate_1}$,
    \ldots,
    $\tatran{\tastate}{\idxi}{\numof}{\tastate_\numofl}{\sastate_\numofl}$
    be all transitions with $\tastate_\idxj$ appearing in
    $\tastateset_{\tnode\idxi}$,
    we know the stack labelling $\tnode\idxi$ is not accepted from any
    $\sastate_\idxj$ (and is accepted from all
    $\comp{\sastate_\idxj}$).
    Now suppose
    $\ta'$ labels $\tnode$ with
    $\tup{\tastateset_1, \ldots, \tastateset_\numof}$
    and moreover
    $\tastate \in \tastateset_\idxi$.
    Then, by construction, we must have that the stack labelling $\tnode
    \idxi$ is accepted from some $\sastate_\idxj$, which is a contradiction.
    Thus, $\tastate \notin \tastateset_\idxi$.

    Hence
    $\tastateset_\tnode
        = \tastateset_1 \cap \cdots \cap \tastateset_\numof$
    as required.

    Now, assume there is some accepting run of $\ta$ via final state
    $\tastatef$.  Assume there is an accepting run of $\ta'$.  Then necessarily the
    run of $\ta'$ has as its final label some tuple such that
    $\tastatef \in \tastateset_1 \cap \cdots \cap \tastateset_\numof$.
    This contradicts the fact that the run of $\ta'$ is accepting.

    Conversely, take some accepting run of $\ta'$.  The accepting state
    $\tup{\tastateset_1, \ldots, \tastateset_\numof}$
    of this run has no final state
    $\tastatef \in \tastateset_1 \cap \cdots \cap \tastateset_\numof$
    and thus there can be no accepting run of $\ta$.
\end{proof}

\subsection{Membership}

In this section we prove the following.

\begin{proposition}
    The membership problem for annotated stack tree automata is in linear time.
\end{proposition}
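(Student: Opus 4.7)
The plan is a two-level bottom-up dynamic programming that mirrors the two-level structure of stack tree automata. For each annotated stack appearing as a label in the input tree, I first compute, for every state $\sastate$ of the stack automaton, whether the stack is accepted from $\sastate$. With this information precomputed, the outer tree automaton evaluation reduces to a standard bottom-up traversal: at each leaf with label $\tup{\control, \stack}$ we use the precomputed acceptance data to determine which transitions $\tatran{\tastate}{\idxi}{\numof}{\control}{\sastate}$ apply, and at each internal node we combine the results from its children by enumerating the (constant-many) transitions of $\tadelta$, taking constant time per node since the automaton is fixed.

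For stack membership, the key observation is that because the run constraints are universally quantified over the set of states labelling each position, $\sastate$ appears in the label of some position in a valid run iff there is a valid run using the singleton label $\set{\sastate}$ at that position. Hence it suffices to compute, for every position $v$ of the tree view of the stack and every state $\sastate$, a boolean $\mathrm{acc}(v, \sastate)$ indicating acceptance starting at $v$ from $\sastate$. The recurrence follows directly from the run semantics: at a position holding $\cha \in \salphabet$ whose annotation edge leads to $v_{\mathrm{ann}}$ and whose linear successor is $v'$, set $\mathrm{acc}(v, \sastate)$ to true iff there is some transition $\sastate \satrancol{\cha}{\sastateset_\branch} \sastateset$ with $\mathrm{acc}(v_{\mathrm{ann}}, \sastate'')$ for all $\sastate'' \in \sastateset_\branch$ and $\mathrm{acc}(v', \sastate'')$ for all $\sastate'' \in \sastateset$; at a position holding $\sopen{\midord}$ with linear successor $v'$, letting $v''$ be the linear successor of the matching $\sclose{\midord}$, use transitions $\sastate \satran{\sastate'} \sastateset \in \sadelta_\midord$ analogously, with $\mathrm{acc}(v', \sastate')$ and $\mathrm{acc}(v'', \sastate'')$ for all $\sastate'' \in \sastateset$; and at $\sclose{\midord}$ positions, as well as at the end of the outermost stack of order $\maxord$, set $\mathrm{acc}(v, \sastate)$ to true iff $\sastate \in \safinals_\midord$.

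The dependencies of this recurrence form a DAG: each annotation is a self-contained sub-stack whose nesting depth is bounded by $\maxord$, so I would process all annotations before the linear spine and walk the spine from right to left, after a linear-time preprocessing that matches brackets so that the pointer to $v''$ at each $\sopen{\midord}$ position is available. Since the automaton is fixed, each value of $\mathrm{acc}$ is computed by checking a constant-size list of transitions over constant-size state sets, so the DP on each stack is linear in the number of positions of its tree view. Summing over all stacks appearing in the input gives a total cost for the stack-level computation that is linear in the size of the input tree, and the outer tree traversal adds another linear pass. The main obstacle is purely bookkeeping: carefully matching $\sopen{\midord}$/$\sclose{\midord}$ pairs and indexing annotations so that the DP can be scheduled in topological order. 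Once these pointers are supplied the recurrence is local and the linear-time bound is immediate.
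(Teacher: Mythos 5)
Your proposal is correct and takes essentially the same route as the paper: a bottom-up labelling of the tree nodes with the sets of states that can be reached, combined with linear-time acceptance tests for the annotated stacks labelling the nodes. The only difference is that you spell out the stack-level dynamic program (via the standard per-state decomposition of the alternating run semantics) where the paper simply invokes the linear-time stack membership result it cites from the ICALP 2012 saturation work, so your argument is a self-contained version of the same proof.
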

\begin{proof}
    We give an algorithm which checks if a tree $\tree$ is recognised by an
    automaton.

    We start by labelling every leaf labelled with control $\control$
    with $\set{\control}$.

    For every node $\tnode$ such that all its sons have been labelled, we label it
    by every state $\tastate$ such that there exist transitions
    $\tatran{\tastate}{1}{\numof}{\tastate_1}{\sastate_1}, \cdots,
    \tatran{\tastate}{\numof}{\numof}{\tastate_\numof}{\sastate_\numof}$ such that
    each son $\tnode \idxi$ is labelled by a set containing $\tastate_\idxi$ and
    the stack labelling $\tnode \idxi$ is accepted by $\sastate_\idxi$.  Note,
    checking the acceptance of a stack from $\sastate_\idxi$ can be done in
    linear time~\cite{BCHS12}.

    If we can label the root by a final state, the tree is accepted (as at each
    step, if we can label a node by a state, there is a run in which it is
    labelled by this state), otherwise, it is not.

    As knowing if a stack is accepted from a given state is linear in the size of
    the stack, and we visit each node once, and explore with it once each possible
    transitions, the complexity of this algorithm is linear in the size of the tree.
\end{proof}

\subsection{Emptiness}

In this section we prove the following.

\begin{proposition}
    The emptiness problem for annotated stack tree automata is in
    PSPACE-complete.
\end{proposition}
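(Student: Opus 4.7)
The plan is to prove matching PSPACE upper and lower bounds. The main observation enabling the upper bound is that leaves of a stack tree carry only a control-state label --- no stack-automaton side condition is imposed at a leaf, and the stack attached to a leaf is chosen freely --- so the only nontrivial constraints come from the stack conditions $\sastate$ attached to transitions used at internal nodes and at the root acceptance step. Hence we can decouple the (simple) combinatorics of the tree layer from the (expensive) stack-automaton emptiness queries.

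Concretely, I would compute bottom-up the set $R \subseteq \tastates$ of \emph{reachable} tree-automaton states --- those $\tastate$ such that some stack tree can be labelled $\tastate$ at its root by a partial run of $\ta$. Initialise $R := \controls$, reflecting that every control state can label a leaf. Then repeatedly add $\tastate$ to $R$ whenever there exist $\numof \geq 1$ and transitions $\tatran{\tastate}{\idxi}{\numof}{\tastate_\idxi}{\sastate_\idxi} \in \tadelta$ for each $1 \leq \idxi \leq \numof$ such that $\tastate_\idxi \in R$ and $\slang{\sastate_\idxi}{\ta} \neq \emptyset$. The automaton is non-empty iff there is a final transition $\tatran{\tastatef}{1}{1}{\tastate}{\sastate}$ with $\tastatef \in \tafinals$, $\tastate \in R$, and $\slang{\sastate}{\ta} \neq \emptyset$. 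Soundness follows by inductively constructing a witness tree from membership in $R$; completeness follows by induction on the structure of any accepted tree, extracting the label at each node.

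The complexity analysis runs as follows. Since $|R| \leq |\tastates|$, the fixed-point converges in polynomially many rounds. Within each round, for every candidate pair $(\tastate, \numof)$ and every child position $\idxi$, it suffices to check independently whether \emph{some} transition $\tatran{\tastate}{\idxi}{\numof}{\tastate_\idxi}{\sastate_\idxi}$ witnesses the required conditions; this factorisation across $\idxi$ avoids enumerating exponentially many tuples of transitions. Each individual query reduces to an emptiness test on a stack automaton state $\sastate_\idxi \in \sastates_\maxord$, which is in PSPACE by the techniques of~\cite{BCHS12}. The full algorithm therefore runs in polynomial time with a PSPACE oracle, and so is in PSPACE.

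For the matching lower bound, I would reduce emptiness of annotated stack automata to emptiness of stack tree automata: given a stack automaton, build a stack tree automaton whose only accepted trees are single-leaf trees whose stack component the original stack automaton accepts. PSPACE-hardness of emptiness for annotated stack automata is inherited from the hardness of emptiness for alternating higher-order pushdown automata~\cite{CW07}. The main point requiring care is the per-round complexity bound in the upper bound argument: one must argue that although there are \textit{a priori} exponentially many tuples of transitions witnessing $\tastate \in R$, the condition factors independently across child positions so that only polynomially many oracle queries suffice.
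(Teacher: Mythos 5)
Your upper bound is essentially the paper's own argument: a marking fixed-point over $\tastates$ seeded with $\controls$, using PSPACE emptiness tests on the stack-automaton states as an oracle, with the observation that the condition factors independently over child positions $\idxi$ so only polynomially many oracle calls are needed. You additionally sketch the PSPACE-hardness direction, which the paper's proof omits entirely despite claiming completeness; the only quibble is that hardness of stack-automaton emptiness is more naturally attributed to the alternation in the stack automata of~\cite{BCHS12} (e.g.\ via emptiness of alternating word automata) than to the game-theoretic results of~\cite{CW07}.
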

\begin{proof}
 We give the following algorithm:

 We set $\Marked = \controls$.

 If there exists a $q$ which is not in $\Marked$ such that, there is some
$\numof$
 such that for each
 $\idxi \leq \numof$
 we have
 $\tatran{\tastate}{\idxi}{\numof}{\tastate'}{\sastate'}$,
 with
 $\tastate' \in \Marked$
 and there exists a stack recognised from $\sastate'$, we add $q$ to $\Marked$.

  We stop when there does not exist such a state.

  If $\Marked \cap \finals = \emptyset$, the recognised language is empty,
otherwise, there is at least one tree recognised.

  There are at most $|\tastates|$ steps in the algorithm, and the complexity of
the emptiness problem for the states $\sastate$ is PSPACE.  Thus, the algorithm
runs in PSPACE.
\end{proof}

\subsection{Automata Transformations}

In this section we show that annotated stack tree automata can always be
transformed to meet the assumptions of the saturation algorithm.

Take a stack tree automaton
\[
    \ta = \tup{
               \tastates, \sastates_\maxord,\ldots,\sastates_1,
               \salphabet,
               \tadelta, \sadelta_\maxord,\ldots,\sadelta_1,
               \controls,
               \tafinals, \safinals_\maxord,\ldots,\safinals_1
           } \ .
\]
We normalise this automaton as follows.  It can be easily seen at each step that
we preserve the language accepted by the automaton.

First we ensure that there are no transitions
\[
    \tatran{\control}{\idxi}{\numof}{\tastate}{\sastate} \ .
\]
We do this by introducing a new state
$\newtastate{\control}$
for each
$\control \in \controls$.
Then, we replace each
\[
    \tatran{\control}{\idxi}{\numof}{\tastate}{\sastate}
\]
with
\[
    \tatran{\newtastate{\control}}{\idxi}{\numof}{\tastate}{\sastate}
\]
and for each
\[
    \tatran{\tastate}{\idxi}{\numof}{\control}{\sastate}
\]
in the resulting automaton, add a transition (not replace)
\[
    \tatran{\tastate}{\idxi}{\numof}{\newtastate{\control}}{\sastate} \ .
\]
Thus, we obtain an automaton with no incoming transitions to any $\control$.

To ensure unique states labelling transitions, we replace each transition
\[
    \tatran{\tastate}{\idxi}{\numof}{\tastate'}{\sastate}
\]
with a transition
\[
    \tatran{\tastate}{\idxi}{\numof}{\tastate'}{\newsastate{\tastate}{\tastate'}}
\]
where there is one
$\newsastate{\tastate}{\tastate'}$
for each pair of states
$\tastate, \tastate'$.
Then when
$\maxord > 1$
we have a transition
$\newsastate{\tastate}{\tastate'} \satran{\sastate'} \sastateset$
for each
$\sastate \satran{\sastate'} \sastateset$.
Notice, if there are multiple possible
$\sastate$
then
$\newsastate{\tastate}{\tastate'} \satran{\sastate'} \sastateset$
accepts the union of their languages.  Furthermore,
$\newsastate{\tastate}{\tastate'}$
has no incoming transitions.  Moreover, we do not remove any transitions from
$\sastate$ but observe that $\sastate$ is no longer initial.
When
$\maxord = 1$
we have a transition
$\newsastate{\sastate}{\sastateset}
 \satrancol{\cha}{\sastateset_\branch}
 \sastateset'$
for each
$\sastate \satrancol{\cha}{\sastateset_\branch} \sastateset'$.

We then iterate from
$\midord = \maxord$
down to
$\midord = 3$
performing a similar transformation to the above.  That is,
we replace each transition in the order-$\midord$ transition set
\[
    \sastate \satran{\sastate'} \sastateset
\]
with a transition
\[
    \sastate \satran{\newsastate{\sastate}{\sastateset}} \sastateset
\]
where there is one
$\newsastate{\sastate}{\sastateset}$
for each pair of $\sastate$  and $\sastateset$.
Then we have a transition
$\newsastate{\sastate}{\sastateset} \satran{\sastate''} \sastateset'$
for each
$\sastate' \satran{\sastate''} \sastateset'$.
Again, if there are multiple possible
$\sastate'$
then
$\newsastate{\sastate}{\sastateset} \satran{\sastate''} \sastateset'$
accepts the union of their languages.  Furthermore,
$\newsastate{\sastate}{\sastateset}$
has no incoming transitions.

Finally, for $\midord = 2$ the procedure is similar.
We replace each transition in the order-$2$ transition set
\[
    \sastate \satran{\sastate'} \sastateset
\]
with a transition
\[
    \sastate \satran{\newsastate{\sastate}{\sastateset}} \sastateset
\]
where there is one
$\newsastate{\sastate}{\sastateset}$
for each pair of $\sastate$  and $\sastateset$.
Then we have a transition
$\newsastate{\sastate}{\sastateset}
 \satrancol{\cha}{\sastateset_\branch}
 \sastateset'$
for each
$\sastate' \satrancol{\cha}{\sastateset_\branch} \sastateset'$.

\subsection{Alternative Tree Automaton Definition}

An alternative definition of stack tree automata would use transitions
\[
    \tastate
    \leftarrow
    \tup{\tastate_1, \sastate_1},
    \ldots,
    \tup{\tastate_\numof, \sastate_\numof}
\]
instead of
\[
    \tatran{\tastate}{1}{\numof}{\tastate_1}{\sastate_1},
    \ldots,
    \tatran{\tastate}{\numof}{\numof}{\tastate_\numof}{\sastate_\numof} \ .
\]
However, due to the dependency such transitions introduce between
$\sastate_1, \ldots, \sastate_\numof$
it is no longer possible to have a unique sequence
$\sastate_1, \ldots, \sastate_\numof$
for each sequence
$\tastate, \tastate_1, \ldots, \tastate_\numof$
(one cannot simply union the candidates for each $\sastate_\idxi$).

For example suppose we had
$\tastate \leftarrow \tup{\tastate_1, \sastate_1}, \tup{\tastate_2, \sastate_2}$
and
$\tastate \leftarrow \tup{\tastate_1, \sastate'_1}, \tup{\tastate_2, \sastate'_2}$
where $\sastate_1$ accepts $\stack_1$, $\sastate'_1$ accepts $\stack'_1$,
$\sastate_2$ accepts $\stack_2$, and $\sastate'_2$ accepts $\stack'_2$.
If we were to replace these two transitions with
$\tastate \leftarrow \tup{\tastate_1, \sastate_1 \cup \sastate'_1},
                     \tup{\tastate_2, \sastate_2 \cup \sastate'_2}$
we would mix up the two transitions, allowing, for example, the first child to
be labelled by $\stack_1$ and the second by $\stack'_2$.

At a first glance, our tree automaton model may appear weaker since we cannot
enforce dependencies between the candidate
$\sastate_\idxi$s in
\[
    \tatran{\tastate}{1}{\numof}{\tastate_1}{\sastate_1},
    \ldots,
    \tatran{\tastate}{\numof}{\numof}{\tastate_\numof}{\sastate_\numof} \ .
\]
However, it turns out that we can overcome this problem with new copies of
$\tastate$.

That is, suppose we had a set $\tadelta$ of transitions of the form
\[
    \tastate
    \leftarrow
    \tup{\tastate_1, \sastate_1},
    \ldots,
    \tup{\tastate_\numof, \sastate_\numof} \ .
\]
We could simulate the resulting tree automaton using our model by introducing
a state
$\tup{\tastate, \tatrant}$
for each $\tastate$ and $\tatrant$.

Given a transition $\tatrant$ of the above form, we can use a family of rules
\[
    \tatran{\tup{\tastate, \tatrant}}
           {1}
           {\numof}
           {\tup{\tastate_1, \tatrant_1}}
           {\sastate_1},
    \ldots,
    \tatran{\tup{\tastate, \tatrant}}
           {\numof}
           {\numof}
           {\tup{\tastate_\numof, \tatrant_\numof}}
           {\sastate_\numof}
\]
for all sequences
$\tatrant_1, \ldots, \tatrant_\numof$
of
$\tadelta$.  (Note that, although there are an exponential number of such
families, we can create them all from a polynomial number of transitions).  Note
that when
$\tastate_\idxi = \control$
we would use $\control$ on the right hand side instead of
$\tup{\tastate_\idxi, \tatrant_\idxi}$
(recalling that $\control$ has no incoming transitions).

\section{Completeness of Saturation}
\label{sec:completeness}

\begin{namedlemma}{lem:completeness}{Completeness of Saturation}
    The automaton $\ta$ obtained by saturation from $\ta_0$ is such that
    $\prestar{\gstrs}{\ta_0} \subseteq \langof{\ta}$.
\end{namedlemma}
\begin{proof}
    Completeness is proved via a straightforward induction over the length of
    the run witnessing
    $\tree \in \prestar{\gstrs}{\ta_0}$.
    In the base case we have
    $\tree \in \langof{\ta_0}$
    and since $\ta$ was obtained only by adding transitions to $\ta_0$, we are
    done.

    For the induction, take
    $\tree \in \ap{\strule}{\tree'}$
    where
    $\tree' \in \prestar{\gstrs}{\ta_0}$
    and by induction $\ta$ has an accepting run of $\tree'$.  We show how the
    transitions added by saturation can be used to build from the run over
    $\tree'$ an accepting run over $\tree$.

    We first consider the cases where $\strule$ adds or removes nodes to/from the
    tree.  The remaining cases when the stack contents are altered are almost
    identical to the ICALP 2012 proof, and hence are left until the end for the
    interested reader.
    \begin{itemize}
        \item
            When
            $\strule = \stpush{\control}{\control_1, \ldots, \control_\numof}$
            was applied to node
            $\tleaf{\tree}{\idxj}$
            of $\tree$, we have
            \[
                \tree' = \treemod{\treemod{\treemod{\tree}
                                                   {\tleaf{\tree}{\idxj}}
                                                   {\stack}}
                                          {\tleaf{\tree}{\idxj}1}
                                          {\tup{\control_1, \stack}}
                                  \cdots}
                         {\tleaf{\tree}{\idxj}\numof}
                         {\tup{\control_\numof, \stack}}
            \]
            where
            $\tup{\control, \stack}$
            labelled
            $\tleaf{\tree}{\idxj}$.

            Take the initial transitions over
            $\tleaf{\tree}{\idxj}$
            and
            $\tleaf{\tree}{\idxj}1$
            to
            $\tleaf{\tree}{\idxj}\numof$
            of the accepting run of $\tree'$
            \[
                \tatranfull{\tastate}
                           {\idxi}
                           {\numof'}
                           {\tastate_1}
                           {\cha}
                           {\sastateset_\branch}
                           {\sastateset_1, \ldots, \sastateset_\maxord}
            \]
            and
            \[
                \tatranfull{\tastate_1}
                           {1}
                           {\numof}
                           {\control_1}
                           {\cha}
                           {\sastateset^1_\branch}
                           {\sastateset^1_1, \ldots, \sastateset^1_\maxord},
                \ldots,
                \tatranfull{\tastate_1}
                           {\numof}
                           {\numof}
                           {\control_\numof}
                           {\cha}
                           {\sastateset^\numof_\branch}
                           {\sastateset^\numof_1, \ldots, \sastateset^\numof_\maxord}
            \]
            where the components of $\stack$ were accepted from
            $\sastateset_\branch$,
            $\sastateset_1, \ldots, \sastateset_\maxord$
            and
            $\sastateset^1_\branch$,
            $\sastateset^1_1, \ldots, \sastateset^1_\maxord$,
            \ldots,
            $\sastateset^\numof_\branch$,
            $\sastateset^\numof_1, \ldots, \sastateset^\numof_\maxord$.

            By saturation we also have
            \[
                \tatranfull{\tastate}
                           {\idxi}
                           {\numof'}
                           {\control}
                           {\cha}
                           {\sastateset'_\branch}
                           {\sastateset'_1, \ldots, \sastateset'_\maxord}
            \]
            where
            $\sastateset'_\branch =
             \sastateset_\branch
             \cup
             \sastateset^1_\branch
             \cup
             \cdots
             \cup
             \sastateset^\numof_\branch$
            and for all $\midord$, we have
            $\sastateset'_\midord =
             \sastateset_1
             \cup
             \sastateset^1_\midord
             \cup
             \cdots
             \cup
             \sastateset^\numof_\midord$
            from which we obtain a run of $\ta$ over $\tree$ by simply replacing
            the transitions of the run over $\tree'$ identified above with
            $\tatrant$.

        \item
            When
            $\strule = \stpop{\control_1, \ldots, \control_\numof}{\control}$
            was applied to nodes
            $\tleaf{\tree}{\idxj}$
            to
            $\tleaf{\tree}{\idxj+\numof-1}$
            of $\tree$, we have
            $\tree' = \treedel{\tree}
                              {\set{\tleaf{\tree}{\idxj},
                                    \ldots,
                                    \tleaf{\tree}{\idxj+\numof-1}}}$
            and
            $\tleaf{\tree}{\idxj}$,
            \ldots,
            $\tleaf{\tree}{\idxj+\numof}$
            were the only children of their parent $\tnode$.  Moreover, let
            $\tup{\control_1, \stack_1}$
            label
            $\tleaf{\tree}{\idxj}$,
            and \ldots and,
            $\tup{\control_\numof, \stack_\numof}$
            label
            $\tleaf{\tree}{\idxj+\numof-1}$
            and $\tnode$ have the stack $\stack$
            in $\tree$
            and
            $\tup{\control, \stack}$
            label $\tnode$ in $\tree'$.

            The initial transition over
            $\tnode$
            of the accepting run of $\tree'$ was from state $\control$
            By saturation we have
            \[
                \tatrant_1
                =
                \tatranfull{\control}
                           {1}
                           {\numof}
                           {\control_1}
                           {\cha_1}
                           {\emptyset}
                           {\emptyset,
                            \ldots,
                            \emptyset},
                \quad
                \ldots,
                \quad
                \tatrant_\numof
                =
                \tatranfull{\control}
                           {\numof}
                           {\numof}
                           {\control_\numof}
                           {\cha_\numof}
                           {\emptyset}
                           {\emptyset,
                            \ldots,
                            \emptyset}
            \]
            for the
            $\cha_1, \ldots, \cha_\numof$
            at the top of
            $\stack_1$, \ldots, $\stack_\numof$
            respectively.  We get from this a run of $\ta$ over $\tree$ by
            adding $\tatrant_1$ to $\tatrant_\numof$ to the run over $\tree'$ to
            read the nodes
            $\tleaf{\tree}{\idxj}$
            to
            $\tleaf{\tree}{\idxj+\numof-1}$.
    \end{itemize}

    We now consider the cases where $\strule$ applies a stack operation to a
    single node
    $\tleaf{\tree'}{\idxj}$
    of $\tree'$.  Let
    \[
        \tatrant' = \tatranfull{\tastate}
                               {\idxi}
                               {\numof}
                               {\control'}
                               {\cha}
                               {\sastateset_\branch}
                               {\sastateset_1, \ldots, \sastateset_\numof}
    \]
    be the transition applied at node
    $\tleaf{\tree'}{\idxj}$
    in the run.  Additionally, let $\stack'$ be the stack labelling the node,
    and $\control'$ be the control state.

    There is a case for each type of stack operation, all of which
    are almost identical to the ICALP 2012 proof.  In all cases below, $\tree$
    has the same tree structure as $\tree'$ and only differs on the labelling of
    $\tleaf{\tree'}{\idxj} = \tleaf{\tree}{\idxj}$.
    \begin{itemize}
        \item
            When
            $\strule = \gtrule{\control}{\srew{\chb}{\cha}}{\control'}$
            then we also added the transition
            \[
                \tatrant = \tatranfull{\tastate}
                                      {\idxi}
                                      {\numof}
                                      {\control}
                                      {\chb}
                                      {\sastateset_\branch}
                                      {\sastateset_1, \ldots, \sastateset_\numof}
            \]
            to $\ta$.  We have
            \[
                \stack' = \annot{\cha}{\stack_\branch}
                          \scomp{1}
                          \stack_1
                          \scomp{2}
                          \cdots
                          \scomp{\maxord}
                          \stack_\maxord
            \]
            and since
            $\tleaf{\tree}{\idxj}$ is labelled by $\control$ and the stack
            \[
                \stack = \annot{\chb}{\stack_\branch}
                         \scomp{1}
                         \stack_1
                         \scomp{2}
                         \cdots
                         \scomp{\maxord}
                         \stack_\maxord
            \]
            we obtain an accepting run of $\tree$ by simply replacing the
            application of $\tatrant'$ with $\tatrant$.

        \item
            When
            $\strule = \gtrule{\control}{\scpush{\midord}}{\control'}$
            then when $\midord > 1$ we have
            \[
                \stack' = \annot{\cha}{\stack_\midord}
                          \scomp{1}
                          \annot{\cha}{\stack_\branch}
                          \stack_1
                          \scomp{2}
                          \cdots
                          \scomp{\maxord}
                          \stack_\maxord \ .
            \]
            Let
            \[
                \satranfull{\sastateset_1}
                           {\cha}
                           {\sastateset'_\branch}
                           {\sastateset'_1}
            \]
            be the first transitions used to accept
            $\annot{\cha}{\stack_\branch}$.
            From the saturation algorithm we also added
            \[
                \tatrant = \tatranfull{\tastate}
                                      {\idxi}
                                      {\numof}
                                      {\control}
                                      {\cha}
                                      {\sastateset'_\branch}
                                      {\sastateset'_1,
                                       \sastateset_2,
                                       \ldots,
                                       \sastateset_{\midord-1},
                                       \sastateset_\midord
                                           \cup \sastateset_\branch,
                                       \sastateset_{\midord+1},
                                       \ldots,
                                       \sastateset_\maxord}
            \]
            to $\ta$.  Since
            $\tleaf{\tree}{\idxj}$ is labelled by $\control$ and the stack
            \[
                \stack = \annot{\cha}{\stack_\branch}
                         \scomp{1}
                         \stack_1
                         \scomp{2}
                         \cdots
                         \scomp{\maxord}
                         \stack_\maxord
            \]
            we obtain an accepting run of $\tree$ by replacing the application
            of $\tatrant'$ with $\tatrant$.  This follows because $\stack'_1$
            was accepted from $\sastateset'_1$, $\stack_\branch$ from
            $\sastateset'_\branch$ and
            $\stack_\midord$
            was accepted from both $\sastateset_\midord$ and
            $\sastateset_\branch$.

            When $\midord = 1$ we have
            \[
                \stack' = \annot{\cha}{\stack_1}
                          \scomp{1}
                          \annot{\cha}{\stack_\branch}
                          \stack_1
                          \scomp{2}
                          \cdots
                          \scomp{\maxord}
                          \stack_\maxord \ .
            \]
            Let
            \[
                \satranfull{\sastateset_1}
                           {\cha}
                           {\sastateset'_\branch}
                           {\sastateset'_1}
            \]
            be the first transitions used to accept
            $\annot{\cha}{\stack_\branch}$.
            From the saturation algorithm we also added
            \[
                \tatrant = \tatranfull{\tastate}
                                      {\idxi}
                                      {\numof}
                                      {\control}
                                      {\cha}
                                      {\sastateset'_\branch}
                                      {\sastateset'_1 \cup \sastateset_\branch,
                                       \sastateset_2,
                                       \ldots,
                                       \sastateset_\maxord}
            \]
            to $\ta$.  Since
            $\tleaf{\tree}{\idxj}$ is labelled by $\control$ and the stack
            \[
                \stack = \annot{\cha}{\stack_\branch}
                         \scomp{1}
                         \stack_1
                         \scomp{2}
                         \cdots
                         \scomp{\maxord}
                         \stack_\maxord
            \]
            we obtain an accepting run of $\tree$ by replacing the application
            of $\tatrant'$ with $\tatrant$.  This follows because $\stack'_1$
            was accepted from $\sastateset'_1$, $\stack_\branch$ from
            $\sastateset'_\branch$ and
            $\stack_\midord$
            was accepted from both $\sastateset_\midord$ and
            $\sastateset_\branch$.

        \item
            When
            $\strule = \gtrule{\control}{\spush{\midord}}{\control'}$
            then we have
            \[
                \stack' = \stack_\midord
                          \scomp{\midord}
                          \stack_\midord
                          \scomp{\midord+1}
                          \stack_{\midord+1}
                          \cdots
                          \scomp{\maxord}
                          \stack_\maxord
                \quad
                \text{ and }
                \quad
                \stack_\midord = \annot{\cha}{\stack_\branch}
                                 \scomp{1}
                                 \stack'_1
                                 \scomp{2}
                                 \cdots
                                 \scomp{(\midord-1)}
                                 \stack_{\midord-1} \ .
            \]
            Let
            \[
                \satranfull{\sastateset_\midord}
                           {\cha}
                           {\sastateset'_\branch}
                           {\sastateset'_1, \ldots, \sastateset'_\midord}
            \]
            be the transitions use to accept the first character of the second
            appearance of $\stack_\midord$.  From the saturation algorithm we
            also added $\tatrant =$
            \[
                \tatranfull{\tastate}
                           {\idxi}
                           {\numof}
                           {\control}
                           {\cha}
                           {\sastateset_\branch
                                \cup \sastateset'_\branch}
                           {\sastateset_1 \cup \sastateset'_1,
                            \sastateset_2 \cup \sastateset'_2,
                            \ldots,
                            \sastateset_{\midord-1}
                                \cup \sastateset'_{\midord-1},
                            \sastateset'_\midord,
                            \sastateset_{\midord+1},
                            \ldots,
                            \sastateset_\maxord}
            \]
            to $\ta$.  Since
            $\tleaf{\tree}{\idxj}$ is labelled by $\control$ and the stack
            \[
                \stack = \annot{\cha}{\stack_\branch}
                         \scomp{1}
                         \stack_1
                         \scomp{2}
                         \cdots
                         \scomp{\maxord}
                         \stack_\maxord
            \]
            we obtain an accepting run of $\tree$ by replacing the application
            of $\tatrant'$ with $\tatrant$.  This follows because stacks
            $\stack_1$ to $\stack_{\midord-1}$
            are accepted from $\sastateset_1$ and $\sastateset'_1$ to
            $\sastateset_{\midord-1}$
            and
            $\sastateset'_{\midord-1}$
            respectively, $\stack_\branch$ from $\sastateset_\branch$ and
            $\sastateset'_\branch$, and the remainder of the stack from
            $\sastateset'_\midord$,
            $\sastateset_{\midord+1}$,
            \ldots,
            $\sastateset_\maxord$.

        \item
            When
            $\strule = \gtrule{\control}{\spop{\midord}}{\control'}$
            Then we have
            \[
                \stack' = \stack_\midord
                          \scomp{\midord+1}
                          \stack_{\midord+1}
                          \cdots
                          \scomp{\maxord}
                          \stack_\maxord
            \]
            and
            \[
                \stack = \annot{\cha}{\stack_\branch}
                         \scomp{1}
                         \stack_1
                         \scomp{2}
                         \cdots
                         \scomp{\maxord}
                         \stack_\maxord
            \]
            for some $\cha$, $\stack_\branch$, $\stack_1$, \ldots,
            $\stack_{\midord-1}$.
            We break down $\tatrant'$ to find $\sastate_\midord$ such that
            \[
                \tatranfullk{\tastate}
                            {\idxi}
                            {\numof}
                            {\control'}
                            {\sastate_\midord}
                            {\sastateset_{\midord+1},
                             \ldots
                             \sastateset_\maxord}
            \]
            where $\sastate_\midord$ accepts $\stack_\midord$ and
            $\sastateset_{\midord+1}$
            through to
            $\sastateset_\maxord$
            accept
            $\stack_{\midord+1}$
            through to $\stack_\maxord$ respectively.  By saturation we added
            the transition
            \[
                \tatrant = \tatranfull{\tastate}
                                      {\idxi}
                                      {\numof}
                                      {\control}
                                      {\cha}
                                      {\emptyset}
                                      {\emptyset,
                                       \ldots,
                                       \emptyset,
                                       \set{\sastate_\midord},
                                       \sastateset_{\midord+1},
                                       \ldots,
                                       \sastateset_\maxord}
            \]
            from which we obtain an accepting run of $\stack$ with $\control$ as
            required.

        \item
            When
            $\strule = \gtrule{\control}{\scollapse{\midord}}{\control'}$
            Then we have
            \[
                \stack' = \stack_\branch,
                          \scomp{\midord+1}
                          \stack_{\midord+1}
                          \cdots
                          \scomp{\maxord}
                          \stack_\maxord
            \]
            and
            \[
                \stack = \annot{\cha}{\stack_\branch}
                         \scomp{1}
                         \stack_1
                         \scomp{2}
                         \cdots
                         \scomp{\maxord}
                         \stack_\maxord
            \]
            for some $\cha$, $\stack_\branch$, $\stack_1$, \ldots,
            $\stack_\midord$.
            We break down $\tatrant'$ to find $\sastate_\branch$ such that
            \[
                \tatranfullk{\tastate}
                            {\idxi}
                            {\numof}
                            {\control'}
                            {\sastate_\branch}
                            {\sastateset_{\midord+1},
                             \ldots
                             \sastateset_\maxord}
            \]
            where $\sastate_\midord$ accepts $\stack_\branch$ and
            $\sastateset_{\midord+1}$
            through to
            $\sastateset_\maxord$
            accept
            $\stack_{\midord+1}$
            through to $\stack_\maxord$ respectively.  By saturation we added
            the transition
            \[
                \tatrant = \tatranfull{\tastate}
                                      {\idxi}
                                      {\numof}
                                      {\control}
                                      {\cha}
                                      {\set{\sastate_\branch}}
                                      {\emptyset,
                                       \ldots,
                                       \emptyset,
                                       \sastateset_{\midord+1},
                                       \ldots,
                                       \sastateset_\maxord}
            \]
            from which we obtain an accepting run of $\stack$ with $\control$ as
            required.
    \end{itemize}
    Thus, in all cases we find an accepting run of $\ta$, which completes the
    proof.
\end{proof}

\section{Soundness of Saturation}

We prove that the automaton $\ta$ constructed by saturation only accepts trees
in
$\prestar{\gstrs}{\ta_0}$.
The proof relies on the notion of a ``sound'' automaton.  There are several
stages to the proof.
\begin{itemize}
    \item
        We assign meanings to each state of the automaton that ultimately
        capture inclusion in
        $\prestar{\gstrs}{\ta_0}$.

    \item
        We use these meanings to derive a notion of sound transitions.

    \item
        We define a sound automaton based on the notion of sound transitions.

    \item
        We show sound tree automata only accept trees in
        $\prestar{\gstrs}{\ta_0}$.

    \item
        We show the initial automaton $\ta_0$ is sound, and moreover, each
        saturation step preserves soundness, from which we conclude soundness of
        the saturation algorithm.
\end{itemize}

To define the meanings of the states we need to reason
about partial runs of our stack tree automata.  Hence for a tree automaton $\ta$
we define
\[
    \tweaklang{\ta}
\]
to accept trees over the set of control states $\tastates$ (instead of
$\controls$).  That is, we can accept prefixes of trees accepted by $\ta$ by
labelling the leaves with the states that would have appeared on an accepting
run of the full tree.

Furthermore, we write
\[
    \tlang{\tastate_1, \ldots, \tastate_\numof}{\ta}
\]
to denote the set of trees $\tree$ in
$\tweaklang{\ta}$
such that $\tree$ has $\numof$ leaves and the ``control'' states (which now
includes all states in $\tastates$) appearing on the leaves are
$\tastate_1, \ldots, \tastate_\numof$
respectively.  As a special case,
$\tlang{\tastatef}{\ta}$
for all
$\tastatef \in \tafinals$
contains only the empty tree.

\subsection{Meaning of a State}

We assign to each state of the automaton a ``meaning''.  This meaning captures
the requirement that the states $\control$ of the automaton should accept
$\prestar{\gstrs}{\ta_0}$, while the meanings of the non-initial states are
given by the automaton itself (i.e. the states should accept everything they
accept).  For states accepting stacks, the non-initial states again have the
trivial meaning (they should accept what they accept), while the meanings of the
initial states are inherited from the transitions that they label.

We write $\tastateseq$ to denote a sequence
$\tastate_1, \ldots, \tastate_\numof$
and
$\seqlen{\tastate_1, \ldots, \tastate_\numof}$
is $\numof$.

Let $\tenv$ be a partial mapping of nodes to states in $\tastates$, let
$\tenvempty$ be the empty mapping, and let
\[
    \ap{\envmod{\tenv}{\tnode}{\tastate}}{\tnode'} =
    \begin{cases}
        \tastate & \tnode = \tnode' \\
        \ap{\tenv}{\tnode'} & \tnode \neq \tnode' \ .
    \end{cases}
\]
We use these mappings in definition below to place conditions on nodes in the
tree that restrict runs witnessing membership in
$\prestar{\gstrs}{\ta_0}$.

\begin{definition}[$\tree \tmodels{\tenv} \tastate_1, \ldots, \tastate_\numof$]
    If $\tree$ has $\numof$ leaves labelled
    $\tastate_1, \ldots, \tastate_\numof$ respectively
    then
    $\tree \tmodels{\tenv} \tastate_1, \ldots, \tastate_\numof$
    whenever
    $\tree \in \prestar{\gstrs}{\tweaklang{\ta_0}}$
    and there is a run to some
    $\tree' \in \tweaklang{\ta_0}$
    such that -- fixing an accepting run of $\ta_0$ over $\tree'$ -- for all
    nodes $\tnode$ of $\tree$ with
    $\ap{\tenv}{\tnode} = \tastate$,
    then
    \begin{itemize}
        \item
            if
            $\tastate \in \controls$
            then $\tnode$ appears as a leaf during the run and on the first such
            tree in the run, $\tnode$ has control state $\tastate$.

        \item
            if
            $\tastate \notin \controls$
            then $\tnode$ is not a leaf of any tree on the run and the
            accepting run of $\ta$ over $\tree'$ labels $\tnode$ with
            $\tastate$.
    \end{itemize}
    As a special case, when $\tree$ is empty we have
    $\tree \tmodels{\tenvempty} \tastatef$
    and
    $\tastatef \in \tafinals$.
\end{definition}

Once we have assigned meanings to the states of $\tastates$, we need to derive
meanings for the states in
$\sastates_\maxord, \ldots, \sastates_1$.
We first introduce some notation.
\[
    \tree
    \treeplus{\idxi}
    \tup{\tastate_1, \stack_1}, \ldots, \tup{\tastate_\numof, \stack_\numof}
    =
    \treemod{
        \treemod{
            \treemod{\tree}
                    {\tleaf{\tree}{\idxi}}
                    {\stack}
        }
        {\tleaf{\tree}{\idxi}1}
        {\tup{\tastate_1, \stack_1}}
        \cdots
    }
    {\tleaf{\tree}{\idxi}\numof}
    {\tup{\tastate_\numof, \stack_\numof}}
\]
when $\tree$ is non-empty and $\stack$ is the stack labelling
$\tleaf{\tree}{\idxi}$ in $\tree$.  When $\tree$ is empty we have
\[
    \tree \treeplus{0} \tup{\tastate_1, \stack_1}
\]
is the single-node tree labelled by $\tup{\tastate_1, \stack_1}$.

In the definition below we assign meanings to states accepting stacks.  The
first case is the simple case where a state is non-initial, and its meaning is
to accept the set of stacks it accepts.

The second case derives a meaning of a state in $\sastates_\midord$ by
inheriting the meaning from the states of $\sastates_{\midord+1}$.  Intuitively,
if we have a transition
$\sastate_{\midord+1} \satran{\sastate_\midord} \sastateset_{\midord+1}$
then the meaning of $\sastate_\midord$ is that it should accept all stacks that
could appear on top of a stack in the meaning of $\sastateset_{\midord+1}$ to
form a stack in the meaning of $\sastate_\midord$.

The final case is a generalisation of the above case to trees.  The states in
$\sastates_\maxord$ should accept all stacks that could appear on a node of the
tree consistent with a run of the stack tree automaton and the meanings of the
states in $\tastates$.

\begin{definition}[$\stack \smodels \sastate$]
    For any
    $\sastateset \subseteq \sastates_\midord$
    and any order-$\midord$ stack $\stack$, we write
    $\stack \smodels \sastateset$
    if
    $\stack \smodels \sastate$
    for all
    $\sastate \in \sastateset$.
    We define
    $\stack \smodels \sastate$
    by a case distinction on
    $\sastate$.
    \begin{enumerate}
        \item
            When $\sastate$ is a non-initial state in $\sastates_\midord$, then
            we have
            $\stack \smodels \sastate$
            if $\stack$ is accepted from $\sastate$.

        \item
            If $\sastate_\midord$ is an initial state in $\sastates_\midord$ with
            $\midord < \maxord$
            labelling a transition
            $\sastate_{\midord+1} \satran{\sastate_\midord} \sastateset_{\midord+1}
                \in \sadelta_{\midord+1}$
            then we have
            $\stack \smodels \sastate_\midord$
            if for all stacks $\stack'$ such that
            $\stack' \smodels \sastateset_{\midord+1}$
            we have
            $\stack \scomp{\midord+1} \stack' \smodels \sastate_{\midord+1}$.

        \item
        \label{item:order-n-states}
            We have
            $\stack \smodels \sastate$
            where
            $\tatran{\tastate}{\idxi}{\numof}{\tastate'}{\sastate}$
            if for all transitions
            \[
                \tatran{\tastate}{1}{\numof}{\tastate_1}{\sastate_1},
                \ldots,
                \tatran{\tastate}{\numof}{\numof}{\tastate_\numof}{\sastate_\numof}
            \]
            trees
            $\tree \tmodels{\tenv} \tastateseq_1, \tastate, \tastateseq_2$
            and stacks
            $\stack_1, \ldots, \stack_\numof$
            such that
            \[
                \tree
                \treeplus{\idxj}
                \tup{\tastate_1, \stack_1},
                \ldots,
                \tup{\tastate_\numof, \stack_\numof}
                \tmodelsm{\tenv}{\tleaf{\tree}{\idxj}}{\tastate}
                \tastateseq_1,
                \tastate_1, \ldots, \tastate_\numof,
                \tastateseq_2
            \]
            where
            $\idxj = \seqlen{\tastateseq_1} + 1$,
            we have
            \begin{multline*}
                \tree
                \treeplus{\idxj}
                \tup{\tastate_1, \stack_1},
                \ldots,
                \tup{\tastate_{\idxi-1}, \stack_{\idxi-1}},
                \tup{\tastate', \stack},
                \tup{\tastate_{\idxi+1}, \stack_{\idxi+1}},
                \ldots,
                \tup{\tastate_\numof, \stack_\numof}
                \\
                \tmodelsm{\tenv}{\tleaf{\tree}{\idxj}}{\tastate}
                \tastateseq_1,
                \tastate_1, \ldots, \tastate_{\idxi-1},
                \tastate',
                \tastate_{\idxi+1}, \ldots, \tastate_\numof,
                \tastateseq_2 \ .
            \end{multline*}
    \end{enumerate}
\end{definition}

Note that item \ref{item:order-n-states} of the definition of $\smodels$
contains a vacuity in that there may be no
$\stack_1, \ldots, \stack_\numof$
satisfying the antecedent (in which case all stacks would be in the meaning of
$\sastate$).  Hence, we require a non-redundancy condition on the automata.

\begin{definition}[Non-Redundancy]
    An order-$\maxord$ annotated stack tree automaton
    \[
        \ta = \tup{
                   \tastates, \sastates_\maxord,\ldots,\sastates_1,
                   \salphabet,
                   \tadelta, \sadelta_\maxord,\ldots,\sadelta_1,
                   \controls,
                   \tafinals, \safinals_\maxord,\ldots,\safinals_1
               }
    \]
    is \emph{non-redundant} if for all
    $\tastate \in \tastates$
    we have that either $\tastate$ has no-incoming transitions, or there exist
    \[
        \tatran{\tastate}{1}{\numof}{\tastate_1}{\sastate_1},
        \ldots,
        \tatran{\tastate}{\numof}{\numof}{\tastate_\numof}{\sastate_\numof}
        \in
        \tadelta
    \]
    such that for all
    $\tree \tmodels{\tenv} \tastateseq_1, \tastate, \tastateseq_2$
    there exist
    $\stack_1, \ldots, \stack_\numof$
    such that
    \[
        \tree
        \treeplus{\idxj}
        \tup{\tastate_1, \stack_1},
        \ldots,
        \tup{\tastate_\numof, \stack_\numof}
        \tmodelsm{\tenv}{\tleaf{\tree}{\idxj}}{\tastate}
        \tastateseq_1,
        \tastate_1, \ldots, \tastate_\numof,
        \tastateseq_2
    \]
    where $\idxj = \seqlen{\tastateseq_1} + 1$.
\end{definition}

This property can be easily satisfied in $\ta_0$ by removing states $\tastate$
that do not satisfy the non-redundancy conditions (this does not change the
language since there were no trees that could be accepted using $\tastate$).
We show later that the property is maintained by saturation.

\subsection{Soundness of a Transition}

After assigning meanings to states, we can define a notion of soundness for the
transitions of the automata.  Intuitively, a transition is sound if it respects
the meanings of its source and target states.

One may derive some more intuition by considering a transition
$q \xrightarrow{a} q'$
of a finite word automaton.  The transition would be sound if, for every word
$w$ in the meaning of $q'$, the same word with an $a$ in front is in the
meaning of $q$.  That is, the transition is sound if an $a$ can appear on
anything accepted from $q'$.  The following definition translates the same idea
to the case of stack trees.

\begin{definition}[Soundness of transitions]
    There are two cases given below.
    \begin{enumerate}
        \item
            A transition
            $\satranfull{\sastate_\midord}
                        {\cha}
                        {\sastateset_\branch}
                        {\sastateset_1, \ldots, \sastateset_\midord}$
            is sound if for any
            $\stack_1 \smodels \sastateset_1$,
            \ldots,
            $\stack_\midord \smodels \sastateset_\midord$
            and
            $\stack_\branch \smodels \sastateset_\branch$
            we have
            $\annot{\cha}{\stack_\branch}
             \scomp{1}
             \stack_1
             \scomp{2}
             \cdots
             \scomp{\midord}
             \stack_\midord
             \smodels
             \sastate_\midord$.

        \item
            A transition
            \[
                \tatranfull{\tastate}
                           {\idxi}
                           {\numof}
                           {\tastate'}
                           {\cha}
                           {\sastateset_\branch}
                           {\sastateset_1, \ldots, \sastateset_\maxord},
            \]
            is sound if for all trees
            $\tree \tmodels{\tenv} \tastateseq_1, \tastate, \tastateseq_2$
            and stacks
            $\stack_1 \smodels \sastateset_1$,
            \ldots
            $\stack_\numof \smodels \sastateset_\numof$,
            and
            $\stack_\branch \smodels \sastateset_\branch$
            and for all
            \[
                \tatran{\tastate}{1}{\numof}{\tastate_1}{\sastate_1},
                \ldots,
                \tatran{\tastate}{\numof}{\numof}{\tastate_\numof}{\sastate_\numof}
            \]
            and stacks
            $\stack'_1, \ldots, \stack'_\numof$
            such that
            \[
                \tree
                \treeplus{\idxj}
                \tup{\tastate_1, \stack'_1},
                \ldots,
                \tup{\tastate_\numof, \stack'_\numof}
                \tmodelsm{\tenv}{\tleaf{\tree}{\idxj}}{\tastate}
                \tastateseq_1,
                \tastate_1, \ldots, \tastate_\numof,
                \tastateseq_2
            \]
            where
            $\idxj = \seqlen{\tastateseq_1} + 1$,
            we have
            \begin{multline*}
                \tree
                \treeplus{\idxj}
                \tup{\tastate_1, \stack'_1},
                \ldots,
                \tup{\tastate_{\idxi-1}, \stack'_{\idxi-1}},
                \tup{\tastate', \stack},
                \tup{\tastate_{\idxi+1}, \stack'_{\idxi+1}},
                \ldots,
                \tup{\tastate_\numof, \stack'_\numof}
                \\
                \tmodelsm{\tenv}{\tleaf{\tree}{\idxj}}{\tastate}
                \tastateseq_1,
                \tastate_1, \ldots, \tastate_{\idxi-1},
                \tastate',
                \tastate_{\idxi+1}, \ldots, \tastate_\numof,
                \tastateseq_2 \ .
            \end{multline*}
            where
            \[
                \stack = \annot{\cha}{\stack_\branch}
                         \scomp{1}
                         \stack_1
                         \scomp{2}
                         \cdots
                         \scomp{\maxord}
                         \stack_\maxord\ .
            \]
    \end{enumerate}
\end{definition}

In the proof, we will have to show that saturation builds a sound automaton.
This means proving soundness for each new transition.  The following lemma shows
that it suffices to only show soundness for the outer collections of
transitions.

\begin{namedlemma}{lem:sound-cascade}{Cascading Soundness}
    If a transition
    \[
        \tatranfull{\tastate}
                   {\idxi}
                   {\numof}
                   {\tastate'}
                   {\cha}
                   {\sastateset_\branch}
                   {\sastateset_1, \ldots, \sastateset_\maxord},
    \]
    is sound then all transitions
    $\satranfull{\sastate_\midord}
                {\cha}
                {\sastateset_\branch}
                {\sastateset_1, \ldots, \sastateset_\midord}$
    appearing within the transition are also sound.
\end{namedlemma}
\begin{proof}
    We march by induction.  Initially
    $\midord = \maxord$
    and we have
    $\satranfull{\sastate}
                {\cha}
                {\sastateset_\branch}
                {\sastateset_1, \ldots, \sastateset_\maxord}$
    where
    $\tatran{\tastate}{\idxi}{\numof}{\tastate_\idxi}{\sastate}$.
    To prove soundness of the transition from $\sastate$, take
    $\stack_1 \smodels \sastateset^\idxi_1$,
    \ldots,
    $\stack_\maxord \smodels \sastateset^\idxi_\maxord$,
    and
    $\stack_\branch \smodels \sastateset^\idxi_\branch$.
    We need to show
    \[
        \stack = \annot{\cha}{\stack_\branch}
                 \scomp{1}
                 \stack_1
                 \scomp{2}
                 \cdots
                 \scomp{\maxord}
                 \stack_\maxord
        \smodels \sastate \ .
    \]
    This is the case if, letting
    $\idxj = \seqlen{\tastateseq_1}$,
    for all transitions
    \[
        \tatran{\tastate}{1}{\numof}{\tastate_1}{\sastate_1},
        \ldots,
        \tatran{\tastate}{\numof}{\numof}{\tastate_\numof}{\sastate_\numof}
    \]
    trees
    $\tree \tmodels{\tenv} \tastateseq_1, \tastate, \tastateseq_2$
    and stacks
    $\stack_1, \ldots, \stack_\numof$
    such that
    \[
        \tree
        \treeplus{\idxj}
        \tup{\tastate_1, \stack_1},
        \ldots,
        \tup{\tastate_\numof, \stack_\numof}
        \tmodelsm{\tenv}{\tleaf{\tree}{\idxj}}{\tastate}
        \tastateseq_1,
        \tastate_1, \ldots, \tastate_\numof,
        \tastateseq_2
    \]
    we have
    \begin{multline*}
        \tree
        \treeplus{\idxj}
        \tup{\tastate_1, \stack_1},
        \ldots,
        \tup{\tastate_{\idxi-1}, \stack_{\idxi-1}},
        \tup{\tastate', \stack},
        \tup{\tastate_{\idxi+1}, \stack_{\idxi+1}},
        \ldots,
        \tup{\tastate_\numof, \stack_\numof}
        \\
        \tmodelsm{\tenv}{\tleaf{\tree}{\idxj}}{\tastate}
        \tastateseq_1,
        \tastate_1, \ldots, \tastate_{\idxi-1},
        \tastate',
        \tastate_{\idxi+1}, \ldots, \tastate_\numof,
        \tastateseq_2 \ .
    \end{multline*}
    These properties are derived immediately from the fact that
    \[
        \tatranfull{\tastate}
                   {\idxi}
                   {\numof}
                   {\tastate'}
                   {\cha}
                   {\sastateset_\branch}
                   {\sastateset_1, \ldots, \sastateset_\maxord},
    \]
    is sound, hence we are done.

    When
    $\midord < \maxord$
    we assume
    $\satranfull{\sastate_{\midord+1}}
                {\cha}
                {\sastateset_\branch}
                {\sastateset_1, \ldots, \sastateset_{\midord+1}}$
    is sound and
    $\sastate_{\midord+1} \satran{\sastate_\midord} \sastateset_{\midord+1}$.
    We show
    $\satranfull{\sastate_\midord}
                {\cha}
                {\sastateset_\branch}
                {\sastateset_1, \ldots, \sastateset_\midord}$
    is also sound.  For this, we take any stacks
    $\stack_1 \smodels \sastateset_1$,
    \ldots
    $\stack_\midord \smodels \sastateset_\midord$,
    and
    $\stack_\branch \smodels \sastateset_\branch$.
    We need to show
    \[
        \stack = \annot{\cha}{\stack_\branch}
                 \scomp{1}
                 \stack_1
                 \scomp{2}
                 \cdots
                 \scomp{\midord}
                 \stack_\midord
        \smodels \sastate_\midord \ .
    \]
    For this, we need for all
    $\stack' \models \sastateset_{\midord+1}$
    that
    $\stack \scomp{(\midord+1)} \stack' \smodels \sastate_{\midord+1}$.
    From the soundness of
    $\satranfull{\sastate_{\midord+1}}
                {\cha}
                {\sastateset_\branch}
                {\sastateset_1, \ldots, \sastateset_{\midord+1}}$
    we have
    \[
        \stack \scomp{(\midord+1)} \stack'
        =
        \annot{\cha}{\stack_\branch}
        \scomp{1}
        \stack_1
        \scomp{2}
        \cdots
        \scomp{(\midord+1)}
        \stack_{\midord+1}
        \smodels
        \sastate_{\midord+1}
    \]
    and we are done.
\end{proof}

\subsection{Soundness of Annotated Stack Tree Automata}

We will prove the saturation constructs a sound automaton.
We first define what it means for an automaton to be sound and prove that a
sound automaton only accepts trees in
$\prestar{\gstrs}{\ta_0}$.

\begin{definition}[Soundness of Annotated Stack Tree Automata]
    An annotated stack tree automaton $\ta$ is sound if
    \begin{enumerate}
        \item
            $\ta$ is obtained from $\ta_0$ by adding new initial states to
            $\sastates_1, \ldots, \sastates_\maxord$ and transitions starting at
            initial states, and

        \item
            in $\ta$, all transitions
            \[
                \tatranfull{\tastate}
                           {\idxi}
                           {\numof}
                           {\tastate'}
                           {\cha}
                           {\sastateset_\branch}
                           {\sastateset_1, \ldots, \sastateset_\maxord}
            \]
            and
            \[
                \satranfull{\sastate_\midord}
                           {\cha}
                           {\sastateset_\branch}
                           {\sastateset_1, \ldots, \sastateset_\midord}
            \]
            are sound, and

        \item
            $\ta$ is non-redundant.
    \end{enumerate}
\end{definition}

We show that a sound annotated stack tree automaton can only accept trees
belonging to
$\prestar{\gstrs}{\ta_0}$.
In fact, we prove a more general result.  In the following lemma, note the
particular case where
$\tree \in \tlang{\tastateseq}{\ta}$
and $\tastateseq$ is a sequence of states in $\controls$ then we have
$\tree \in \prestar{\gstrs}{\ta_0}$.
That is,
$\langof{\ta} \subseteq \prestar{\gstrs}{\ta_0}$.

\begin{namedlemma}{lem:sound-trees}{Sound Acceptance}
    Let $\ta$ be a sound annotated stack automaton.  For all
    $\tree \in \tlang{\tastateseq}{\ta}$
    we have
    $\tree \tmodels{\emptyset} \tastateseq$.
\end{namedlemma}

Before we can prove the result about trees, we first prove a related result
about stacks.  This result and proof is taken almost directly from ICALP
2012~\cite{BCHS12}.

\begin{namedlemma}{lem:sound-stacks}{Sound Acceptance of Stacks}
    Let $\ta$ be a sound annotated stack automaton.  If $\ta$ accepts an
    order-$\midord$ stack $\stack$ from
    $\sastate \in \sastates_\midord$
    then
    $\stack \smodels \sastate$.
\end{namedlemma}
\begin{proof}
    We proceed by induction on the size of the stack (where the size of an
    annotated stack is defined to be the size of a tree representing the stack).

    Let $\stack$ be an order-$\midord$ stack accepted from a state
    $\sastate \in \sastates_\midord$.
    We assume that the property holds for any smaller stack.

    If $\stack$ is empty then $\sastate$ is a final state.  Recall that by
    assumption final states are not initial, hence $\sastate$ is not initial.
    It follows that the empty stack is accepted from $\sastate$ in $\ta_0$ and
    hence $\stack \smodels \sastate$.

    If $\stack$ is a non-empty stack of order-$1$, then
    $\stack = \annot{\cha}{\stack_\branch} \scomp{1} \stack_1$.
    As $\stack$ is accepted from $\sastate$, there exists a transition
    $\satranfull{\sastate}{\cha}{\sastateset_\branch}{\sastateset_1}$
    such that
    $\stack_1$ is accepted from $\sastateset_1$ and $\stack_\branch$ is accepted
    from $\sastateset_\branch$.  By induction we have
    $\stack_1 \smodels \sastateset_1$
    and
    $\stack_\branch \smodels \sastateset_\branch$.
    Since the transition is sound, we have
    $\stack \smodels \sastate$.

    If $\stack$ is a non-empty stack of order-$\midord$, then
    $\stack = \stack_{\midord-1} \scomp{\midord} \stack_\midord$.
    As $\stack$ is accepted from $\sastate$, there exists a transition
    $\sastate \satran{\sastate'} \sastateset$
    such that
    $\stack_\midord$ is accepted from $\sastateset$ and
    $\stack_{\midord-1}$
    is accepted from $\sastate'$.  By induction we have
    $\stack_{\midord-1} \smodels \sastate'$
    and
    $\stack_\midord \smodels \sastateset_\midord$.
    Thus, by the definition of
    $\stack_{\midord-1} \smodels \sastate'$
    we also have
    $\stack = \stack_{\midord-1} \scomp{\midord} \stack_\midord
        \smodels \sastate$.
\end{proof}

We are now ready to prove \reflemma{lem:sound-trees}.
\begin{proof}[Proof of \reflemma{lem:sound-trees}]
    We proceed by induction on the number of nodes in the tree.  In the base
    case, we have
    $\tree \in \tlang{\tastatef}{\ta}$
    for some
    $\tastatef \in \tafinals$
    and $\tree$ is empty.  Thus, we immediately have
    $\tree \tmodels{\tenvempty} \tastatef$.

    Thus, take some non-empty
    $\tree \in \tlang{\tastateseq}{\ta}$.
    Let the sequence
    $\tleaf{\tree}{\idxi}, \ldots, \tleaf{\tree}{\idxi+\numof}$
    be the first complete group of siblings that are all leaf nodes and let
    $\tastateseq
     =
     \tastateseq_1,
     \tastate_1, \ldots, \tastate_\numof,
     \tastateseq_2$
    be the decomposition of $\tastateseq$ such that $\tastateseq_1$ is of length
    $(\idxi-1)$.  That is,
    $\tastate_1, \ldots, \tastate_\numof$
    label the identified leaves of $\tree$.  Furthermore, let
    $\stack_1, \ldots, \stack_\numof$
    be the respective stacks labelling these leaves.  Take the set of
    transitions
    \[
        \tatran{\tastate}{1}{\numof}{\tastate_1}{\sastate_1},
        \ldots
        \tatran{\tastate}{\numof}{\numof}{\tastate_{\numof}}{\sastate_{\numof}}
    \]
    that are used in the accepting run of $\tree$ and the identified leaves.
    Let $\tree'$ be the tree obtained by removing
    $\tleaf{\tree}{\idxi}, \ldots, \tleaf{\tree}{\idxi+\numof'}$.
    We have
    $\tree' \in \tlang{\tastateseq_1, \tastate, \tastateseq_2}{\ta}$
    and by induction
    $\tree' \tmodels{\tenvempty} \tastateseq_1, \tastate, \tastateseq_2$.

    Since $\tastate$ has incoming transitions and $\ta$ is non-redundant, we
    know there exists
    \[
        \tatran{\tastate}{1}{\numof}{\tastate'_1}{\sastate'_1},
        \ldots
        \tatran{\tastate}{\numof}{\numof}{\tastate'_{\numof}}{\sastate'_{\numof}}
    \]
    and
    $\stack'_1, \ldots, \stack'_\numof$
    such that
    \[
        \tree'
        \treeplus{\idxi}
        \tup{\tastate'_1, \stack'_1},
        \ldots,
        \tup{\tastate'_\numof, \stack'_\numof}
        \tmodelsm{\tenvempty}{\tleaf{\tree}{\idxi}}{\tastate}
        \tastateseq_1,
        \tastate'_1, \ldots, \tastate'_\numof,
        \tastateseq_2 \ .
    \]

    Since
    $\stack_1 \smodels \sastate_1$
    we infer from the definition of $\smodels$ at $\sastate_1$ that
    \[
        \tree'
        \treeplus{\idxi}
        \tup{\tastate_1, \stack_1},
        \tup{\tastate'_2, \stack'_2},
        \ldots,
        \tup{\tastate'_\numof, \stack'_\numof}
        \tmodelsm{\tenvempty}{\tleaf{\tree}{\idxi}}{\tastate}
        \tastateseq_1,
        \tastate_1,
        \tastate'_2, \ldots, \tastate'_\numof,
        \tastateseq_2 \ .
    \]
    By repeated applications of the above for each $1 < \idxj \leq \numof$, we
    obtain
     \[
        \tree'
        \treeplus{\idxi}
        \tup{\tastate_1, \stack_1},
        \tup{\tastate_2, \stack_2},
        \ldots,
        \tup{\tastate_\numof, \stack_\numof}
        \tmodelsm{\tenvempty}{\tleaf{\tree}{\idxi}}{\tastate}
        \tastateseq_1,
        \tastate_1, \ldots, \tastate_\numof,
        \tastateseq_2 \ .
    \]
    This implies
    $\tree \tmodels{\tenvempty} \tastateseq$
    since
    $\tmodels{\tenvempty}$
    is less restrictive than
    $\tmodelsm{\tenvempty}{\tleaf{\tree}{\idxi}}{\tastate}$.
\end{proof}

\subsection{Soundness of Saturation}

We first prove that $\ta_0$ is sound, and then that saturation maintains the
property.

\begin{namedlemma}{lem:init-sound}{Soundness of $\ta_0$}
    The initial automaton $\ta_0$ is sound.
\end{namedlemma}
\begin{proof}
    It is trivial that $\ta_0$ is obtained from $\ta_0$, and moreover, we assume
    the non-redundancy condition.  Hence, From \reflemma{lem:sound-cascade} we
    only need to prove soundness of non-initial transitions of the form
    \[
        \satranfull{\sastate_\midord}
                   {\cha}
                   {\sastateset_\branch}
                   {\sastateset_1, \ldots, \sastateset_\maxord}
    \]
    and for transitions in $\tadelta$.

    We first show the case for non-initial
    \[
        \satranfull{\sastate_\midord}
                   {\cha}
                   {\sastateset_\branch}
                   {\sastateset_1, \ldots, \sastateset_\maxord}
    \]
    which is the same as in ICALP 2012.  First note that
    $\sastateset_1, \ldots, \sastateset_\maxord$
    and
    $\sastateset_\branch$
    do not contain initial states.  Then we take
    $\stack_1 \smodels \sastateset_1$,
    \ldots
    $\stack_\midord \smodels \sastateset_\midord$
    and
    $\stack_\branch \smodels \sastateset_\branch$.
    We have to show
    $\annot{\cha}{\stack_\branch}
     \scomp{1}
     \stack_1
     \scomp{2}
     \cdots
     \scomp{\midord}
     \stack_\midord
     \smodels
     \sastate_\midord$.
    In particular, since $\sastate_\midord$ is not initial, we only need to
    construct an accepting run.  Since $\sastateset_\idxi$ and
    $\sastateset_\branch$ are not initial, we have accepting runs from these
    states.  Hence, we build immediately the run beginning with
    $\satranfull{\sastate_\midord}
                {\cha}
                {\sastateset_\branch}
                {\sastateset_1, \ldots, \sastateset_\maxord}$.

    We now prove the case for
    \[
        \tatranfull{\tastate}
                   {\idxi}
                   {\numof}
                   {\tastate'}
                   {\cha}
                   {\sastateset_\branch}
                   {\sastateset_1, \ldots, \sastateset_\maxord} \ .
    \]
    Thus, take any
    $\stack_1 \smodels \sastateset_1$,
    \ldots
    $\stack_\numof \smodels \sastateset_\numof$,
    and
    $\stack_\branch \smodels \sastateset_\branch$
    and any tree
    $\tree \tmodels{\tenv} \tastateseq_1, \tastate, \tastateseq_2$
    and, letting
    $\idxj = \seqlen{\tastateseq_1} + 1$, any
    \[
        \tatran{\tastate}{1}{\numof}{\tastate_1}{\sastate_1},
        \ldots
        \tatran{\tastate}{\numof}{\numof}{\tastate_\numof}{\sastate_\numof}
    \]
    and any
    $\stack'_1, \ldots, \stack'_\numof$
    such that
    $\tree
     \treeplus{\idxj}
     \tup{\tastate_1, \stack'_1}
     \ldots
     \tup{\tastate_\numof, \stack'_\numof}
     \tmodelsm{\tenv}{\tleaf{\tree}{\idxj}}{\tastate}
     \tastateseq_1,
     \tastate_1, \ldots, \tastate_\numof,
     \tastateseq_\numof$.
    Since initial states have no incoming transitions, we know $\tastate$ is not
    a control sate.  We thus have a
    run $\run$ from
    $\tree
     \treeplus{\idxj}
     \tup{\tastate_1, \stack'_1}
     \ldots
     \tup{\tastate_\numof, \stack'_\numof}$
    to some
    $\tree' \in \tweaklang{\ta}$
    such that
    $\tleaf{\tree}{\idxj}$
    does not appear as a leaf of any tree in the run.

    To prove soundness we argue that
    \begin{multline}
        \tree
        \treeplus{\idxj}
        \tup{\tastate_1, \stack'_1},
        \ldots,
        \tup{\tastate_{\idxi-1}, \stack'_{\idxi-1}},
        \tup{\tastate', \stack},
        \tup{\tastate_{\idxi+1}, \stack'_{\idxi+1}},
        \ldots,
        \tup{\tastate_\numof, \stack'_\numof}
        \\
        \tmodelsm{\tenv}{\tleaf{\tree}{\idxj}}{\tastate}
        \tastateseq_1,
        \tastate_1, \ldots, \tastate_{\idxi-1},
        \tastate',
        \tastate_{\idxi+1}, \ldots, \tastate_\numof,
        \tastateseq_2
        \label{eqn:initial-soundness-long}
    \end{multline}
    where
    $\stack
     =
     \annot{\cha}{\stack_\branch}
     \scomp{1}
     \stack_1
     \scomp{2}
     \cdots
     \scomp{\maxord}
     \stack_\maxord$.
    To do so, we take the run $\run$ obtained above and build a run $\run'$ by
    removing all operations applied to nodes that are descendants of
    $\tleaf{\tree}{\idxj}\idxi$.
    Observe that $\run'$ can be applied to
    \[
        \tree
        \treeplus{\idxj}
        \tup{\tastate_1, \stack'_1},
        \ldots,
        \tup{\tastate_{\idxi-1}, \stack'_{\idxi-1}},
        \tup{\tastate', \stack},
        \tup{\tastate_{\idxi+1}, \stack'_{\idxi+1}},
        \ldots,
        \tup{\tastate_\numof, \stack'_\numof}
    \]
    since none of the operations apply to a descendant of
    $\tleaf{\tree}{\idxj}\idxi$.
    By applying this run we obtain a tree $\tree''$ which is $\tree'$
    less all nodes that are strict descendants of
    $\tleaf{\tree}{\idxj}\idxi$
    and where
    $\tleaf{\tree}{\idxj}\idxi$
    is labelled by
    $\tup{\tastate', \stack}$.
    Thus, we take the accepting run of $\tree'$ witnessing
    $\tree' \in \tweaklang{\ta_0}$,
    remove all nodes that are strict descendants of
    $\tleaf{\tree}{\idxj}\idxi$
    and label
    $\tleaf{\tree}{\idxj}\idxi$
    by $\tastate'$.  This gives us a run witnessing
    $\tree'' \in \tweaklang{\ta_0}$
    by using
    \[
        \tatranfull{\tastate}
                   {\idxi}
                   {\numof}
                   {\tastate'}
                   {\cha}
                   {\sastateset_\branch}
                   {\sastateset_1, \ldots, \sastateset_\maxord} \ .
    \]
    at
    $\tleaf{\tree}{\idxj}\idxi$
    and the accepting runs from the non-initial $\sastateset_\branch$,
    $\sastateset_1, \ldots, \sastateset_\maxord$.  This gives us
    (\ref{eqn:initial-soundness-long}) as required.
\end{proof}

We now show that, at every stage of saturation, we maintain a sound automaton.

\begin{namedlemma}
      {lem:saturation-soundness-step}
      {Soundness of the Saturation Step}
    Given a sound automaton $\ta$, we have $\ta' = \ap{\satfn}{\ta}$ is sound.
\end{namedlemma}
\begin{proof}
    We analyse all new transitions
    \[
        \tatranfull{\tastate}
                   {\idxi}
                   {\numof}
                   {\control}
                   {\cha}
                   {\sastateset^\mnew_\branch}
                   {\sastateset^\mnew_1, \ldots, \sastateset^\mnew_\maxord} \ .
    \]
    Proving these transitions are sound and do not cause redundancy is sufficient
    via \reflemma{lem:sound-cascade}.

    Let us begin with the transitions introduced by rules that do not remove
    nodes from the tree.  We argue that for all trees
    $\tree \tmodels{\tenv} \tastateseq_1, \tastate, \tastateseq_2$
    and stacks
    $\stack_1 \smodels \sastateset^\mnew_1$,
    \ldots
    $\stack_\numof \smodels \sastateset^\mnew_\numof$,
    and
    $\stack_\branch \smodels \sastateset^\mnew_\branch$
    and for all
    \[
        \tatran{\tastate}{1}{\numof}{\tastate_1}{\sastate_1},
        \ldots,
        \tatran{\tastate}{\numof}{\numof}{\tastate_\numof}{\sastate_\numof}
    \]
    and stacks
    $\stack'_1, \ldots, \stack'_\numof$
    such that
    \[
        \tree
        \treeplus{\idxj}
        \tup{\tastate_1, \stack'_1},
        \ldots,
        \tup{\tastate_\numof, \stack'_\numof}
        \tmodelsm{\tenv}{\tleaf{\tree}{\idxj}}{\tastate}
        \tastateseq_1,
        \tastate_1, \ldots, \tastate_\numof,
        \tastateseq_2
    \]
    where
    $\idxj = \seqlen{\tastateseq_1} + 1$
    we have, letting
    \[
        \tree_1 =
        \tree
        \treeplus{\idxj}
        \tup{\tastate_1, \stack'_1},
        \ldots,
        \tup{\tastate_{\idxi-1}, \stack'_{\idxi-1}},
        \tup{\control, \stack},
        \tup{\tastate_{\idxi+1}, \stack'_{\idxi+1}},
        \ldots,
        \tup{\tastate_\numof, \stack'_\numof}
    \]
    and
    $\tastateseq'_1 = \tastateseq_1, \tastate_1, \ldots, \tastate_{\idxi-1}$
    and
    $\tastateseq'_2 = \tastate_{\idxi+1}, \ldots, \tastate_\numof, \tastateseq_2$
    that
    \begin{equation}
        \label{eqn:soundness-step-prop}
        \tree_1
        \tmodelsm{\tenv}{\tleaf{\tree}{\idxj}}{\tastate}
        \tastateseq'_1,
        \control,
        \tastateseq'_2
    \end{equation}
    where
    $\stack = \annot{\cha}{\stack_\branch}
              \scomp{1}
              \stack_1
              \scomp{2}
              \cdots
              \scomp{\maxord}
              \stack_\maxord$.

    We proceed by a case distinction on the rule $\strule$ which led to the
    introduction of the new transition.  In each case, let
    $\tree_2 \in \ap{\strule}{\tree_1}$
    be the result of applying $\strule$ at node
    $\tleaf{\tree}{\idxj}\idxi$.
    In all cases except when $\strule$ removes nodes, $\tastate$ already has an
    incoming transition, hence we do not need to argue non-redundancy (since
    $\ta$ is non-redundant).
    \begin{itemize}
        \item
            When
            $\strule = \gtrule{\control'}{\srew{\chb}{\cha}}{\control}$
            we derived the new transition from some transtion
            \[
                \tatranfull{\tastate}
                           {\idxi}
                           {\numof}
                           {\control'}
                           {\chb}
                           {\sastateset^\mnew_\branch}
                           {\sastateset^\mnew_1, \ldots, \sastateset^\mnew_\maxord}
            \]
            and since this transition is sound
            $\tree_2
             \tmodelsm{\tenv}{\tleaf{\tree}{\idxj}}{\tastate}
             \tastateseq'_1, \control, \tastateseq'_2$.
            We take the run witnessing soundness for $\tree_2$ and prepend the
            application of $\strule$ to $\tree_1$.  This gives us a run
            witnessing (\ref{eqn:soundness-step-prop}) as required.

        \item
            When
            $\strule = \gtrule{\control'}{\scpush{\midord}}{\control}$, then
            when $\midord > 1$ we derived the new transition from some
            \[
                \tatranfull{\tastate}
                           {\idxi}
                           {\numof}
                           {\control'}
                           {\cha}
                           {\sastateset_\branch}
                           {\sastateset_1,
                            \sastateset_2, \ldots, \sastateset_\maxord}
            \]
            and
            $\sastateset_1 \satrancol{\cha}{\sastateset'_\branch} \sastateset'_1$
            and the new transition is of the form
            \[
                \tatranfull{\tastate}
                           {\idxj}
                           {\numof}
                           {\control}
                           {\cha}
                           {\sastateset'_\branch}
                           {\sastateset'_1,
                            \sastateset_2,
                            \ldots,
                            \sastateset_{\midord-1},
                            \sastateset_\midord \cup \sastateset_\branch,
                            \sastateset_{\midord+1},
                            \ldots,
                            \sastateset_\maxord}
            \]
            Furthermore, we have $\tree_2$ has at
            $\tleaf{\tree}{\idxj}\idxi$ the stack
            \[
                \annot{\cha}{\stack_\midord}
                \scomp{1}
                \annot{\cha}{\stack_\branch}
                \scomp{1}
                \stack_1
                \scomp{2}
                \cdots
                \scomp{\maxord}
                \stack_\maxord
            \]
            and we have
            $\stack_\midord
             \smodels
             \sastateset^\mnew_\midord
             =
             \sastateset_\midord \cup \sastateset_\branch$
            and
            $\stack_1 \smodels \sastateset^\mnew_1 = \sastateset'_1$
            and from soundness of
            $\sastateset_1 \satrancol{\cha}{\sastateset'_\branch} \sastateset'_1$
            we have
            $\annot{\cha}{\stack_\branch}
                   \scomp{1}
                   \stack_1
             \smodels \sastateset_1$.
            Thus, we can apply soundness of the transition from $\control'$ to
            obtain
            $\tree_2
             \tmodelsm{\tenv}{\tleaf{\tree}{\idxj}}{\tastate}
             \tastateseq'_1, \control', \tastateseq'_2$.
            We prepend to the run witnessing this property an application of
            $\strule$ to $\tree_1$ at node
            $\tleaf{\tree}{\idxj}\idxi$
            to obtain a run witnessing (\ref{eqn:soundness-step-prop}) as
            required.

            When $\midord = 1$ we began with a transition
            \[
                \tatranfull{\tastate}
                           {\idxi}
                           {\numof}
                           {\control'}
                           {\cha}
                           {\sastateset_\branch}
                           {\sastateset_1,
                            \sastateset_2 \ldots, \sastateset_\maxord}
            \]
            and
            $\sastateset_1 \satrancol{\cha}{\sastateset'_\branch} \sastateset'_1$
            and the new transition is of the form
            \[
                \tatranfull{\tastate}
                            {\idxj}
                            {\numof}
                            {\control}
                            {\cha}
                            {\sastateset'_\branch}
                            {\sastateset'_1 \cup \sastateset_\branch,
                             \sastateset_2,
                             \ldots,
                             \sastateset_\maxord} \ .
            \]
            Furthermore, we have $\tree_2$ has at
            $\tleaf{\tree}{\idxj}\idxi$ the stack
            \[
                \annot{\cha}{\stack_1}
                \scomp{1}
                \annot{\cha}{\stack_\branch}
                \scomp{1}
                \stack_1
                \scomp{2}
                \cdots
                \scomp{\maxord}
                \stack_\maxord
            \]
            and we have
            $\stack_1
             \smodels
             \sastateset^\mnew_1
                = \sastateset'_1 \cup \sastateset_\branch$
            and from
            $\stack_\branch
             \smodels
             \sastateset^\mnew_\branch = \sastateset'_\branch$
            and soundness of
            $\sastateset_1 \satrancol{\cha}{\sastateset'_\branch} \sastateset'_1$
            we have
            $\annot{\cha}{\stack_\branch}
                   \scomp{1}
                   \stack_1
             \smodels \sastateset'_1$.
            Thus, we can apply soundness of the transition from $\control'$
            using
            $\stack_1 \smodels \sastateset_\branch$
            (since
            $\stack_1 \smodels \sastateset^\mnew_1
                = \sastateset'_1 \cup \sastateset_\branch$)
            to obtain
            $\tree_2
             \tmodelsm{\tenv}{\tleaf{\tree}{\idxj}}{\tastate}
             \tastateseq'_1, \control', \tastateseq'_2$.
            We prepend to the run witnessing this property an application of
            $\strule$ to $\tree_1$ at node
            $\tleaf{\tree}{\idxj}\idxi$
            to obtain a run witnessing (\ref{eqn:soundness-step-prop}) as
            required.

        \item
            When
            $\strule = \gtrule{\control}{\spush{\midord}}{\control'}$
            we started with a transition
            \[
                \tatranfull{\tastate}
                           {\idxi}
                           {\numof}
                           {\control'}
                           {\cha}
                           {\sastateset_\branch}
                           {\sastateset_1, \ldots, \sastateset_\maxord}
            \]
            and
            $\satranfull{\sastateset_\midord}
                        {\cha}
                        {\sastateset'_\branch}
                        {\sastateset'_1, \ldots, \sastateset'_\midord}$
            and the new transition is of the form
            \[
                \tatranfull{\tastate}
                           {\idxj}
                           {\numof}
                           {\control}
                           {\cha}
                           {\sastateset_\branch \cup \sastateset_\branch}
                           {\sastateset_1 \cup \sastateset'_1,
                            \ldots,
                            \sastateset_{\midord-1} \cup \sastateset'_{\midord-1},
                            \sastateset'_\midord,
                            \sastateset_{\midord+1},
                            \ldots,
                            \sastateset_\maxord} \ .
            \]
            Let
            $\stack' = \annot{\cha}{\stack_\branch}
                       \scomp{1}
                       \stack_1
                       \scomp{2}
                       \cdots
                       \scomp{\midord-1}
                       \stack_{\midord-1}$,
            we have that $\tree_2$ has at node
            $\tleaf{\tree}{\idxj}\idxi$ the stack
            \[
                \annot{\cha}{\stack_\branch}
                \scomp{1}
                \stack_1
                \scomp{2}
                \cdots
                \scomp{(\midord-1)}
                \stack_{\midord-1}
                \scomp{\midord}
                \stack'
                \scomp{\midord}
                \stack_{\midord+1}
                \scomp{(\midord+1)}
                \cdots
                \scomp{\maxord}
                \stack_\maxord \ .
            \]
            Note, by assumption we have
            $\stack_1 \smodels \sastateset^\mnew_1
                = \sastateset_1 \cup \sastateset'_1$,
            \ldots,
            $\stack_{\midord-1}
             \smodels
             \sastateset^\mnew_{\midord-1}
             =
             \sastateset_{\midord-1} \cup \sastateset'_{\midord-1}$
            and
            $\stack_\branch
             \smodels
             \sastateset^\mnew_\branch
             =
             \sastateset_\branch \cup \sastateset'_\branch$.
            Thus from soundness of
            $\satranfull{\sastateset_\midord}
                        {\cha}
                        {\sastateset'_\branch}
                        {\sastateset'_1, \ldots, \sastateset'_\midord}$
            we have
            $\stack' \smodels \sastateset_\midord$.
            Consequently, from the soundness of the transition from $\control'$
            we have
            $\tree_2
             \tmodelsm{\tenv}{\tleaf{\tree}{\idxj}}{\tastate}
             \tastateseq'_1, \control', \tastateseq'_2$.
            We prepend to the run witnessing this property an application of
            $\strule$ to $\tree_1$ at node
            $\tleaf{\tree}{\idxj}\idxi$
            to obtain a run witnessing (\ref{eqn:soundness-step-prop}) as
            required.

        \item
            When
            $\strule = \gtrule{\control}{\spop{\midord}}{\control'}$
            we derived the new transition from
            \[
                \tatranfullk{\tastate}
                            {\idxi}
                            {\numof}
                            {\control'}
                            {\sastate_\midord}
                            {\sastateset_{\midord+1},
                             \ldots,
                             \sastateset_\maxord}
            \]
            and the new transition is of the form
            \[
                \tatranfull{\tastate}
                           {\idxj}
                           {\numof}
                           {\control}
                           {\cha}
                           {\emptyset}
                           {\emptyset,
                            \ldots,
                            \emptyset,
                            \set{\sastate_\midord},
                            \sastateset_{\midord+1},
                            \ldots,
                            \sastateset_\maxord}
            \]
            The tree $\tree_2$ has labelling
            $\tleaf{\tree}{\idxj}\idxi$
            the stack
            $\stack' = \stack_\midord
                       \scomp{(\midord+1)}
                       \cdots
                       \scomp{\maxord}
                       \stack_\maxord$
            and since
            $\stack_{\midord+1} \smodels \sastateset_{\midord+1}$,
            \ldots,
            $\stack_{\maxord} \smodels \sastateset_{\maxord}$
            we have from the definition of $\tmodels{\tenv}$ and
            $\stack_\midord \smodels \sastate_\midord$
            that
            $\tree_2
             \tmodelsm{\tenv}{\tleaf{\tree}{\idxj}}{\tastate}
             \tastateseq'_1, \control', \tastateseq'_2$.
            As before, we prepend to the run witnessing this property an
            application of $\strule$ to $\tree_1$ at node
            $\tleaf{\tree}{\idxj}\idxi$
            to obtain a run witnessing (\ref{eqn:soundness-step-prop}) as
            required.

        \item
            When
            $\strule = \gtrule{\control}{\scollapse{\midord}}{\control'}$
            we began with a transition
            \[
                \tatranfullk{\tastate}
                            {\idxi}
                            {\numof}
                            {\control'}
                            {\sastate_\midord}
                            {\sastateset_{\midord+1}, \ldots, \sastateset_\maxord}
            \]
            and the new transition has the form
            \[
                \tatranfull{\tastate}
                           {\idxj}
                           {\numof}
                           {\control}
                           {\cha}
                           {\set{\sastate_\midord}}
                           {\emptyset,
                            \ldots,
                            \emptyset,
                            \sastateset_{\midord+1},
                            \ldots,
                            \sastateset_\maxord}
            \]
            The tree $\tree_2$ has labelling
            $\tleaf{\tree}{\idxj}\idxi$
            the stack
            $\stack' = \stack_\branch
                       \scomp{(\midord+1)}
                       \stack_{\midord+1}
                       \scomp{(\midord+2)}
                       \cdots
                       \scomp{\maxord}
                       \stack_\maxord$
            and since
            $\stack_{\midord+1} \smodels \sastateset_{\midord+1}$,
            \ldots,
            $\stack_{\maxord} \smodels \sastateset_{\maxord}$
            we have from the definition of $\tmodels{\tenv}$ and
            $\stack_\branch \smodels \sastate_\midord$
            that
            $\tree_2
             \tmodelsm{\tenv}{\tleaf{\tree}{\idxj}}{\tastate}
             \tastateseq'_1, \control', \tastateseq'_2$.
            As before, we prepend to the run witnessing this property an
            application of $\strule$ to $\tree_1$ at node
            $\tleaf{\tree}{\idxj}\idxi$
            to obtain a run witnessing (\ref{eqn:soundness-step-prop}) as
            required.

        \item
            When
            $\strule = \stpush{\control}{\control_1, \ldots, \control_{\numof'}}$
            we had transitions
            \[
                \tatranfull{\tastate}
                           {\idxi}
                           {\numof}
                           {\tastate'}
                           {\cha}
                           {\sastateset_\branch}
                           {\sastateset_1, \ldots, \sastateset_\maxord}
            \]
            and
            \[
                \tatranfull{\tastate'}
                           {1}
                           {\numof'}
                           {\control_1}
                           {\cha}
                           {\sastateset^1_\branch}
                           {\sastateset^1_1, \ldots, \sastateset^1_\maxord},
                \ldots,
                \tatranfull{\tastate'}
                           {\numof'}
                           {\numof'}
                           {\control_{\numof'}}
                           {\cha}
                           {\sastateset^{\numof'}_\branch}
                           {\sastateset^{\numof'}_1,
                            \ldots,
                            \sastateset^{\numof'}_\maxord}
            \]
            and the new transition added is of the form
            \[
                \tatranfull{\tastate}
                           {\idxi}
                           {\numof}
                           {\control}
                           {\cha}
                           {\sastateset^\text{new}_\branch}
                           {\sastateset^\text{new}_1, \ldots, \sastateset^\text{new}_\maxord}
            \]
            where
            $\sastateset^\text{new}_\branch =
             \sastateset_\branch
             \cup
             \sastateset^1_\branch
             \cup
             \cdots
             \cup
             \sastateset^{\numof'}_\branch$
            and for all $\midord$, we have
            $\sastateset^\text{new}_\midord =
             \sastateset_1
             \cup
             \sastateset^1_\midord
             \cup
             \cdots
             \cup
             \sastateset^{\numof'}_\midord$.
            Letting $\tree'_1 =$
            \[
                \tree
                \treeplus{\idxj}
                \tup{\tastate_1, \stack'_1},
                \ldots,
                \tup{\tastate_{\idxi-1}, \stack'_{\idxi-1}},
                \tup{\tastate', \stack},
                \tup{\tastate_{\idxi+1}, \stack'_{\idxi+1}},
                \ldots,
                \tup{\tastate_{\numof}, \stack'_{\numof}}
            \]
            and
            $\tenv' = \envmod{\tenv}{\tleaf{\tree}{\idxj}}{\tastate}$
            we have from
            $\sastateset^\mnew_\branch
                = \sastateset_\branch \cup
                  \sastateset^1_\branch
                  \cup \cdots \cup
                  \sastateset^{\numof'}_\branch$
            and
            $\sastateset^\mnew_1
                = \sastateset_1 \cup
                  \sastateset^1_1
                  \cup \cdots \cup
                  \sastateset^{\numof'}_1$,
            \ldots,
            $\sastateset^\mnew_\maxord
                = \sastateset_\maxord \cup
                  \sastateset^1_\maxord
                  \cup \cdots \cup
                  \sastateset^{\numof'}_\maxord$,
            and by soundness of the transition from $\tastate'$ that
            $\tree'_1
             \tmodels{\tenv'}
             \tastateseq'_1, \tastate', \tastateseq'_2$.
            Thus, from non-redundancy
            and repeated applications of the soundness of the
            transition from $\control_1$ to the
            soundness from $\control_{\numof'}$ (as in the proof of
            \reflemma{lem:sound-trees}) we have
            \[
                \tree_2
                =
                \tree'_1
                \treeplus{(\idxj+\idxi)}
                \tup{\control_1, \stack},
                \ldots,
                \tup{\control_{\numof'}, \stack}
                \tmodelsm{\tenv'}{\tleaf{\tree}{\idxj}\idxi}{\tastate'}
                \tastateseq'_1, \control_1, \ldots, \control_{\numof'}, \tastateseq'_2 \ .
            \]
            We prepend to the run witnessing this property an
            application of $\strule$ to $\tree_1$ at node
            $\tleaf{\tree}{\idxj}\idxi$
            to obtain a run witnessing (\ref{eqn:soundness-step-prop}) as
            required.
    \end{itemize}
    The remaining case is for the operations that remove nodes from the tree.
    For $\stpop{\control_1, \ldots, \control_\numof}{\control}$ we introduced
    \[
        \tatranfull{\control}
                   {1}
                   {\numof}
                   {\control_1}
                   {\cha}
                   {\emptyset}
                   {\emptyset, \ldots, \emptyset}
    \]
    to
    \[
        \tatranfull{\control}
                   {\numof}
                   {\numof}
                   {\control_\numof}
                   {\cha}
                   {\emptyset}
                   {\emptyset, \ldots, \emptyset} \ .
    \]
    We prove soundness of the first of these rules, with the others being
    symmetrical.  Taking any sequence of transitions
    \[
        \tatran{\control}{1}{\numof}{\tastate_1}{\sastate_1},
        \ldots,
        \tatran{\control}{\numof}{\numof}{\tastate_\numof}{\sastate_\numof}
    \]
    any
    $\tree \tmodels{\tenv} \tastateseq_1, \control, \tastateseq_2$
    and $\stack_1$, \ldots, $\stack_\numof$ such that, letting
    $\idxj = \seqlen{\tastateseq_1} + 1$,
    \[
        \tree' =
        \tree
        \treeplus{\idxj}
        \tup{\tastate_1, \stack_1},
        \ldots,
        \tup{\tastate_\numof, \stack_\numof}
        \tmodelsm{\tenv}{\tleaf{\tree}{\idxj}}{\control}
        \tastateseq_1,
        \tastate_1, \ldots, \tastate_\numof,
        \tastateseq_2 \ .
    \]
    We need to show for any stack with top character $\cha$ that
    \[
        \tree
        \treeplus{\idxj}
        \tup{\control_1, \stack},
        \tup{\tastate_2, \stack_2},
        \ldots,
        \tup{\tastate_\numof, \stack_\numof}
        \tmodelsm{\tenv}{\tleaf{\tree}{\idxj}}{\control}
        \tastateseq_1,
        \control_1, \tastate_2, \ldots, \tastate_\numof,
        \tastateseq_2 \ .
    \]

    Take the run witnessing the property for $\tree'$.  This must
    necessarily pass some tree where
    $\tleaf{\tree}{\idxj}$
    is exposed and contains control state $\control$.  Moreover, this is
    the first such exposure of the node.  Since we assume, for all
    $\control$, there is only one rule
    $\stpop{\control'_1, \ldots, \control'_2}{\control}$
    for any
    $\control'_1, \ldots, \control'_\numof$,
    the node must be exposed by an application of $\strule$.

    Thus, we can remove from the run all operations applied to a descendant of
    $\tleaf{\tree}{\idxj}1$ before its exposure.  This run then can be applied
    to
    \[
        \tree
        \treeplus{\idxj}
        \tup{\control_1, \stack},
        \tup{\tastate_2, \stack_2}
        \ldots,
        \tup{\tastate_\numof, \stack_\numof}
    \]
    to witness
    $\tree
     \treeplus{\idxj}
     \tup{\control_1, \stack},
     \tup{\tastate_2, \stack_2},
     \ldots,
     \tup{\tastate_\numof, \stack_\numof}
     \tmodelsm{\tenv}{\tleaf{\tree}{\idxj}}{\control}
     \tastateseq_1,
     \control_1, \tastate_2, \ldots, \tastate_\numof,
     \tastateseq_2$.

    To prove non-redundancy, we simply take any stacks $\stack_1$, \ldots,
    $\stack_\numof$ and apply $\strule$ to
    $\tree
     \treeplus{\idxj}
     \tup{\control_1, \stack_1},
     \ldots,
     \tup{\control_\numof, \stack_\numof}$
    to obtain $\tree$ from which the remainder of the run exists by
    assumption.
\end{proof}

\begin{namedlemma}{lem:soundness}{Soundness of Saturation}
    The automaton $\ta$ obtained by saturation from $\ta_0$ is such that
    $\langof{\ta} \subseteq \prestar{\gstrs}{\ta_0}$.
\end{namedlemma}
\begin{proof}
    By \reflemma{lem:init-sound} we have that $\ta_0$ is sound.  Thus, by
    induction, assume $\ta$ is sound.  We have
    $\ta' = \ap{\satfn}{\ta}$
    and by \reflemma{lem:saturation-soundness-step} we have that $\ta'$ is
    sound.

    Thus, the $\ta$ that is the fixed point of saturation is sound, and we have
    from \reflemma{lem:sound-trees} that $\langof{\ta} \subseteq
    \prestar{\gstrs}{\ta_0}$.
\end{proof}

\section{Lower Bounds on the Reachability Problem}
\label{sec:lower-bound}

We show that that global backwards reachability problem is
$\maxord$-EXPTIME-hard for an order-$\maxord$ GASTRS.  The proof is by reduction
from the $\maxord$-EXPTIME-hardness of determining the winner in an
order-$\maxord$ reachability game~\cite{CW07}.

\begin{proposition}[Lower Bound]
    The global backwards reachability problem for order-$\maxord$ GASTRSs is
    $\maxord$-EXPTIME-hard.
\end{proposition}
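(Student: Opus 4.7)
The plan is to reduce from the acceptance problem for order-$\maxord$ alternating higher-order pushdown automata, which is $\maxord$-EXPTIME-hard by Cachat and Walukiewicz~\cite{CW07}. Given such an automaton $\wqoa$ with control states partitioned into existential $\controls_\exists$ and universal $\controls_\forall$ and a designated set $\finals$ of target states, I would construct in polynomial time a GASTRS $\gstrs$ of the same order together with a stack tree automaton $\ta_0$ such that $\wqoa$ accepts from the initial configuration $\tup{\control_0, \stack_0}$ iff the single-node tree labelled $\tup{\control_0, \stack_0}$ belongs to $\prestar{\gstrs}{\ta_0}$.

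The reduction handles existential moves directly by the nondeterminism of rewrite rules: each transition $\gtrule{\control}{\sop}{\control'}$ of $\wqoa$ with $\control \in \controls_\exists$ becomes a rule $\stsop{\control}{\sop}{\control'}$ in $\rules$. For each universal state $\control \in \controls_\forall$ with outgoing transitions $\gtrule{\control}{\sop_1}{\control'_1}, \ldots, \gtrule{\control}{\sop_\numof}{\control'_\numof}$, I introduce fresh intermediate control states $\control^1, \ldots, \control^\numof$ and add the fork rule $\stpush{\control}{\control^1, \ldots, \control^\numof}$ together with the single-node rewrites $\stsop{\control^\idxi}{\sop_\idxi}{\control'_\idxi}$ for $1 \leq \idxi \leq \numof$. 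Since a fork duplicates the full annotated stack (with all closure information) into every child, each child faithfully resumes the play that Abélard's $\idxi$-th choice would have produced. The target $\ta_0$ is taken to accept the (clearly regular) set of stack trees whose every leaf carries a control state in $\finals$ with an arbitrary stack.

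Correctness follows by translating between winning $\exists$-strategies of $\wqoa$ and forward runs of $\gstrs$ ending in $\langof{\ta_0}$: a winning strategy is played out as a sequence of rewrites in which each existential step applies the rule picked by the strategy and each universal step first forks, then launches the relevant rewrite on each child, which afterwards continues independently; conversely, a forward run yields a strategy by reading the rule chosen at each existential step and by exposing all of Abélard's possible responses as the siblings produced by the corresponding fork. Since backward reachability from $\ta_0$ recovers exactly the set of initial configurations from which such forward runs exist, and the construction is polynomial in $\card{\wqoa}$ while preserving the order, the hardness transfers. The main subtlety to get right is that a leaf whose simulated play has already reached $\finals$ must be allowed to sit idle while sibling leaves keep simulating, rather than being forced to fire more rules; this is arranged by making every $\control \in \finals$ a sink, for instance with a self-loop $\stsop{\control}{\srew{\cha}{\cha}}{\control}$ for each $\cha \in \salphabet$, and by letting $\ta_0$ accept arbitrary stacks at such leaves. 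Finally, I note that~\cite{CW07} is stated for higher-order pushdown automata without collapse links, but since annotated pushdown automata strictly extend that model, the hardness lifts immediately to our setting.
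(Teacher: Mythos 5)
Your reduction is essentially the paper's own proof: both reduce from the order-$\maxord$ reachability game / alternating higher-order pushdown acceptance problem of~\cite{CW07}, both simulate universal branching by a fork rule $\stpush{\control}{\control^1, \ldots, \control^\numof}$ that duplicates the stack into the children, and both take $\ta_0$ to be the regular set of trees whose leaves all carry states in $\finals$. The only details you elide are handled explicitly in the paper: each simulated move must first test the top-of-stack character (the paper prepends a $\srew{\cha}{\cha}$ step, since $\spush{\midord}$ and $\spop{\midord}$ rules are not guarded by the top symbol and the available game moves depend on it), and a plain character-push of the higher-order pushdown automaton is not a GASTRS operation and must be decomposed into $\scpush{1}$ followed by a rewrite. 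Your ``main subtlety'' about finished leaves needing to idle is actually a non-issue, since a run of a GASTRS rewrites one leaf at a time and a leaf that has reached $\finals$ can simply never be selected again; the self-loops are harmless but unnecessary.
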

\begin{proof}
    We reduce from the problem of determining the winner in an order-$\maxord$
    pushdown reachability game~\cite{CW07}.

    We first need to define higher-order stacks and their operations.
    Essentially, they are just annotated stacks without collapse.  That is
    order-$1$ stacks are of the form
    $\kstack{1}{\cha_1 \ldots \cha_\numof}$
    where
    $\cha_1 \ldots \cha_\numof \in \salphabet^\ast$.  Order-$\midord$ stacks for
    $\midord > 1$ are of the form
    $\kstack{\midord}{\stack_1 \ldots \stack_\numof}$
    where
    $\stack_1, \ldots, \stack_\numof$
    are order-$(\midord-1)$ stacks.

    Their operations are
    \[
        \hostackops{\maxord}{\cha}
        =
        \setcomp{\hocpush{\cha}}{\cha \in \salphabet}
        \cup
        \setcomp{\spush{\midord}}{2 \leq \midord \leq \maxord}
        \cup
        \setcomp{\spop{\midord}}{1 \leq \midord \leq \maxord} \ .
    \]
    The
    $\spush{\midord}$ and
    $\spop{\midord}$
    operations are analogous to annotated stacks.  We define
    $\ap{\hocpush{\cha}}{\stack} = \cha \scomp{1} \stack$.

    Such a game is defined as a tuple
    $\tup{\controls, \salphabet, \rules, \finals}$
    where
    $\controls = \controls_1 \cup \controls_2$
    is a finite set of control states partitioned into those belonging to player
    1 and player 2 respectively, $\salphabet$ is a finite set of stack
    characters,
    $\rules \subseteq \controls \times
                      \salphabet \times
                      \hostackops{\maxord}{\salphabet} \times
                      \controls$
    is a finite set of transition rules, and
    $\finals \subseteq \controls$
    is a set of target control states.

    Without loss of generality, we assume that for all
    $\control \in \controls_2$
    and
    $\cha \in \salphabet$
    there exactly two rules in $\rules$ of the form
    $\horule{\control}{\cha}{\sop}{\control'}$
    for some $\sop$ and $\control'$.

    A configuration is a tuple
    $\config{\control}{\stack}$
    of a control state and higher-order stack.
    A winning play of a game from an initial configuration
    $\config{\control_0}{\stack_0}$
    for player 1 is a tree labelled by configurations such that
    \begin{itemize}
        \item
            all leaf nodes are labelled by configurations
            $\config{\control}{\stack}$
            with
            $\control \in \finals$.
        \item
            if an internal node is labelled
            $\config{\control}{\stack}$
            with
            $\control \in \controls_1$
            then the node has one child labelled by
            $\config{\control'}{\stack'}$
            such that for some
            $\horule{\control}{\cha}{\sop}{\control'} \in \rules$
            we have
            $\stack = \cha \scomp{1} \stack''$
            for some $\stack''$ and
            $\stack' = \ap{\sop}{\stack}$.
        \item
            if an internal node is labelled
            $\config{\control}{\stack}$
            with
            $\control \in \controls_2$
            then when
            $\stack = \cha \scomp{1} \stack'$
            for some $\stack'$ and we have the rules
            $\horule{\control}{\cha}{\sop_1}{\control_1}$,
            and
            $\horule{\control}{\cha}{\sop_2}{\control_2}$,
            then the node has two children labelled by
            $\config{\control_1}{\stack_1}$
            and
            $\config{\control_2}{\stack_2}$
            with
            $\stack_1 = \ap{\sop_1}{\stack}$
            and
            $\stack_1 = \ap{\sop_1}{\stack}$.
    \end{itemize}
    Note, we assume that the players can always apply all available rules for a
    given $\control$ and $\cha$ in the game (unless a control in $\finals$ is
    reached).  This is standard and can be done with the use of a
    ``bottom-of-stack'' marker at each order.

    Determining if player 1 wins the game is known to be $\maxord$-EXPTIME
    hard~\cite{CW07}.  This amounts to asking whether a winning game tree can be
    constructed from the initial configuration
    $\config{\control_0}{\stack_0}$.

    That the winning game trees are regular can be easily seen: we simply assert
    that all leaf nodes are labelled by some
    $\control \in \finals$.

    We build a GASTRS that constructs play trees.  We simulate a move in the
    game via several steps in the GASTRS, hence its control states will contain
    several copies of the control states of the game.  Suppose we have a rule
    $\horule{\control}{\cha}{\sop}{\control'}$
    where $\control \in \controls_1$.
    The first step in the simulation will be to check that the top character is
    $\cha$, for which we will use
    $\gtrule{\control}{\srew{\cha}{\cha}}{\ccopy{\control}{1}}$
    where
    $\ccopy{\control}{1}$
    is a new control state.  The next step will create a new node in the play
    tree using
    $\stpush{\ccopy{\control}{1}}{\ccopy{\control'}{2}}$
    which uses the intermediate control state
    $\ccopy{\control'}{2}$.
    The final step is to apply the stack operation and move to $\control'$.
    When
    $\sop = \spush{\midord}$
    or
    $\sop = \spop{\midord}$
    we can use
    $\gtrule{\ccopy{\control'}{2}}{\sop}{\control'}$.
    When
    $\sop = \hocpush{\chb}$
    we use another intermediate control state and
    $\gtrule{\ccopy{\control'}{2}}{\scpush{1}}{\ccopy{\control'}{3}}$
    and
    $\gtrule{\ccopy{\control'}{3}}{\srew{\cha}{\chb}}{\control'}$.

    When
    $\control \in \controls_2$
    with the rules
    $\horule{\control}{\cha}{\sop_1}{\control_1}$
    and
    $\horule{\control}{\cha}{\sop_2}{\control_2}$
    we use
    $\gtrule{\control}{\srew{\cha}{\cha}}{\ccopy{\control}{1}}$,
    \[
        \stpush{\ccopy{\control}{1}}{\ccopy{\control_1}{2}, \ccopy{\control_2}{2}} \ ,
    \]
    and similar rules to the previous case to apply $\sop$ and move to
    $\control_1$ or $\control_2$.

    Let the above GASTRS be $\gstrs$.  From the initial single-node tree
    $\tree_0$ whose node is labelled
    $\tup{\control_0, \stack_0}$
    it is clear that a tree whose leaf nodes are only labelled by control states
    in $\finals$ can be reached iff there is a winning play of player 1 in the
    reachability game.  We can easily build a tree automaton $\ta_0$ that
    accepts only these target trees.  Since checking membership
    $\tree_0 \in \prestar{\gstrs}{\ta_0}$
    is linear in the size of tree automaton representing
    $\prestar{\gstrs}{\ta_0}$
    we obtain our lower bound as required.
\end{proof}

\end{document}